\documentclass[12pt]{article}

\usepackage{amsmath,amsthm,amssymb,mathtools}
\usepackage{hyperref}
\usepackage[authoryear]{natbib}

\usepackage{booktabs} 
\usepackage[ruled,linesnumbered]{algorithm2e} 
\SetCommentSty{ttfamily} 

\SetAlFnt{\small}
\SetAlCapFnt{\small}
\SetAlCapNameFnt{\small}
\SetAlCapHSkip{0pt}
\IncMargin{-\parindent}

\usepackage[letterpaper,margin=1in]{geometry}

\def\obs{\mathrm{obs}}

\def\Ber{\mathrm{Bernoulli}}
\def\Bin{\mathrm{Bin}}

\usepackage{bbm}

\newtheorem{theorem}{Theorem}[section]
\newtheorem{corollary}[theorem]{Corollary}
\newtheorem{lemma}[theorem]{Lemma}

\newtheorem{proposition}[theorem]{Proposition}

\newenvironment{fminipage}%
  {\begin{Sbox}\begin{minipage}}%
  {\end{minipage}\end{Sbox}\fbox{\TheSbox}}

\def\defeq{\stackrel{\mathrm{def}}{=}}

\def\abs#1{\left|#1  \right|}

\def\eps{\epsilon}

\renewcommand{\arraystretch}{1}

\newcommand\bb{\boldsymbol{\mathit{b}}}
\newcommand\cc{\boldsymbol{\mathit{c}}}

\newcommand\vv{\boldsymbol{\mathit{v}}}

\newcommand\xx{\boldsymbol{\mathit{x}}}

\renewcommand\AA{\boldsymbol{\mathit{A}}}

\newcommand\NN{\boldsymbol{\mathit{N}}}

\newcommand\WW{\boldsymbol{\mathit{W}}}

\newcommand\YY{\boldsymbol{\mathit{Y}}}

\newcommand\ZZ{\boldsymbol{\mathit{Z}}}

\title{Fast Algorithms for Exact Confidence Intervals in Randomized Experiments with Binary Outcomes}

\author{Peng Zhang\\
Department of Computer Science\\
Rutgers University\\
\texttt{pz149@rutgers.edu}}

\begin{document}

\maketitle

\begin{abstract}

We construct exact confidence intervals for the average treatment effect in randomized experiments with binary outcomes using sequences of randomization
tests. Our approach does not rely on large-sample approximations and is valid for all sample sizes. 
Under a balanced Bernoulli design or a matched-pairs design, we show that exact confidence intervals can be computed using only $O(\log n)$ randomization tests,
yielding an exponential reduction in the number of tests compared to brute-force. We further prove an information-theoretic lower bound showing
that this rate is optimal.
In contrast, under balanced complete randomization, the most efficient known procedures require $O(n\log n)$ randomization tests \citep{aronow2023fast},
establishing a sharp separation between these designs.
In addition, we extend our algorithm to general Bernoulli designs using $O(n^2)$ randomization tests.
A Python implementation of the algorithms described in the paper is available at \texttt{https://github.com/pengzhang91/randomization-ci}.

\end{abstract}


\begin{titlepage}


\vspace{1cm}
\setcounter{tocdepth}{2} 

\end{titlepage}


\section{Introduction}

Randomized experiments are the gold standard for causal effect estimation across a wide
range of fields, including medicine, education, economics, and computer science
\citep{IR15}. Many such experiments have binary outcomes, for example, whether a person
dies of cancer by a fixed follow-up time in a screening trial, whether a student graduates following an educational intervention, or whether a user completes a purchase in an online A/B test \citep{beckie2009selecting,leon2010transcatheter,sahni2015effect}.

A key challenge in the statistical analysis of randomized experiments is constructing \emph{confidence intervals} for unknown causal parameters, such as the average treatment effect in a finite population.

Classical methods for confidence interval construction often rely on asymptotic
approximations, such as the central limit theorem.
These methods can be inaccurate when sample sizes are small, outcome distributions are highly skewed, or treatment effects are heterogeneous \citep{brown2001interval}. 
In contrast, finite-sample exact confidence intervals based on randomization inference depend only on the known assignment mechanism and avoid additional modeling or asymptotic assumptions.
Such intervals are valid for any sample size, making them especially appealing when coverage guarantees are important. 

The main obstacle to exact confidence intervals is \emph{computational efficiency}: such
intervals are typically constructed by inverting randomization tests over many compatible potential outcome configurations. Because the search
space grows exponentially, naive implementations can be prohibitively expensive.

In this paper, we address this algorithmic bottleneck in the setting of binary outcomes.
We develop fast algorithms for constructing exact confidence intervals under the Bernoulli
design and the matched-pairs design. We show that, under either a balanced Bernoulli design
or a matched-pairs design, exact confidence intervals can be constructed using only
$O(\log n)$ sharp-null randomization tests\footnote{We use ``randomization test'' for tests based on the assignment mechanism; under complete randomization or matched pairs, this coincides with a permutation test, while under Bernoulli design, it corresponds to re-randomization with Bernoulli weights.}, yielding an exponential speedup relative to brute-force and making exact inference practical for larger-scale randomized experiments.
We further establish an information-theoretic lower bound, showing
that this rate is optimal. In contrast, under a balanced completely randomized design, the most
efficient known procedures require $O(n \log n)$ randomization tests
\citep{aronow2023fast}, yielding a sharp separation between these designs.
Our result on the matched-pairs design also answers an open question in \citet{aronow2023fast}.

\subsection{Problem Setup and Background}

We consider a randomized experiment with $n$ units and two arms/groups: treatment and control. 
We work within the Neyman-Rubin \emph{potential outcomes} framework \citep{rubin2005causal}. 
For each unit $i$, there are two binary potential outcomes, $Y_i(1)$ under treatment and 
$Y_i(0)$ under control, each taking values in $\{0,1\}$. 
Units are randomly assigned to either treatment or control, under a specified assignment mechanism $\mathcal{D}$.
Let $Z_i$ denote the treatment assignment for unit $i$, where $Z_i = 1$ indicates assignment 
to treatment and $Z_i = 0$ indicates assignment to control. 
Only the outcome under the realized assignment is observed, that is, $Y_i^{\obs} = Y_i(Z_i^{\obs})$. 
Let $\YY = \{(Y_i(1), Y_i(0))\}_{i \in [n]}$ and $\ZZ = (Z_1, \ldots, Z_n)$.

We treat the potential outcomes $\YY$ as fixed, and all randomness in the experiment comes from the assignment vector $\ZZ$. 
We assume the stable unit treatment value assumption (SUTVA) \citep{IR15}: each unit's potential outcomes are unaffected by the treatment assignments of other units, and for each unit and each treatment level, there are no alternative versions of the treatment that would lead to different potential outcomes.

Our goal is to estimate the \emph{sample average treatment effect} (ATE), defined by
\begin{align*}
\tau = \tau(\YY) \defeq \frac{1}{n} \sum_{i=1}^n \bigl( Y_i(1) - Y_i(0) \bigr).
\end{align*}
For a given confidence level $\alpha \in (0,1)$, we aim to construct a $(1-\alpha)$ \emph{confidence interval}, denoted $\mathcal{I}_{\alpha}(\YY, \ZZ)$, such that
\[
\Pr_{\ZZ \sim \mathcal{D}}\!\left( \tau \in \mathcal{I}_{\alpha}(\YY, \ZZ) \right)
\ge 1-\alpha .
\]

Under the assumption of binary potential outcomes, the ATE $\tau$ takes values on a finite
grid in $[-1,1]$ with spacing $1/n$.
Given the observed data $(\YY^{\obs}, \ZZ^{\obs})$, $\tau$ is further restricted to a
finite set consisting of $n+1$ possible values \citep{rigdon2015randomization}:
\begin{align}
\begin{split}
& \mathcal{T} = \mathcal{T}(\YY^\obs, \ZZ^\obs) =  \left\{ \frac{S - n_1}{n}, \ \frac{S - n_1+1}{n}, \ \ldots, \ \frac{S - n_1 + n-1}{n}, \ \frac{S - n_1 + n}{n}  \right\},  \\
& \text{where } S = \sum_{i=1}^n Y_i^{\obs} (2Z_i^{\obs} - 1), \quad
n_1 = \sum_{i=1}^n Z_i^{\obs}.
\end{split} 
\label{eqn:tau_set}
\end{align}
Since $\tau$ lies in the finite set $\mathcal{T}(\YY^{\obs}, \ZZ^{\obs})$, we report a \emph{confidence set} $\mathcal{I}_\alpha (\YY^{\obs}, \ZZ^{\obs}) \subseteq \mathcal{T}$ given the observed data. When the context is clear, we simply write $\mathcal{I}_\alpha = \mathcal{I}_\alpha (\YY^{\obs}, \ZZ^{\obs})$. 
When convenient, we summarize $\mathcal{I}_\alpha$ by its endpoints and get an interval $[\min \mathcal{I}_\alpha, \, \max \mathcal{I}_\alpha]$.

\subsubsection{Assignment Mechanism}

In this paper, we focus on two widely used treatment assignment mechanisms: the Bernoulli
design and the matched-pairs design.
Under the Bernoulli design, each unit $i$ is independently assigned to treatment with
probability $p$ and to control with probability $1-p$. When $p = 1/2$, the design is
\emph{balanced}, since the treatment and control group sizes are equal in expectation;
otherwise, it is \emph{unbalanced}. 
Under the matched-pairs design, assuming $n$ is even, units are first partitioned into pairs
based on covariate similarity\footnote{In this setting, pre-treatment covariates are observed, and the matched-pairs design pairs units with similar covariates.}. Within each pair, one unit is randomly assigned to treatment and the other to
control with equal probability, and assignments are independent across pairs
\citep{greevy2004optimal}.

Another commonly used assignment mechanism is the completely randomized design or complete randomization, in which
$m$ of the $n$ units are randomly selected to receive treatment and the remaining $n-m$
units are assigned to control. Fast algorithms for constructing exact confidence intervals
have primarily been developed for complete randomization
\citep{rigdon2015randomization,li2016exact,aronow2023fast}. Unlike the Bernoulli and
matched-pairs designs, the completely randomized design induces dependence among unit or
unit-pair assignments.

The independence structure inherent to the Bernoulli and matched-pairs designs
enables a qualitatively stronger reduction in the number of randomization tests required for
constructing exact confidence intervals.

\subsection{Exact Confidence Interval Construction by Inverting Randomization Tests}
\label{sec:randomization_test_inversion}

\citet{rigdon2015randomization} propose a randomization-based approach for constructing
a confidence set \(\mathcal{I}_{\alpha} \subset \mathcal{T}\).
Given the observed data \((\YY^{\obs}, \ZZ^{\obs})\) and a candidate value
\(\tau_0 \in \mathcal{T}\), one can complete a potential outcome table $\tilde{\YY} \in \{0,1\}^{2n}$ by imputing all the $n$ missing potential outcomes. 
We say a completion $\tilde{\YY}$ is \emph{compatible} with $(\tau_0, \YY^{\obs}, \ZZ^{\obs})$ if:
(1) \(\tilde{\YY}\) agrees with the observed outcomes, meaning that
\(\tilde{Y}_i(1) = Y_i^{\obs}\) if \(Z_i^{\obs} = 1\) and
\(\tilde{Y}_i(0) = Y_i^{\obs}\) if \(Z_i^{\obs} = 0\); and
(2) the ATE under \(\tilde{\YY}\) equals \(\tau_0\), that is,
\(\tau(\tilde{\YY}) = \tau_0\).
We denote the set of all compatible outcome completions by
\(\mathcal{Y}(\tau_0, \YY^{\obs}, \ZZ^{\obs})\).

A candidate value $\tau_0$ is included in $\mathcal{I}_{\alpha}$ if there exists a $\tilde{\YY} \in \mathcal{Y}(\tau_0, \YY^{\obs}, \ZZ^{\obs})$ accepted by a randomization test for the sharp null hypothesis $H_0(\tilde{\YY}): \YY_{\text{true}} = \tilde{\YY}$, as defined below.
A randomization test computes how extreme a chosen test statistic is under the assignment distribution
$\mathcal{D}$ when the sharp null holds, and uses it to accept or reject the hypothesis.

For each $\tilde{\YY} \in \mathcal{Y}(\tau_0, \YY^{\obs}, \ZZ^{\obs})$, we test $H_0(\tilde{\YY}): \YY_{\mathrm{true}} = \tilde{\YY}$
using the Horvitz-Thompson (HT) estimator as the test statistic\footnote{We note that \citet{rigdon2015randomization} use the difference-in-means (DiM) estimator under a completely randomized design. When the treatment and control group sizes are fixed, the DiM estimator coincides with the HT estimator. In contrast, when the number of treated units is random, the DiM estimator may be biased.}:
\[
T(\tilde{\YY}, \ZZ)
\defeq \frac{1}{n}
\left(
\sum_{i: Z_i = 1} \frac{\tilde Y_i(1)}{p}
-
\sum_{i: Z_i = 0} \frac{\tilde Y_i(0)}{1 - p}
\right).
\]
The HT estimator is unbiased, satisfying $\mathbb{E}_{\ZZ \sim \mathcal{D}}[T(\tilde{\YY}, \ZZ)] = \tau(\tilde{\YY})$.
Under $H_0(\tilde{\YY})$, all potential outcomes are known, and thus the distribution of
$T(\tilde{\YY}, \ZZ)$ under the assignment distribution $\mathcal{D}$ is completely specified.
The randomization test \emph{$p$-value} is
\[
p(\tilde{\YY}, \ZZ^{\obs})
\defeq
\Pr_{\tilde{\ZZ} \sim \mathcal{D}}
\left(
\bigl| T(\tilde{\YY}, \tilde{\ZZ}) - \tau(\tilde{\YY}) \bigr|
\ge
\bigl| T(\tilde{\YY}, \ZZ^{\obs}) - \tau(\tilde{\YY}) \bigr|
\right).
\]
It computes how extreme the observed centered test statistic is relative to its randomization distribution under the design $\mathcal{D}$.
We accept $H_0(\tilde{\YY})$ at level $\alpha$ if $p(\tilde{\YY}, \ZZ^{\obs}) \ge \alpha$, and reject otherwise.

Equivalently,
\begin{equation}
\mathcal{I}_{\alpha}
=
\bigl\{\tau_0 \in \mathcal{T} :\; p_{\max}(\tau_0) \ge \alpha \bigr\},
\qquad
p_{\max}(\tau_0)
\defeq
\max_{\tilde{\YY} \in \mathcal{Y}(\tau_0, \YY^{\obs}, \ZZ^{\obs})}
p(\tilde{\YY}, \ZZ^{\obs}).
\label{eqn:ci}
\end{equation}
Standard argument shows that $\mathcal{I}_\alpha$ is a valid $(1-\alpha)$ confidence set.

The quantity $p_{\max}(\tau_0)$ can be interpreted as the optimal value of a
\emph{constrained discrete optimization problem} over $n$ binary variables, corresponding to the $n$ missing outcomes under $(\YY^{\obs}, \ZZ^{\obs})$.

\subsubsection{Computational Challenge and Previous Works}

The main challenge in constructing \(\mathcal{I}_{\alpha}\) is its computational
cost. A brute-force procedure would enumerate all $2^n$ configurations of the unobserved outcomes to complete
the potential-outcome table $\tilde{\YY}$, and then run a randomization test for each completion.

\citet{rigdon2015randomization} show that any completion
\(\tilde{\YY}\) is fully summarized by the four counts of joint potential-outcome
types \((Y(1),Y(0))\in\{0,1\}^2\) (formal notation deferred to Section~\ref{subsec:ht_property}).
Consequently, the search over \(2^n\) completions can be replaced by a search over
\(O(n^4)\) feasible count configurations.

Subsequent work has aimed to reduce the number of randomization tests needed under the completely randomized design by leveraging monotonicity properties of the set of compatible potential outcomes.
In the balanced setting, where the treatment and control groups have equal size, \citet{li2016exact} reduce the number of randomization tests to $O(n^2)$, and \citet{aronow2023fast} further improve this to $O(n\log n)$.
In the general (possibly unbalanced) case, the fastest known constructions require $O(n^2)$ randomization tests
\citep{aronow2023fast}.

The above exact confidence-interval methods extend to stratified randomization, where units are partitioned into disjoint strata and, within each stratum, independently randomized to treatment or control \citep{rigdon2015randomization,chiba2017stratified,li2025exact}.
However, the computational cost is high: \citet{li2025exact} show that the number of randomization tests required is $O(\sum_{k=1}^K n_k \times \prod_{k=1}^K n_k^2)$ when all strata are balanced, and $O(\prod_{k=1}^K n_k^3)$ otherwise, where $n_k$ is the size of the $k$th stratum.
Viewing a matched-pairs design as the special case of stratified randomization in which every stratum has size two, these results imply that $O(n \, 2^n)$ randomization tests are required.

Exact confidence intervals obtained by inverting randomization tests have also been studied under the Bernoulli design. However, existing work neither characterizes the exact worst-case $p$-value over feasible binary potential-outcome configurations nor provides computational complexity bounds for constructing such exact intervals \citep{basu1980randomization,branson2019randomization}.

\subsection{Our Results}

We develop fast algorithms for constructing exact confidence intervals via inversion of randomization tests under both the Bernoulli design and the matched-pairs design.
Bernoulli designs are commonly used when
treatment assignment must be implemented independently across units, such as decentralized
experiments or online platforms with asynchronous enrollment. Moreover, they minimize the worst-case mean-squared error of the HT estimator for the ATE
\citep{HSSZ24}.
Matched-pairs designs are common when pre-treatment covariates are available and pairing similar units can substantially improve precision \citep{IR15,bai2022optimality}.

\begin{theorem}
For any $\alpha \in (0,1)$ and observed data $(\YY^{\obs}, \ZZ^{\obs})$,
under the balanced Bernoulli design or the matched-pairs design, the confidence set $\mathcal{I}_\alpha(\YY^{\obs}, \ZZ^{\obs})$ can be constructed using at most $8 \log_2 n$ randomization tests. 
\label{thm:upper}
\end{theorem}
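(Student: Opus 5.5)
The plan is to reduce each $p_{\max}(\tau_0)$ to a constant number of explicit candidate completions, and then to locate the two endpoints of $\mathcal{I}_\alpha$ by binary search over $\mathcal{T}$. The starting point is a rewriting of the centered HT statistic.

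\textbf{Step 1: rewrite the centered statistic.} Under the balanced Bernoulli design ($p=1/2$), write $\varepsilon_i = 2Z_i - 1$, which are i.i.d.\ Rademacher signs. A direct calculation gives
\[
T(\tilde{\YY},\ZZ)-\tau(\tilde{\YY})=\frac1n\sum_{i=1}^n \varepsilon_i\, a_i,\qquad a_i \defeq \tilde Y_i(1)+\tilde Y_i(0)\in\{0,1,2\}.
\]
Hence the null distribution of the centered statistic depends on $\tilde{\YY}$ only through $(m_1,m_2)$, the numbers of units with $a_i=1$ and $a_i=2$.

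For the observed value, split the units into four observed groups (treated/control $\times$ $Y^{\obs}\in\{0,1\}$). Let $k_t$ and $k_c$ be the numbers of imputed ones among treated and control units. Then the observed centered statistic equals $\frac1n(\text{const}+k_t-k_c)$. By \eqref{eqn:tau_set}, $\tau_0$ determines $k_c-k_t$. So, for fixed $\tau_0$, the observed statistic is fixed, and only $(m_1,m_2)$ varies with the completion.

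Moreover, $(m_1,m_2)$ is determined by how the $k_t$ and $k_c$ imputed ones are placed on observed-$1$ units (creating $a_i=2$) versus observed-$0$ units (creating $a_i=1$).

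For matched pairs, the same computation per pair gives $\frac1n\sum_j \varepsilon_j b_j$ with $b_j = a_i-a_{i'}\in\{-2,\dots,2\}$. Only $|b_j|$ matters, by symmetry of $\varepsilon_j$. This is again a weighted Rademacher sum with weights in $\{0,1,2\}$, and the observed statistic is again pinned down by $\tau_0$. Both designs can therefore be handled together.

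\textbf{Step 2: $O(1)$ candidate completions for each $\tau_0$.} The feasible $(m_1,m_2)$ form a small polytope of lattice points, since there are a bounded number of free integer parameters once $k_c-k_t$ is fixed. I would show that the tail probability $\Pr\bigl(|\sum \varepsilon_i a_i|\ge s\bigr)$ is monotone along the feasible moves. The relevant moves are: converting a weight-$0$ unit into weight $1$, and trading two weight-$1$ units for one weight-$2$ unit plus one weight-$0$ unit (or the reverse).

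I expect this to be the main obstacle. The approach I would try first is a coupling argument. Write a weight-$2$ term as $\varepsilon+\varepsilon$ versus $\varepsilon+\varepsilon'$, and use a conditional symmetry/unimodality argument for sums of Rademachers. The aim is to show that which direction increases the $p$-value depends only on how $s$ compares with the current spread. From this, the maximizer should be one of at most four extreme configurations. These are the corners obtained by pushing the free parameter to its bounds and choosing all-$2$ versus all-$1$ placements. Consequently $p_{\max}(\tau_0)$ costs at most four randomization tests.

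\textbf{Step 3: binary search over $\mathcal{T}$.} Next I would show that $\mathcal{I}_\alpha$ is an interval in $\mathcal{T}$ containing the point where the observed deviation is minimal. Equivalently, $p_{\max}(\tau_0)$ is unimodal in $\tau_0$. Moving $\tau_0$ away from the center increases $|k_t-k_c|$, which increases the observed deviation by one unit while changing the feasible $(m_1,m_2)$ set by at most one step. A Step-2-type comparison should then show that the best achievable $p$-value does not increase.

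Given unimodality, I would binary search separately for the left and the right endpoint over at most $n+1$ values. Each endpoint takes $\lceil\log_2(n+1)\rceil$ steps, and each step costs at most $4$ tests. Tightening the rounding (for example, the center point is known to be accepted, so each search is over at most about $n/2$ values) gives a total of at most $2\cdot 4\log_2 n = 8\log_2 n$ tests.
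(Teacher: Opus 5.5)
Your Step 1 is correct: your $(m_2,m_1)$ is exactly the paper's $(a,b)$, with $\sum_i\varepsilon_i a_i=2S_{a,b}$. Your list of moves is also essentially the one the paper uses, and Step 3 points in the right direction. The gap is Step 2, which carries the entire content of the theorem. You leave it as a conjecture, and the mechanism you anticipate is not the one that operates.

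\textbf{Monotonicity does not depend on $s$.} Under the balanced design nothing depends on how $s$ compares with the current spread. In the normalization $S=\frac12\sum_i\varepsilon_i a_i$, every such sum is symmetric and unimodal, by strong unimodality of binomials. Its mode is at $0$ (integer support) or $\pm\frac12$ (half-integer support). The observed threshold $\Delta>0$ lies in the support, and $p=2\Pr(S\ge\Delta)$ by symmetry. So within the region $m_1\ge1$, and for every feasible $\Delta>0$:
\begin{itemize}
\item adding a weight-$2$ unit, or two weight-$1$ units, never lowers $\Pr(S\ge\Delta)$;
\item merging two weight-$1$ units into one weight-$2$ and one weight-$0$ unit changes $\Pr(S\ge\Delta)$ by $\frac14\bigl(\pi_R(\Delta-1)-\pi_R(\Delta)\bigr)\ge0$, where $R$ is the sum of the untouched terms, because $\Delta-1$ lies at or beyond the mode of $R$.
\end{itemize}

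\textbf{The missing feasibility fact.} The second missing ingredient is about feasibility, not distributions. Raising $m_2$ by one lowers the largest feasible $m_1$ by at most $2$; for Bernoulli this is read off from \eqref{eqn:constraints_in_ab}. Together these make $\max_{m_1}p(m_2,m_1;\Delta)$ nondecreasing in $m_2$. Hence on $\{m_1\ge1\}$ the maximizer is the lexicographically largest feasible pair, a single explicitly computable point.

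Your intended route would not deliver corners anyway. If the favorable direction really flipped according to $s$ versus a spread that itself changes along the path, the optimum could be interior. Also, the feasible $(m_2,m_1)$ region is a polygon cut out by eight inequalities plus a parity condition; for matched pairs it is the projection of a nine-variable integer polytope. So ``four extreme configurations'' is not well defined without further argument.

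\textbf{Monotonicity fails at $m_1=0$.} The blanket monotonicity along feasible moves that you intend to prove is false at one boundary, exactly where the paper needs a second test. With $m_1=0$ the sum lives on a lattice of spacing $2$, and adding two weight-$1$ units can strictly lower the $p$-value. For $\xi$ uniform on $\{\pm1\}$ and $\zeta_1,\zeta_2$ uniform on $\{\pm\frac12\}$, $\Pr(|\xi|\ge1)=1$ while $\Pr(|\xi+\zeta_1+\zeta_2|\ge1)=\frac34$. Both configurations are feasible at $\tau_0=0$ with $\Delta=1$: take $n=3$, one treated unit observed $1$, and one treated and one control unit observed $0$.

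Since $m_1\ge|n\tau_0|$ and $m_1\equiv n\tau_0\pmod 2$, the slice $m_1=0$ arises only at $\tau_0=0$ and must be kept as a separate candidate. This yields at most two tests per $\tau_0$, and one when $\tau_0\ne0$. The same parity fact shows that converting a single weight-$0$ unit into weight $1$ is not a feasible move at fixed $\tau_0$.

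\textbf{Steps 3 and the count.} Once Step 2 is repaired, Step 3 is simplest via the paper's pointwise coupling. Flip one imputed outcome of the maximizing completion at the farther $\tau_0$. Then $T(\tilde{\YY},\ZZ')$ shifts by $0$ or $2/n$ for every assignment $\ZZ'$, so the probability of each of the two extreme events can only grow as $\tau_0$ moves one step toward $T^{\obs}$.

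Your sharper count of at most $2\log_2 n$ binary-search evaluations is correct, because $T^{\obs}$ always lies on the grid of $\mathcal{T}$ or outside its range. Combined with two tests per evaluation, this even leaves slack in the $8\log_2 n$ bound.
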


Our theorem provides an exponential reduction in the number of randomization tests compared with brute-force enumeration over the $\Theta(n^4)$ potential-outcome count vectors.
It also highlights a separation between balanced Bernoulli, matched-pairs designs and balanced complete randomization, for which the fastest known procedure uses $O(n \log n)$ randomization tests \citep{aronow2023fast}.
Moreover, we answer an open question raised in \citet{aronow2023fast} on how to generalize the fast randomization-based constructions to matched-pairs designs.

The stronger reduction in the number of randomization tests for balanced Bernoulli and matched-pairs designs comes from their \emph{symmetry and independence} across units or unit pairs.
These structures enable us to express the centered test statistic as a sum of independent Rademacher and half-Rademacher random variables, a special case of Poisson binomial distribution.
Maximizing the $p$-value over all compatible potential outcomes $\mathcal{Y}(\tau_0,\YY^{\obs},\ZZ^{\obs})$ for a fixed $\tau_0$ amounts to choosing a feasible outcome-count vector that makes the test statistic look as non-extreme as possible under the induced randomization distribution. 
This corresponds to pushing that distribution toward a more ``dispersed'' law.
We show that under balanced Bernoulli and matched pairs, this most dispersed law (and thus the maximum $p$-value) is attained at one of two boundary points of the set of compatible outcome-count vectors, both of which can be identified in constant time.
In contrast, under the balanced complete randomization, one needs to search over $O(n)$ boundary points to find the maximum $p$-value. 
Finally, we show that we can binary search the ATE candidate set $\mathcal{T}(\YY^\obs, \ZZ^\obs)$ to construct the confidence set $\mathcal{I}_\alpha$, which is in analogy with the balanced complete randomization \citep{aronow2023fast}.

We also establish an information-theoretic lower bound: in the worst case, computing the exact confidence set $\mathcal{I}_\alpha$ requires at least $\Omega(\log n)$ randomization tests.

\begin{theorem}
For any design, any deterministic algorithm that takes as input observed data $(\YY^{\obs}, \ZZ^{\obs})$ and a confidence level $\alpha \in (0,1)$, and constructs the $(1-\alpha)$ confidence set $\mathcal{I}_\alpha=\{\tau_0 \in \mathcal{T}(\YY^{\obs}, \ZZ^{\obs}): p_{\max}(\tau_0) \ge \alpha\}$ by inverting randomization tests,
must perform $\Omega(\log n)$ randomization tests in the worst case.
\label{thm:lower}
\end{theorem}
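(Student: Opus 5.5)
We will prove the lower bound with an adversary (information-theoretic) argument in the standard decision-tree model. In this model the algorithm may access the data only through randomization tests. Each test is specified by a completion $\tilde{\YY}$ (or a candidate value $\tau_0$ together with a completion) and returns a single bit: accept or reject, i.e., whether $p(\tilde{\YY},\ZZ^{\obs}) \ge \alpha$. If the test is taken to return the $p$-value itself, we instead bound the number of distinguishable answers; the same argument works as long as each answer has $O(1)$ bits of information relevant to the adversary.

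A deterministic algorithm making $k$ binary queries can produce at most $2^k$ distinct outputs. So it suffices to exhibit, for every $n$, a family of at least $\Omega(n)$ inputs $(\YY^{\obs},\ZZ^{\obs},\alpha)$ that agree on everything the algorithm sees before any test, but whose correct answers $\mathcal{I}_\alpha$ are pairwise distinct. Then $2^k \ge \Omega(n)$, which gives $k \ge \log_2 n - O(1)$.

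The key steps are as follows.

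First, fix the observed data $(\YY^{\obs},\ZZ^{\obs})$, and hence the candidate set $\mathcal{T}$ with its $n+1$ points, so that only $\alpha$ varies. The algorithm knows $\alpha$ but cannot evaluate $p_{\max}$ without tests. The thresholds at which membership changes are the values $p_{\max}(\tau_0)$, and these are hidden from the algorithm.

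Second, show that for typical data (for instance, roughly half the units treated and roughly half the outcomes equal to one), the function $\tau_0 \mapsto p_{\max}(\tau_0)$ takes $\Omega(n)$ distinct values on $\mathcal{T}$. This requires no design-specific computation: $p_{\max}(\tau_0)=0$ is impossible only trivially, and the true-ish center has $p_{\max}=1$. It is enough that the distinct values of $p_{\max}$ form a set of size $\Omega(n)$.

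Third, choose $\alpha$ from among these $\Omega(n)$ distinct values (just above or below each one). Each such choice yields a distinct superlevel set $\{\tau_0: p_{\max}(\tau_0)\ge\alpha\}$, and these superlevel sets are nested and pairwise distinct.

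The main obstacle is making the family of inputs uniform over every design, because $p_{\max}$ depends on $\mathcal{D}$. To avoid computing $p_{\max}$, I would use a pure cardinality argument that needs no knowledge of its values: if $p_{\max}$ takes $m$ distinct values, varying $\alpha$ yields $m+1$ distinct outputs. It then remains to show $m=\Omega(n)$, or at least $m \ge n^{c}$, which still gives $\Omega(\log n)$.

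Since the statement says "for any design," I would take the cleaner adversary route. The design is part of the fixed input, and the adversary controls the instance. An algorithm that is correct for all possible test-outcome functions must distinguish $n+2$ possible outputs: every prefix or interval of the ordered set $\mathcal{T}$, consistent with some monotone-threshold instance. Since each test reveals one bit, at least $\log_2(n+2)$ tests are needed.

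I would justify that the adversary can realize these instances by varying $\alpha$. If for a particular design $p_{\max}$ collapses to few values, I would argue for the worst case over observed data, choosing $(\YY^{\obs},\ZZ^{\obs})$ for which the $\Omega(n)$ candidate values have strictly separated $p_{\max}$. For example, $p_{\max}$ is strictly decreasing in $|\tau_0-\hat\tau|$ on one side whenever the test statistic's support grows with $n$. This separation step is the one I expect to require the most care.
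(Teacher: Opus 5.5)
Your ``cleaner adversary route'' paragraph is the paper's proof. In the oracle model, a query at $\tau_{(k)}$ returns $b(k)=\mathbbm{1}\{p_{\max}(\tau_{(k)})\ge\alpha\}$. The monotone instances have $n+2$ possible thresholds, and a deterministic algorithm making $q$ queries has at most $2^q$ transcripts, so $q\ge\log_2(n+2)$.

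The trouble is the machinery you build around it to ``realize'' those instances, which you treat as the crux. Transcript counting only separates instances that give the algorithm the \emph{same} visible input and differ only in the hidden oracle answers. Instances that differ in $\alpha$, or in $(\YY^{\obs},\ZZ^{\obs})$, are distinguishable before any test. So the family in your first three steps does not ``agree on everything the algorithm sees.'' The algorithm may run a different decision tree for each $\alpha$, and nothing in the counting stops it from outputting each of your $m+1$ superlevel sets with zero queries. More fundamentally, if the algorithm sees the data, $\alpha$, and the design, then $\mathcal{I}_\alpha$ is a function of what it sees, and no family of the kind you ask for exists. The only possible hidden object is the test-outcome function itself.

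The separation step you flag as delicate is false at the stated generality, so it cannot be the missing piece. Take the design that puts probability $1/2$ on an assignment $z^*$ and $1/2$ on its complement $\bar z^*$. The marginals are $1/2$, so unbiasedness of the HT statistic gives $T(\tilde{\YY},z^*)+T(\tilde{\YY},\bar z^*)=2\tau(\tilde{\YY})$. Hence $\abs{T(\tilde{\YY},\tilde{\ZZ})-\tau(\tilde{\YY})}$ is constant in $\tilde{\ZZ}$, every $p$-value equals $1$, and $\mathcal{I}_\alpha=\mathcal{T}$ for all data and all $\alpha$. No choice of data makes $p_{\max}$ take $\Omega(n)$ values, and an algorithm that exploits this design needs zero tests.

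So the theorem is meaningful only in the black-box sense: for fixed $n$ and $\alpha$, the algorithm must be correct for every admissible (here, monotone threshold) pattern of oracle bits, and the adversary simply picks the threshold. Commit to that model and drop the $\alpha$-variation and separation steps. What remains is exactly the paper's argument.
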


The total computational cost equals the product of the number of randomization tests inverted and the cost per randomization test (equivalently, per $p$-value evaluation).
Under the Bernoulli and matched-pairs designs, we show that computing the $p$-value for a given potential-outcome configuration or count vector reduces to evaluating tail probabilities (equivalently, the CDF) of a Poisson binomial distribution. These probabilities can be computed exactly via the Fast Fourier Transform \citep{hong2013computing} in $O(n \log n)$ time \citep{cooley1965algorithm,thomas2009introduction}, avoiding Monte Carlo simulations. 
This is analogous to \citet{li2016exact}, who evaluate $p$-values under complete randomization using hypergeometric tail probabilities.
Throughout, we analyze all algorithms in the standard unit-cost RAM model, in which each arithmetic operation takes constant time \citep{thomas2009introduction}.

\begin{theorem}
For any $\alpha \in (0,1)$ and observed data $(\YY^{\obs}, \ZZ^{\obs})$, under the balanced Bernoulli design or the matched-pairs design, the confidence set $\mathcal{I}_\alpha(\YY^{\obs}, \ZZ^{\obs})$ can be constructed in time $O(n \log^2 n)$. 
\label{thm:time}
\end{theorem}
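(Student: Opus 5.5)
Theorem~\ref{thm:time} follows by combining Theorem~\ref{thm:upper} with a per-test cost bound. The total running time is (number of randomization tests) $\times$ (cost per test) plus bookkeeping. The outer procedure is a binary search over the $n+1$ candidates in $\mathcal{T}(\YY^{\obs},\ZZ^{\obs})$. For each queried $\tau_0$, $p_{\max}(\tau_0)$ is evaluated at one of two boundary count vectors, both found in $O(1)$ time. Theorem~\ref{thm:upper} bounds the number of such tests by $8\log_2 n = O(\log n)$. So the plan is: (i) accept this reduction; (ii) show that a single $p$-value $p(\tilde{\YY},\ZZ^{\obs})$ for a given count vector can be computed exactly in $O(n\log n)$ time; (iii) multiply. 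Computing $S$, $n_1$ and the observed counts takes $O(n)$ time once, and the remaining overhead (choosing the two boundary points, updating the search interval) is $O(1)$ per step.

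For step (ii), I would write the centered statistic as a sum of independent terms. Take the balanced Bernoulli design with $p=1/2$. Then
\[
T(\tilde{\YY},\ZZ)-\tau(\tilde{\YY})=\frac{1}{n}\sum_i (2Z_i-1)\bigl(\tilde Y_i(1)+\tilde Y_i(0)\bigr),
\]
since $\frac{2}{n}\bigl(Z_i\tilde Y_i(1)-(1-Z_i)\tilde Y_i(0)\bigr)-\frac1n\bigl(\tilde Y_i(1)-\tilde Y_i(0)\bigr)$ simplifies to this. Each term is $0$, $\pm1$ or $\pm2$ with a Rademacher sign, depending only on the unit's type. Hence $n$ times the statistic equals $R_1+2R_2$, where $R_1$ is a sum of $a$ Rademachers and $R_2$ a sum of $b$ Rademachers. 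Here $a$ counts the units of type $(1,0)$ or $(0,1)$ and $b$ counts the units of type $(1,1)$, both read off the count vector. Equivalently, $R_1=2B_1-a$ and $R_2=2B_2-b$ with $B_1\sim\Bin(a,1/2)$ and $B_2\sim\Bin(b,1/2)$ independent.

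The matched-pairs design is handled the same way. Each pair contributes an independent sign times a pair-dependent integer magnitude in a bounded set, so the statistic is again a lattice sum of independent bounded integer variables: the Rademacher/half-Rademacher Poisson binomial structure described in the paper. In both designs, $n$ times the centered statistic is an integer-valued random variable supported on $O(n)$ points.

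The distribution of this variable can be obtained exactly as a product of generating polynomials. In the Bernoulli case this is $\bigl(\tfrac{x^{-1}+x}{2}\bigr)^{a}\bigl(\tfrac{x^{-2}+x^{2}}{2}\bigr)^{b}$. The two binomial pmfs can be written down in $O(n)$ time and then convolved with one FFT, in $O(n\log n)$ time. For matched pairs, the pairs are grouped by magnitude class, of which there are $O(1)$. Each class gives a binomial pmf, and these are combined by a constant number of FFT convolutions; a general Poisson binomial can also be handled by the DFT-of-characteristic-function method of \citet{hong2013computing}. Either way the cost is $O(n\log n)$. The $p$-value is then the sum of the pmf over support points with $|\cdot|\ge$ the observed value, which is an $O(n)$ scan. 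Multiplying, $O(\log n)\cdot O(n\log n)=O(n\log^2 n)$.

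The main obstacle is confirming that every test invoked by the algorithm behind Theorem~\ref{thm:upper} really reduces to such a structured distribution with $O(n)$ support. In particular, the matched-pairs magnitudes must take only $O(1)$ distinct values, so that a constant number of FFTs suffices. If they do not, I would instead use divide-and-conquer multiplication of the $n/2$ pair polynomials, each of degree $O(1)$. This costs $O(n\log^2 n)$ per test, which would break the bound. So I would first check explicitly that the magnitude per pair is determined by the pair's type counts, taking values in $\{0,1,2\}$ after scaling by $n$, which gives $O(1)$ classes.
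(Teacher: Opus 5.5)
Your overall accounting matches the paper's proof: at most $4\log_2 n$ queried values of $\tau_0$ from the binary search, at most two tests per query, and $O(n\log n)$ per test by FFT. Computing the lattice pmf by FFT-convolving two explicit binomial pmfs is an equally good substitute for the characteristic-function/inverse-FFT computation in Lemma~\ref{lem:Pa_fft}.

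The gap is in the matched-pairs case, and it sits in a step you take for granted rather than the one you flag. Theorem~\ref{thm:upper} only counts randomization tests. Lemma~\ref{lem:pmax_tau_0_matching} only says the maximum is attained at the lexicographically largest feasible $(m_2,m_1)$ in $\mathcal{M}_0$ and $\mathcal{M}_1$; it does not say how to compute that pair. For the Bernoulli design the pair comes from the closed-form \textsc{ComputeAB} (Lemma~\ref{lem:ab_search}). For matched pairs, $\mathcal{M}$ is the projection of the $9$-variable integer set $\mathcal{V}$ and has no comparably simple description. So your claim that both boundary vectors are ``found in $O(1)$ time'' is unsupported there.

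The paper closes this by solving up to three integer programs, each with $9$ variables, $O(1)$ constraints and coefficients in $\{-1,0,1\}$:
\begin{itemize}
\item maximize $m_2$ subject to $m_1=0$;
\item maximize $m_2$ subject to $m_1\ge 1$;
\item maximize $m_1$ with $(v_{1,-1},v_{-1,1})$ fixed at that optimum.
\end{itemize}
Each is solvable in $O(1)$ arithmetic operations by \citet{jansen2023integer}. You need this, or some other procedure costing at most $O(n\log n)$ per queried $\tau_0$. A naive search, e.g.\ over all pairs $(v_{1,-1},v_{-1,1})$, already costs $\Theta(n^2)$ per query and would break the $O(n\log^2 n)$ bound.

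By contrast, the obstacle you single out is immediate. Since $D_i=W_i(1)-W_i(0)$ with $W_i(\cdot)\in\{-1,0,1\}$, we have $|D_i|\in\{0,1,2\}$. Because $\xi_iD_i\overset{d}{=}\xi_i|D_i|$, the centered statistic equals $\frac1n(2R_2+R_1)$, where $R_2$ and $R_1$ are sums of $m_2$ and $m_1$ independent Rademachers. This is the same law as $\frac{2}{n}S_{m_2,m_1}$, so your two-binomial FFT convolution applies verbatim.
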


\citet{aronow2023fast} consider approximating randomization-test $p$-values via Monte Carlo, which for balanced complete randomization yields a running time of $O(Kn\log n)$, where $K$ is the number of simulation draws per test; their experiments suggest taking $K \ge 10^4$ for moderate $n$.
In contrast, our FFT-based computation is deterministic and can be implemented efficiently in practice; specifically, it needs at most $2 N\log_2 N$ arithmetic operations for $N$ points
\citep{cooley1965algorithm,singleton1969algorithm,hong2013computing,weisstein2015fast}, which is smaller than $10^4$ for moderate sample sizes.

We further extend our results on the balanced Bernoulli design to general settings where the treatment assignment probability $p \in (0,1)$.
Under this general case, we establish a result matching the number of randomization tests needed for the general completely randomized design in \citep{aronow2023fast}. We adapt their idea of reusing randomization test results (more specifically, $p$-values) for ``nearly'' outcome configurations.

\begin{theorem}
For any $\alpha \in (0,1)$ and observed data $(\YY^\obs, \ZZ^\obs)$, under the Bernoulli design, the confidence set $\mathcal{I}_\alpha(\YY^\obs, \ZZ^\obs)$ can be constructed using $O(n^2)$ randomization tests and in time $O(n^3 \log n)$ via FFT. 
\label{thm:general}
\end{theorem}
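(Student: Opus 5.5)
We follow the strategy of \citet{aronow2023fast} for unbalanced complete randomization, specialized to the Bernoulli design. Our goal is to cut the search for $p_{\max}(\tau_0)$ down to a one-parameter family of count vectors, and to share work across all $\tau_0\in\mathcal{T}$.

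\textbf{Step 1 (reduction to counts).} Write the observed data as four cells, indexed by the treatment arm $z\in\{0,1\}$ and the observed outcome $y\in\{0,1\}$; their sizes $n_{zy}$ are known. A completion is determined by how many units in each cell have missing outcome equal to $1$. Call these counts $a_{zy}\in[0,n_{zy}]$. Then $\tau(\tilde\YY)$ is an affine function of $a_{11}+a_{10}-a_{01}-a_{00}$. More precisely, a treated unit's missing $Y(0)$ enters with a minus sign, and a control unit's missing $Y(1)$ enters with a plus sign.

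Under the Bernoulli design the centered statistic $T(\tilde\YY,\tilde\ZZ)-\tau(\tilde\YY)$ is a sum of independent terms. Each term depends only on the unit's type $(\tilde Y_i(1),\tilde Y_i(0))\in\{0,1\}^2$ and takes two values, one with probability $p$ and one with probability $1-p$. Types $(0,0)$ contribute $0$. Types $(1,0)$, $(0,1)$ and $(1,1)$ contribute i.i.d.\ copies of three fixed two-point laws. Hence the $p$-value depends only on the type counts $(N_{10},N_{01},N_{11})$ and on the observed centered value. That value is itself a fixed function of $(\YY^{\obs},\ZZ^{\obs})$ and $\tau(\tilde\YY)$, so it is constant on each $\tau_0$-slice.

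\textbf{Step 2 (structure of the feasible set for fixed $\tau_0$).} Fix $\tau_0$; this fixes a linear constraint on the $a_{zy}$. Each $a_{zy}$ changes the type counts in a known way. For example, a treated unit with observed $1$ and imputed $Y(0)=1$ moves from type $(1,0)$ to type $(1,1)$. The set of achievable $(N_{10},N_{01},N_{11})$ is therefore a projected box intersected with a hyperplane.

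The key lemma we aim to prove is a quasi-monotonicity statement. Moves that keep $\tau$ fixed and convert a $(1,0)$ and a $(0,1)$ pair into a $(1,1)$ and a $(0,0)$ pair change the distribution by a fixed ``swap'' of independent summands. Along any such direction, the $p$-value depends on the slice only through a single integer parameter $k$, once the remaining free coordinate is pushed to its extreme. We then enumerate: $O(n)$ values of $\tau_0$ times $O(n)$ values of $k$ gives $O(n^2)$ candidate configurations in total.

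\textbf{Step 3 (reuse across $\tau_0$).} Following \citet{aronow2023fast}, we observe that configurations for adjacent $\tau_0$ differ by changing one unit's type. A careful indexing lets a single randomization-test distribution be shared across $O(1)$ neighbors. This keeps the count of distinct randomization tests at $O(n^2)$ rather than $O(n^3)$.

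Each test's null distribution is a convolution of at most three binomial-type laws on a lattice, namely a generalized Poisson binomial. Its CDF at the observed value is computed by FFT in $O(n\log n)$ time, giving $O(n^3\log n)$ overall.

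\textbf{Main obstacle.} The hard step is the lemma in Step 2: showing that for fixed $\tau_0$ only $O(n)$ extreme points need testing. For $p\neq 1/2$ the three two-point laws are asymmetric, so the dispersion argument behind Theorem~\ref{thm:upper} fails.

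Our first attempt will be to argue that the $p$-value, as a function of the remaining free coordinate, is maximized at a boundary of that coordinate's range. We would try to show this via a stochastic-ordering or convexity comparison between the swapped summand pairs. Failing that, we fall back on enumerating all $O(n^2)$ count vectors satisfying the $\tau_0$ constraint and projecting out redundancy. For this we would show that the $p$-value depends only on two of the three counts given $\tau_0$. The resulting feasible set is a two-dimensional region, whose $O(n)$ boundary points per $\tau_0$ suffice by a monotonicity-in-each-coordinate argument.
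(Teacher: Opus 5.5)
\textbf{There is a genuine gap: your Step~2 lemma is unproven, and the monotonicity you plan to prove it with is false for $p\neq 1/2$.} Everything in your argument rests on that lemma. The tool you name for it is a stochastic-ordering comparison making the $p$-value monotone in the free coordinate; your fallback uses monotonicity in each coordinate. The paper states exactly this failure: Lemma~\ref{lem:ab_monotone} does not extend, so all feasible $(a,b)$ must be enumerated. A three-unit instance shows it:
\begin{itemize}
\item Take $p=0.9$, unit~1 in control with $Y^{\obs}=1$, unit~2 in control with $Y^{\obs}=0$, and unit~3 treated with $Y^{\obs}=0$. Then $T^{\obs}=-10/3$, and at $\tau_0=-1/3\in\mathcal{T}$ we get $\Delta=n|T^{\obs}-\tau_0|=9$.
\item Imputing $Y_1(1)=Y_2(1)=Y_3(0)=0$ gives $\vv=(0,0,1,2)$, i.e.\ $(a,b)=(0,1)$. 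Here $S_{\vv}=\delta\in\{1,-9\}$, so the $p$-value is $0.1$.
\item Switching to $Y_2(1)=Y_3(0)=1$ keeps $\tau=\tau_0$ and gives $\vv=(0,1,2,0)$, i.e.\ $(a,b)=(0,3)$. Here $S_{\vv}=\eps+\delta_1+\delta_2$ and $\Pr(|S_{\vv}|\ge 9)=0.18\cdot 0.1+0.01=0.028$.
\end{itemize}
So raising $b$ at fixed $a$ inside a $\tau_0$-slice can strictly lower the $p$-value. The skewed two-point summands break the symmetry and unimodality that drove the balanced dispersion argument. Boundary optimality without monotonicity would need a genuinely new argument, which you do not supply and the paper does not attempt. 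You are therefore left with $\Theta(n^2)$ feasible count vectors per $\tau_0$, and $\Theta(n^3)$ overall in the worst case.

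\textbf{Step~3 cannot rescue the count.} If Step~2 held, you would already have $O(n)\cdot O(n)$ tests and would not need Step~3. Since it does not hold, you must cover $\Theta(n^3)$ configurations, and sharing one randomization distribution among $O(1)$ neighbours saves only a constant factor. The missing idea is the paper's amortization along an entire line within a slice:
\begin{itemize}
\item Fix $\tau_0$ and a feasible $v_{11}$, and run one FFT at the smallest feasible $v_{10}$.
\item Then step $v_{10}$ upward, with $v_{01}=v_{10}-n\tau_0$ and $v_{00}$ adjusted.
\item Obtain each new null PMF from the previous one by a local convolution update with a constant number of two-point laws (Equation~\eqref{eqn:pmf_update}), instead of running a fresh test.
\end{itemize}
Every feasible $\vv$ is then evaluated exactly, with no structural lemma at all. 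Only one FFT-based test is run per pair $(\tau_0,v_{11})$, which is where the $O(n^2)$ count of randomization tests comes from.
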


\paragraph{Roadmap.}
Section~\ref{sec:binary_search} establishes basic properties of the balanced Bernoulli design.
Section~\ref{sec:four_permutation} presents our main algorithm, which computes the maximum \(p\)-value under a given ATE $\tau_0$ using at most two randomization tests.
Section~\ref{sec:match} extends these results to the matched-pairs design.
Section~\ref{sec:simulation} reports simulations comparing our algorithms with common benchmarks.
We discuss the general Bernoulli design and prove Theorem \ref{thm:general} in Appendix \ref{sec:general_bernoulli}.
All missing proofs are in Appendix \ref{sec:appendix_proofs}.

\section{Binary Search, Lower Bounds, and Convergence Rates}
\label{sec:binary_search}

In this and the next sections, we focus on the balanced Bernoulli design, where $Z_i$'s are independent Bernoulli random variables with $\Pr(Z_i = 1) = \Pr(Z_i = 0) = 1/2$ for all $i$.

We first show that, given the observed data \((\YY^{\obs}, \ZZ^{\obs})\) and level \(\alpha\), one can
perform binary search over the candidate set \(\mathcal T  = \mathcal{T}(\YY^{\obs}, \ZZ^{\obs})\) to identify the
confidence set \(\mathcal{I}_\alpha = \mathcal{I}_\alpha (\YY^{\obs}, \ZZ^{\obs})\).
Let \(T^{\obs}\) denote the observed test statistic.
For every potential-outcome completion \(\tilde{\YY}\) compatible with the data,
the value of \(T(\tilde{\YY}, \ZZ^{\obs})\) is identical and equals \(T^{\obs}\).
Observe that \(p_{\max}(T^{\obs})=1\), which implies
\(\min \mathcal{I}_\alpha \le T^{\obs} \le \max \mathcal{I}_\alpha\).
The next lemma characterizes the monotonicity of \(p_{\max}\) on either side of
\(T^{\obs}\).

\begin{lemma}
Under the balanced Bernoulli design, the function $p_{\max}$ is monotone nondecreasing on
$\mathcal{T} \cap [-1, T^{\obs}]$ and monotone nonincreasing on
$\mathcal{T} \cap [T^{\obs}, 1]$.
\label{lem:f_monotone}
\end{lemma}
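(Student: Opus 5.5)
The plan is a direct coupling argument. For the balanced design ($p=1/2$) I will first rewrite the centered statistic as a weighted Rademacher sum. Then I will show that moving $\tau_0$ one grid step toward $T^{\obs}$ can be matched by a one-entry change to a compatible completion that cannot decrease the $p$-value.

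\emph{Step 1 (centered statistic).} With $p=1/2$ and $\varepsilon_i \defeq 2Z_i-1$, which are i.i.d.\ Rademacher, a direct computation gives
\[
n\bigl(T(\tilde{\YY},\ZZ)-\tau(\tilde{\YY})\bigr)=\sum_{i=1}^n \varepsilon_i\,\bigl(\tilde Y_i(1)+\tilde Y_i(0)\bigr)\eqqcolon \sum_i \varepsilon_i w_i, \qquad w_i\in\{0,1,2\}.
\]
Every compatible $\tilde{\YY}$ has $T(\tilde{\YY},\ZZ^{\obs})=T^{\obs}$. Hence for $\tilde{\YY}\in\mathcal{Y}(\tau_0,\YY^{\obs},\ZZ^{\obs})$ the $p$-value equals
\[
\Pr\Bigl(\bigl|\textstyle\sum_i \varepsilon_i w_i\bigr|\ge c(\tau_0)\Bigr),\qquad c(\tau_0)\defeq n\,\bigl|T^{\obs}-\tau_0\bigr|.
\]

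\emph{Step 2 (one-step move).} I treat the left side; the right side is symmetric, with all signs reversed. Take $\tau_0<\tau_0'=\tau_0+1/n$, both in $\mathcal{T}\cap[-1,T^{\obs}]$, and let $\tilde{\YY}$ attain $p_{\max}(\tau_0)$. Because $\tau_0<\max\mathcal{T}$, $\tilde{\YY}$ is not the completion that maximizes the ATE. So some imputed entry can be flipped to raise the ATE by exactly $1/n$. Either a treated unit has imputed $\tilde Y_i(0)=1$, which we set to $0$, or a control unit has imputed $\tilde Y_i(1)=0$, which we set to $1$. The flip gives $\tilde{\YY}'\in\mathcal{Y}(\tau_0',\YY^{\obs},\ZZ^{\obs})$. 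Only one weight changes, $w_i'=w_i\pm1$. Therefore, pointwise in the same $\varepsilon$,
\[
\Bigl|\textstyle\sum_j\varepsilon_j w_j'\Bigr|\ \ge\ \Bigl|\textstyle\sum_j\varepsilon_j w_j\Bigr|-1 .
\]

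\emph{Step 3 (compare thresholds).} Since $\tau_0<\tau_0'\le T^{\obs}$, we have $c(\tau_0')=c(\tau_0)-1$. Then the event $\{|\sum\varepsilon_j w_j|\ge c(\tau_0)\}$ is contained in the event $\{|\sum\varepsilon_j w_j'|\ge c(\tau_0')\}$. This gives
\[
p_{\max}(\tau_0')\ \ge\ p(\tilde{\YY}',\ZZ^{\obs})\ \ge\ p(\tilde{\YY},\ZZ^{\obs})\ =\ p_{\max}(\tau_0).
\]
Chaining these one-step inequalities over consecutive grid points gives that $p_{\max}$ is nondecreasing on $\mathcal{T}\cap[-1,T^{\obs}]$. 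On the right side I flip in the opposite direction: a treated unit's imputed $\tilde Y_i(0)$ goes from $0$ to $1$, or a control unit's imputed $\tilde Y_i(1)$ goes from $1$ to $0$. This lowers the ATE by $1/n$, and the same argument gives nonincreasing behavior on $\mathcal{T}\cap[T^{\obs},1]$.

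\emph{Main obstacle.} The one step that needs care is the sign bookkeeping behind $c(\tau_0')=c(\tau_0)-1$. Both $\tau_0$ and $\tau_0'$ must lie on the same side of $T^{\obs}$, so the distance drops by exactly $1/n$ and not by something else. This holds even when $T^{\obs}\notin\mathcal{T}$, because the lemma restricts $\tau_0,\tau_0'$ to $[-1,T^{\obs}]$ (respectively $[T^{\obs},1]$). A secondary point is checking that a flippable imputed entry always exists. If none existed, every treated unit would have imputed $\tilde Y_i(0)=0$ and every control unit imputed $\tilde Y_i(1)=1$, so the ATE would be the maximum possible value $(S-n_1+n)/n=\max\mathcal{T}$, contradicting $\tau_0<\max\mathcal{T}$.
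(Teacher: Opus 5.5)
Your proof is correct and follows essentially the same route as the paper. Both arguments take a maximizer at $\tau_0$, flip one imputed entry to move the ATE by $1/n$ toward $T^{\obs}$, and compare the two test statistics under a common assignment.

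The only difference is packaging. The paper splits the two-sided rejection event into $\{T \ge T^{\obs}\}$ and $\{T \le 2\tau_j - T^{\obs}\}$, and it uses the sandwich $T(\tilde{\YY}_0,\ZZ') \le T(\tilde{\YY}_1,\ZZ') \le T(\tilde{\YY}_0,\ZZ') + 2/n$ on each piece. Your centered weighted-Rademacher form instead lets the reverse triangle inequality handle both tails at once.
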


We remark that $T^{\obs}$ need not lie in $\mathcal{T}$.
For example, if all units are assigned to treatment and all observed outcomes equal $1$,
then $T^{\obs} = 2$, which falls outside the set $\mathcal{T}$.

\begin{corollary}
Under the balanced Bernoulli design, the
confidence set $\mathcal{I}_{\alpha}$ can be constructed by binary search over each of
$\mathcal{T} \cap [-1, T^{\obs}]$ and $\mathcal{T} \cap [T^{\obs}, 1]$ as in Algorithm \ref{alg:bernoulli_balance} (details in Algorithm \ref{alg:binary_search} in Appendix \ref{sec:appendix_proofs}), requiring evaluation
of $p_{\max}$ at no more than $4 \log_2 n$ distinct values.
\label{coro:binary_search}
\end{corollary}

Corollary~\ref{coro:binary_search} gives an upper bound on the number of evaluations
of \(p_{\max}\) needed to construct \(\mathcal{I}_{\alpha}\).
We next show that this bound is tight up to constant factors by proving a matching
information-theoretic lower bound.

We study a simplified oracle model: an algorithm may query any
\(\tau_0 \in \mathcal{T}\), and the oracle returns a single indicator bit
\(\mathbbm{1}\{p_{\max}(\tau_0)\ge \alpha\}\).
This oracle can be implemented via sharp-null randomization tests, where each query
requires at least one such test.

The lemma below lower-bounds the number of oracle queries required by any
deterministic algorithm; consequently, it also lower-bounds the number of
randomization tests needed to invert and construct \(\mathcal{I}_{\alpha}\).
That is, this lemma implies Theorem~\ref{thm:lower}.

\begin{algorithm}[t]

\SetAlgoNoLine
\DontPrintSemicolon

\KwIn{Observed data $(\YY^{\obs}, \ZZ^{\obs})$ and confidence level $\alpha \in (0,1)$.}
\KwOut{A $(1-\alpha)$ confidence set $\mathcal{I}_\alpha$.}

We perform a binary search over $\tau_0 \in \mathcal{T}(\YY^{\obs}, \ZZ^{\obs})$ described in detail in Algorithm \ref{alg:binary_search}, using a function $f: \mathcal{T}(\YY^{\obs}, \ZZ^{\obs}) \rightarrow \{0,1\}$
defined by 
\[
f(\tau_0) = \begin{cases}
1, \quad & \text{if } p_{\max}(\tau_0) \ge \alpha    \\
0, \quad & \text{if } p_{\max}(\tau_0) < \alpha 
\end{cases}
\]
where 
$p_{\max}(\tau_0) \gets \textsc{ComputePmax}((\YY^{\obs}, \ZZ^{\obs}), \tau_0)$ is computed using Algorithm \ref{alg:p_max}.

	\caption{Confidence Set for Balanced Bernoulli}
	\label{alg:bernoulli_balance}
\end{algorithm}

\begin{lemma}
For any design, any deterministic algorithm that takes as input observed data $(\YY^{\obs}, \ZZ^{\obs})$ and a confidence level $\alpha \in (0,1)$, and constructs the $(1-\alpha)$ confidence set $\mathcal{I}_\alpha=\{\tau_0 \in \mathcal{T}(\YY^{\obs}, \ZZ^{\obs}): p_{\max}(\tau_0) \ge \alpha\}$ by inverting randomization tests,
must evaluate $p_{\max}$ at 
\(\Omega(\log n)\) distinct values in the worst case.
\label{lem:lower}
\end{lemma}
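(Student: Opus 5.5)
A decision-tree (adversary) argument in the oracle model just described should give the lower bound. Each evaluation of $p_{\max}$ at some $\tau_0\in\mathcal{T}$ is at least as informative as the bit $\mathbbm{1}\{p_{\max}(\tau_0)\ge\alpha\}$, so it suffices to show that any deterministic algorithm learning $\mathcal{I}_\alpha$ from such bits needs $\Omega(\log n)$ queries in the worst case. A deterministic algorithm making $q$ bit queries is a binary decision tree of depth $q$ and has at most $2^q$ leaves. Hence it can output at most $2^q$ distinct sets. If I exhibit $\Omega(n)$ inputs that all produce distinct confidence sets, I get $q\ge\log_2\Omega(n)=\Omega(\log n)$.

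The core step is to construct these inputs. The plan is to fix a design (say balanced Bernoulli or matched pairs; the statement is \emph{for any design}, so I need a family that works generally) and a data set $(\YY^{\obs},\ZZ^{\obs})$, which in turn fixes $\mathcal{T}$ and the function $p_{\max}$ on $\mathcal{T}$. Then I vary $\alpha$. The set $\mathcal{I}_\alpha=\{\tau_0: p_{\max}(\tau_0)\ge\alpha\}$ is a superlevel set of $p_{\max}$. As $\alpha$ ranges over $(0,1)$, it takes as many distinct values as there are distinct values of $p_{\max}$ on $\mathcal{T}$, plus one.

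It therefore suffices to show that $p_{\max}$ takes $\Omega(n)$ distinct values on $\mathcal{T}$ for some data set. For balanced Bernoulli, Lemma~\ref{lem:f_monotone} shows $p_{\max}$ is unimodal around $T^{\obs}$. I would argue it is strictly decreasing for $\tau_0$ moving away from $T^{\obs}$ by a constant fraction, at least until $p_{\max}$ becomes $0$ or very small. Alternatively, strictness is not needed. Since $p_{\max}(T^{\obs})=1$, I only need $p_{\max}$ at the extreme points of $\mathcal{T}$ to be tiny and $p_{\max}$ to move in many steps.

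Concretely, I would use $\Omega(n)$ distinct $\alpha$ values on one monotone side. Each yields a different interval endpoint $\max\mathcal{I}_\alpha$, and the algorithm must distinguish them.

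The main obstacle is making this argument \emph{design-free}, since I cannot analyze $p_{\max}$ for an arbitrary design. My first attempt would sidestep that with a purely adversarial argument, treating $p_{\max}$ as an unknown monotone-unimodal function. The adversary answers each query so as to keep the number of remaining consistent endpoint positions at least half of the previous count, as in the standard binary search lower bound. This requires that every threshold position $k\in\{1,\dots,\Theta(n)\}$ be realizable by some actual input. I would realize it by picking $\alpha$ between consecutive values of $p_{\max}$ on a data set where these values are distinct, checking distinctness for at least one natural design. If a fully design-independent statement is required, I would instead fix one hard instance per $n$ for the given design, or interpret ``for any design'' as the algorithm not exploiting design specifics.
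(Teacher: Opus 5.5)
\textbf{The gap: varying $\alpha$ does not produce a hard family.} Your counting framework is the one the paper uses: a deterministic algorithm making $q$ bit queries has at most $2^q$ transcripts. The problem is how you generate the $\Omega(n)$ distinct answers. You fix one data set $D$ and vary $\alpha$. But $\alpha$ is an input, so the algorithm's decision tree may depend on it. The bound ``at most $2^q$ leaves, hence at most $2^q$ outputs'' only limits outputs over instances that share everything the algorithm sees before querying. Across different $\alpha$, nothing caps the number of distinct outputs.

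Here is an algorithm that defeats your family:
\begin{itemize}
\item Hard-code, as a finite table in $\alpha$, the endpoints of $\mathcal{I}_\alpha$ for $D$.
\item On input $\alpha$, query the two predicted endpoints and their outer neighbours. If the predicted set is empty, query instead the points of $\mathcal{T}$ adjacent to $T^{\obs}$.
\item If the endpoints are accepted and the neighbours rejected, output the predicted set. This is correct because, by Lemma~\ref{lem:f_monotone}, $\mathcal{I}_\alpha$ is a contiguous block of $\mathcal{T}$. Otherwise, fall back to binary search.
\end{itemize}
This algorithm is correct on every input, yet uses at most four queries on every instance of your family. So even a proof that $p_{\max}$ takes $\Omega(n)$ distinct values on $\mathcal{T}$, which you also leave unproved, would give no lower bound. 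The realization step in your adversary paragraph fails for the same reason: an adversary may change only what is hidden behind the oracle, not the $\alpha$ the algorithm read at the start.

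\textbf{The fix, and the paper's route.} Keep $\alpha$, and everything else the algorithm knows in advance, fixed, and let only the hidden acceptance pattern vary. This is what the paper does. Its $K+1=n+2$ instances differ only in the threshold $s$ of a monotone indicator sequence $b(1),\dots,b(K)$. A single decision tree must therefore separate them, which gives $2^q\ge K+1$. The paper treats every threshold as a possible instance rather than realizing it with particular data. If you drop your realization step and treat the monotone pattern itself as the unknown, your last paragraph becomes exactly this proof.

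\textbf{A second problem: your opening reduction runs the wrong way.} You argue that a real-valued evaluation of $p_{\max}$ is at least as informative as the bit $\mathbbm{1}\{p_{\max}(\tau_0)\ge\alpha\}$. If so, a lower bound for bit queries does not transfer to real-valued evaluations, since more informative answers can only help the algorithm. Either work, as the paper does, in the model where the oracle returns only the bit, or have the adversary return values that depend only on that bit.
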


The following proposition shows that, under the balanced Bernoulli design, the confidence interval
\([\min \mathcal{I}_\alpha,\, \max \mathcal{I}_\alpha]\) converges at the rate
\(n^{-1/2}\). This convergence rate extends to general Bernoulli designs
with marginal treatment assignment probability $p \in (0,1)$ bounded away from $0$ and $1$.
Such a rate is characteristic of confidence intervals derived from the central limit theorem asymptotics. 
Our result parallels the \(n^{-1/2}\) convergence rate
established by \citet{aronow2023fast} for the completely randomized design.

\begin{proposition}
\label{prop:bern-hoeffding-length}

For any given observed data $(\YY^{\obs},\ZZ^{\obs})$, under the balanced Bernoulli design, and confidence level $\alpha \in (0,1)$, the confidence set $\mathcal{I}_{\alpha}$ returned by Algorithm \ref{alg:bernoulli_balance} satisfies
\[
\max \mathcal{I}_{\alpha} - \min \mathcal{I}_{\alpha} \le 
\sqrt{\frac{32\log(2/\alpha)}{n}},
\]
where $\log$ has base $e$. 
\end{proposition}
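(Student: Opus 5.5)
The plan is to show that every $\tau_0 \in \mathcal{I}_\alpha$ lies within distance $\sqrt{8\log(2/\alpha)/n}$ of the observed statistic $T^{\obs}$. The stated bound then follows from the triangle inequality, since
\[
\max \mathcal{I}_\alpha - \min \mathcal{I}_\alpha \le 2\sqrt{8\log(2/\alpha)/n} = \sqrt{32\log(2/\alpha)/n}.
\]
This argument does not use the binary search structure. Algorithm~\ref{alg:bernoulli_balance} returns exactly the set in \eqref{eqn:ci} (by Corollary~\ref{coro:binary_search} and Lemma~\ref{lem:f_monotone}), so it suffices to bound the diameter of that set.

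\textbf{Step 1: write the centered statistic as a Rademacher sum.} Fix any completion $\tilde{\YY}$. With $p=1/2$ we have
\[
T(\tilde{\YY},\ZZ) = \frac{1}{n}\sum_i \bigl(2Z_i\tilde Y_i(1) - 2(1-Z_i)\tilde Y_i(0)\bigr).
\]
Subtracting $\tau(\tilde{\YY})$ and simplifying unit by unit gives
\[
T(\tilde{\YY},\ZZ)-\tau(\tilde{\YY}) = \frac{1}{n}\sum_{i=1}^n (2Z_i-1)\,c_i, \qquad c_i = \tilde Y_i(1)+\tilde Y_i(0)\in\{0,1,2\}.
\]
The summands are independent and mean zero, and each lies in $[-2,2]$.

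\textbf{Step 2: apply Hoeffding's inequality.} Since each summand has range at most $4$, for every $s>0$,
\[
\Pr_{\tilde{\ZZ}}\bigl(|T(\tilde{\YY},\tilde{\ZZ})-\tau(\tilde{\YY})|\ge s\bigr)\le 2\exp\!\left(-\frac{2n^2s^2}{16n}\right)=2\exp\!\left(-\frac{ns^2}{8}\right).
\]
This bound is uniform over all completions $\tilde{\YY}$.

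\textbf{Step 3: invert the test.} Take $\tau_0\in\mathcal{I}_\alpha$. Then some compatible $\tilde{\YY}$ with $\tau(\tilde{\YY})=\tau_0$ has $p(\tilde{\YY},\ZZ^{\obs})\ge\alpha$. As noted before Lemma~\ref{lem:f_monotone}, $T(\tilde{\YY},\ZZ^{\obs})=T^{\obs}$ for every compatible completion. Setting $s=|T^{\obs}-\tau_0|$, the $p$-value is exactly the tail probability in Step 2 at threshold $s$. Hence
\[
\alpha\le 2\exp(-ns^2/8), \quad\text{so}\quad s\le\sqrt{8\log(2/\alpha)/n}.
\]
Applying this to both $\min\mathcal{I}_\alpha$ and $\max\mathcal{I}_\alpha$ and using the triangle inequality finishes the proof.

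\textbf{Main obstacle.} The only delicate point is Step 1: getting the algebra right so that the centering term cancels cleanly and leaves a bounded symmetric sum. Once that is done, the Hoeffding range constant $4$ is exactly what produces the factor $32$ in the statement.
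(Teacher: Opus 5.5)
Your proposal is correct and follows the paper's proof: Hoeffding's inequality applied uniformly over all compatible completions with per-unit range $4$, giving $\lvert T^{\obs}-\tau_0\rvert \le \sqrt{8\log(2/\alpha)/n}$ for every accepted $\tau_0$, and then doubling via the triangle inequality. The only cosmetic difference is that the paper writes the uncentered summands $\tilde Z_i\tilde Y_i(1)/p-(1-\tilde Z_i)\tilde Y_i(0)/(1-p)$ for a general $p$, bounds their range by $1/p+1/(1-p)$, and then sets $p=1/2$; your centered form $(2Z_i-1)c_i$ reaches the same range of $4$ directly.
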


\section{Computing $p_{\max}(\tau_0)$ Using at Most Two Randomization Tests for Balanced Bernoulli}
\label{sec:four_permutation}

This section shows that, under the balanced Bernoulli design, for any given \(\tau_0 \in \mathcal{T}\), we can evaluate the exact
value of \(p_{\max}(\tau_0)\) using \emph{at most two} randomization tests.
Combining with Corollary \ref{coro:binary_search}, we prove Theorem \ref{thm:upper} for the balanced Bernoulli design.

\subsection{Monotonicity Properties of the Centered HT Estimator}
\label{subsec:ht_property}

We can represent a potential-outcome table by the four joint potential-outcome counts
\begin{align*}
v_{yy'} \defeq \abs{\{\, i \in [n] : Y_i(1)=y,\; Y_i(0)=y' \,\}},
\qquad y,y' \in \{0,1\}.
\label{eqn:potential_table}
\end{align*}
Likewise, we can summarize the observed data \((\YY^{\obs},\ZZ^{\obs})\) by the four
observation counts
\[
N_{yz}\defeq | \{\, i \in [n] : Y_i^{\obs}=y,\; Z_i^{\obs}=z \,\} |,
\qquad y,z \in \{0,1\}.
\]
Let \(\vv=(v_{11},v_{10},v_{01},v_{00})\) and \(\NN=(N_{11},N_{01},N_{10},N_{00})\),
both in \(\mathbb{Z}_{\ge 0}^{\{0,1\}^2}\).
This count-table representation is standard for
binary outcomes \citep{rigdon2015randomization, li2016exact,aronow2023fast,ding2016potential}.
We can write the feasible ATE set $\mathcal{T}(\YY^{\obs}, \ZZ^{\obs})$ in Equation \eqref{eqn:tau_set} as $\mathcal{T}(\NN)$.

A potential-outcome count \(\vv\) is \emph{feasible} with respect to a given \(\tau_0\) and \(\NN\)
if there exists some completion \(\tilde{\YY}\in\mathcal{Y}(\tau_0,\YY^{\obs},\ZZ^{\obs})\)
whose associated count is \(\vv\).
For all such \(\tilde{\YY}\), the randomization \(p\)-values \(p(\tilde{\YY},\ZZ^{\obs})\) are all the same. Hence, we may unambiguously define
\[
p(\vv, \NN) \defeq p(\tilde{\YY},\ZZ^{\obs}).
\]
In addition, for a fixed $\tau_0$ and observed data $\NN$, the threshold $|T^{\obs} - \tau_0|$ appearing in $p(\vv, \NN)$ is fixed.

The following lemma shows that  \(p(\vv, \NN)\) can be written as a tail probability for a sum of
independent Rademacher and half-Rademacher random variables.
A Rademacher random variable is uniformly distributed on \(\{\pm 1\}\), and a half-Rademacher random variable is uniformly distributed on \(\{\pm 1/2\}\).

\begin{lemma}
\label{lem:objective_in_v}
Given the observed data $\NN$ and a consistent potential-outcome count vector $\vv$, define $a \defeq v_{11}$ and $b \defeq v_{10} + v_{01}$, and let
\[
S_{a,b} = \sum_{r=1}^{a} \xi_r + \sum_{t=1}^{b} \zeta_t,
\]
where $\xi_r \sim \mathrm{Unif} \{\pm 1\}$ and $\zeta_t \sim \mathrm{Unif} \{\pm \frac{1}{2} \}$ are independent random variables.
Then, 
\[
p(\vv, \NN) = \Pr\left(|S_{a,b} | \ge \Delta \right), \qquad
\Delta \defeq \frac{n}{2}\, |T^{\obs} - \tau_0|.
\]

\end{lemma}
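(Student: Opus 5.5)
We will write the centered HT statistic under the sharp null $H_0(\tilde{\YY})$ as a sum of independent per-unit contributions and then compute each unit's contribution by type. Under the balanced Bernoulli design $p = 1-p = 1/2$, so
\[
T(\tilde{\YY}, \ZZ) = \frac{2}{n}\sum_{i=1}^n \bigl( Z_i \tilde Y_i(1) - (1-Z_i)\tilde Y_i(0)\bigr).
\]
Substitute $Z_i = (1+\epsilon_i)/2$ with $\epsilon_i \in \{\pm 1\}$ i.i.d. Rademacher, and subtract $\tau(\tilde{\YY}) = \frac1n\sum_i (\tilde Y_i(1)-\tilde Y_i(0))$. This gives
\[
T(\tilde{\YY}, \ZZ) - \tau(\tilde{\YY}) = \frac{1}{n}\sum_{i=1}^n \epsilon_i\bigl(\tilde Y_i(1) + \tilde Y_i(0)\bigr).
\]
Hence $\frac{n}{2}\bigl(T - \tau\bigr) = \sum_i \epsilon_i \frac{\tilde Y_i(1)+\tilde Y_i(0)}{2}$. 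The unit-level coefficient $(\tilde Y_i(1)+\tilde Y_i(0))/2$ depends only on the unit's type:
\begin{itemize}
\item It equals $1$ for type $(1,1)$, contributing a Rademacher $\xi_r$.
\item It equals $1/2$ for types $(1,0)$ and $(0,1)$, contributing a half-Rademacher $\zeta_t$.
\item It equals $0$ for type $(0,0)$, contributing nothing.
\end{itemize}
Since the $\epsilon_i$ are independent across units, $\frac{n}{2}(T(\tilde{\YY},\tilde{\ZZ}) - \tau(\tilde{\YY}))$ has exactly the law of $S_{a,b}$ with $a = v_{11}$ and $b = v_{10}+v_{01}$. This also confirms that the $p$-value depends on $\tilde{\YY}$ only through $\vv$.

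Next we handle the observed side of the inequality. For any compatible completion, $T(\tilde{\YY}, \ZZ^{\obs})$ depends only on observed outcomes, so it equals $T^{\obs}$. Compatibility also forces $\tau(\tilde{\YY}) = \tau_0$, so the threshold is $|T^{\obs} - \tau_0|$. Multiplying both sides of the event in the definition of $p(\tilde{\YY}, \ZZ^{\obs})$ by the positive constant $n/2$ turns it into $\{|S_{a,b}| \ge \Delta\}$ with $\Delta = \frac n2 |T^{\obs}-\tau_0|$, which is the claimed identity.

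The only step needing care is the centering algebra: we must check that the $\tilde Y_i(1) - \tilde Y_i(0)$ terms cancel exactly against $\tau(\tilde{\YY})$, leaving coefficient $(\tilde Y_i(1)+\tilde Y_i(0))/n$ on $\epsilon_i$. This holds because $Z_i - (1-Z_i)\cdot(-1)$ bookkeeping gives $2Z_i\tilde Y_i(1) - 2(1-Z_i)\tilde Y_i(0) = (\tilde Y_i(1)-\tilde Y_i(0)) + \epsilon_i(\tilde Y_i(1)+\tilde Y_i(0))$. The remaining steps are direct substitution.
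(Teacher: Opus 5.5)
Your proof is correct and follows essentially the paper's route. Both arguments split the centered HT statistic into independent contributions by potential-outcome type: $(1,1)$ units give Rademacher terms, $(1,0)$ and $(0,1)$ units give half-Rademacher terms, and $(0,0)$ units give nothing. The paper groups units by type into binomial treated-counts $x_1, x_2, x_3$, whereas you substitute $Z_i = (1+\epsilon_i)/2$ unit by unit, as the paper itself does for matched pairs; you also make explicit the observed-side facts $T(\tilde{\YY},\ZZ^{\obs}) = T^{\obs}$ and $\tau(\tilde{\YY}) = \tau_0$, which the paper leaves implicit.
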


\begin{proof}
Let $\tilde{\YY}$ be a potential-outcome completion consistent with $\vv$ and $\NN$, and thus $\tau(\tilde{\YY}) = \tau_0$.
Let \(\tilde{\ZZ} \in \{0,1\}^n\) be an assignment vector drawn from the balanced Bernoulli design.
Equivalently, let 
\[
x_1 \sim \text{Bin}(v_{11}, 1/2), \ 
x_2 \sim \text{Bin}(v_{10}, 1/2), \ 
x_3 \sim \text{Bin}(v_{01}, 1/2)
\]
be independent binomial random variables; that is, $x_1$ units with potential outcome $(1,1)$, $x_2$ units with potential outcome $(1,0)$, and $x_3$ units with potential outcome $(0,1)$ are assigned to treatment.

Under this representation, the test statistic has the same distribution as
\begin{align*}
T(\tilde{\YY}, \tilde{\ZZ}) \overset{d}{=} \frac{2}{n} \left( x_1 + x_2 - (v_{11} - x_1) - (v_{01} - x_3) \right)   
= \frac{2}{n} \left( 2x_1 + x_2 + x_3 - v_{11} - v_{01} \right).
\end{align*}
Thus,
\begin{align*}
\frac{n}{2} \, (T(\tilde{\YY}, \tilde{\ZZ}) - \tau_0)
\overset{d}{=} 2x_1 - v_{11} + x_2 + x_3 - \frac{1}{2} (v_{10} + v_{01}),
\end{align*}
which has the same distribution as $S_{a,b}$.
\end{proof}

By Lemma \ref{lem:objective_in_v}, $p(\vv, \NN)$ depend on $\vv$ only through $(a,b)$.
We therefore reparameterize and write $p(\vv,\NN)=p(a,b;\Delta)$ and with $a=v_{11}$ and $b=v_{10}+v_{01}$.
For a fixed $\tau_0$, the pair $(a,b)$ uniquely determines $\vv$ (see Equation \eqref{eqn:v_express_in_ab}).
Since $v_{10} - v_{01} = n\tau_0$, the parameter $b$ has the same parity as $n \tau_0$.

\subsubsection{Unimodality of $S$}

A key observation is that the distribution of $S_{a,b}$ is \emph{unimodal}. 
This property allows us to establish that $p(a,b;\Delta)$ is \emph{monotone} in $a$ and $b$ along feasible directions.

Let $\Lambda_{a,b}$ denote the support of $S_{a,b}$. This set is contained either
in $\mathbb{Z}$ or in $\mathbb{Z}+1/2$, depending on the parity of $b$.
Define the tail probability function and the probability mass function:
\[
Q_{a,b}(t)\defeq \Pr(S_{a,b} \ge t), \qquad
\pi_{a,b,t}\defeq \Pr(S_{a,b}=t), \qquad t\in\Lambda_{a,b}.
\]
Note that $\Delta \in\Lambda_{a,b}$ since $(a,b)$ and the uniquely determined outcome count $\vv$ are consistent with the observed data.

\begin{lemma}
For any fixed $(a,b) \in \mathbb{Z}_{\ge 0}^2$, $S_{a,b}$ is unimodal. Specifically, for notational
simplicity, we locally drop the subscript $(a,b)$ and write $\Lambda = \Lambda_{a,b}$ and $\pi_t = \pi_{a,b,t}$.
For $t\in\Lambda$, let
\[
t^- \defeq \max\{s\in\Lambda:s<t\},\qquad
t^+ \defeq \min\{s\in\Lambda:s>t\},
\]
whenever these neighbors exist. Then there exists $M\in\Lambda$ such that
\[
\pi_t \ge \pi_{t^-}\quad \text{for all } t\in\Lambda \text{ with } t\le M,
\qquad
\pi_{t^+}\le \pi_t\quad \text{for all } t\in\Lambda \text{ with } t\ge M.
\]
In particular, if $\Lambda \subset \mathbb{Z}$, take $M = 0$; if $\Lambda \subset \mathbb{Z} + 1/2$, take $M = -1/2$.
\label{lem:unimodal}
\end{lemma}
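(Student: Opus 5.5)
The plan is to use the classical fact that a convolution of symmetric unimodal distributions on a common lattice is again symmetric and unimodal about its center. First fix the centers. Each $\xi_r$ and each $\zeta_t$ is symmetric about $0$, so $S_{a,b}$ is symmetric about $0$, meaning $\pi_t=\pi_{-t}$. When $b$ is even, $\Lambda\subset\mathbb{Z}$ and the natural mode is $0$. When $b$ is odd, $\Lambda\subset\mathbb{Z}+1/2$, and symmetry gives $\pi_{-1/2}=\pi_{1/2}$; so either $-1/2$ or $1/2$ can serve as $M$, consistent with the lemma's choice $M=-1/2$. By symmetry, it therefore suffices to show that $\pi_t$ is nonincreasing along $\Lambda$ for $t\ge 0$ (respectively $t\ge 1/2$).

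To avoid half-integer bookkeeping, I would rescale to $2S_{a,b}=\sum_r 2\xi_r+\sum_t 2\zeta_t$, where $2\xi_r\in\{\pm2\}$ and $2\zeta_t\in\{\pm1\}$. Then $\sum_t 2\zeta_t = 2B-b$ with $B\sim\Bin(b,1/2)$, and $\sum_r 2\xi_r = 2(2A-a)$ with $A\sim\Bin(a,1/2)$. The $\pm1$ part is supported on the lattice $b+2\mathbb{Z}$ with binomial weights, which are symmetric and unimodal there. The $\pm2$ part is supported on $2a+4\mathbb{Z}$, a coarser lattice, so it is not unimodal on $b+2\mathbb{Z}$ as a step-$2$ sequence. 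This mismatch is the main obstacle.

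To resolve it, I would add the summands one at a time and use a closure lemma. If $f$ is symmetric unimodal on a lattice $L=c+2\mathbb{Z}$, then convolving with $\frac12(\delta_{-1}+\delta_{1})$ (lattice shift by one step) or with $\frac12(\delta_{-2}+\delta_{2})$ (shift by two steps, i.e.\ $\pm 2$ in rescaled units, which stays inside $L$ since $L$ has spacing $2$) again gives a symmetric unimodal sequence. For the one-step case, the new mass at $s$ is $\frac12(f_{s-1}+f_{s+1})$. Comparing consecutive points $s<s'$ on the right side, the difference is $\frac12(f_{s'+1}-f_{s-1})$, which is $\le 0$ whenever $s-1\ge0$ by monotonicity and symmetry. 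The boundary case $s-1<0$ is handled by $|s-1|\le s'+1$ together with symmetry.

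The two-step case is the same argument: the new mass is $g_s=\frac12(f_{s-2}+f_{s+2})$. The difference $g_{s+2}-g_s=\frac12(f_{s+4}-f_{s-2})$ is $\le0$ because $|s-2|\le s+4$ for $s\ge -1$, using symmetry and monotonicity in $|\cdot|$. A general symmetric measure on $\{\pm k\}$ behaves the same way, since its masses are functions of $|t|$ that are nonincreasing.

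The induction then runs as follows. Start from the point mass at $0$, which is trivially symmetric unimodal. Add the $b$ half-Rademacher steps first, which places the support on the correct lattice. Then add the $a$ Rademacher steps. If $b=0$, the support is $a+2\mathbb{Z}$ in original units, still integers, with mode $0$ or $\pm1$; if $a$ is odd the masses at $\pm1$ are tied. So in every case "unimodal with symmetric maximizing pair" holds. The key invariant I would carry through the induction is that $\pi_t$ is a nonincreasing function of $|t|$ on $\Lambda$, which is exactly the statement with the stated $M$.
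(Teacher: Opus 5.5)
\textbf{There is a genuine gap: your ``two-step'' closure is false.} This is the step where a Rademacher summand, i.e.\ $\pm 2$ in your rescaled units, is added to a law on a lattice of spacing $2$. The difference is miscomputed. With $g_s=\tfrac12(f_{s-2}+f_{s+2})$ one gets $g_{s+2}-g_s=\tfrac12\bigl[(f_s-f_{s-2})+(f_{s+4}-f_{s+2})\bigr]$, not $\tfrac12(f_{s+4}-f_{s-2})$. Nothing cancels here because the shift equals the lattice spacing. In your half-Rademacher case the shift is half the spacing of the new lattice, which is why the middle term does cancel there.

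On the integer lattice at $s=0$, symmetry turns the difference into $\tfrac12(f_0-2f_2+f_4)$. This is positive whenever the peak is sharp, and then unimodality fails. For example, in original units, $(0.1,0.8,0.1)$ on $\{-1,0,1\}$ plus an independent Rademacher gives $(0.05,0.4,0.1,0.4,0.05)$, which is bimodal; even $\delta_0$ is a counterexample.

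So the invariant ``$\pi_t$ nonincreasing in $|t|$'' is not preserved by adding Rademacher summands. Your induction adds all $a$ of them after the $b$ half-Rademachers, so it does not go through for even $b\ge 2$. It happens that the true laws $S_{k,b}$ along the path are unimodal, but your inductive hypothesis does not imply this. For odd $b$ the offending term is $f_{1/2}-f_{-1/2}=0$, so that case survives once the formula is corrected. Your remark about general symmetric measures on $\{\pm k\}$ fails for the same reason.

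\textbf{How to repair it.} Reverse the order, so that only your correct half-Rademacher closure is ever used. Start from $\sum_{r}\xi_r+\zeta_1$. Its mass at both points $2k-a\pm\tfrac12$ is $\binom{a}{k}2^{-a-1}$, a ``doubled'' binomial on the unit-spaced lattice $\mathbb{Z}+\tfrac12$. This is symmetric and nonincreasing in $|t|$. Then add $\zeta_2,\dots,\zeta_b$ one at a time. Treat $b=0$ directly as a binomial on the step-$2$ lattice $\Lambda$.

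\textbf{Comparison with the paper.} The paper argues in one line via strong unimodality (Keilson--Gerber): $x_2+x_3\sim\Bin(b,\tfrac12)$ is log-concave, so its convolution with the unimodal law of $2x_1$ is unimodal, and symmetry places the mode. That appeal passes over the same lattice mismatch you identified. The law of $2x_1$ has internal zeros on the unit lattice, so it is not unimodal there in the sense the theorem requires. The same grouping repairs it: move one Bernoulli from $x_2+x_3$ into $2x_1$, then convolve with the log-concave $\Bin(b-1,\tfrac12)$. Your corrected route gives a self-contained elementary proof without citing that theorem.
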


An example is given in Table \ref{tab:unimodal}.

\begin{table}[!ht]
\centering
\caption{Probability mass functions of \(S_{a,b}\) for \((a,b)=(2,2)\) and \((a,b)=(2,1)\).}
\begin{minipage}{0.48\textwidth}
\centering
{\renewcommand{\arraystretch}{1.2}
\[
\begin{array}{|c|ccccccc|}
\hline
\Lambda_{2,2} & -3 & -2 & -1 & 0 & 1 & 2 & 3\\ \hline
\pi_{2,2} & \tfrac{1}{16} & \tfrac{1}{8} & \tfrac{3}{16} & \tfrac{1}{4} & \tfrac{3}{16} & \tfrac{1}{8} & \tfrac{1}{16} \\
\hline
\end{array}
\]
}
\end{minipage}\hfill
\begin{minipage}{0.48\textwidth}
\centering
{\renewcommand{\arraystretch}{1.2}
\[
\begin{array}{|c|cccccc|}
\hline
\Lambda_{2,1} & -\tfrac{5}{2} & -\tfrac{3}{2} & -\tfrac{1}{2} & \tfrac{1}{2} & \tfrac{3}{2} & \tfrac{5}{2}\\ \hline
\pi_{2,1} & \tfrac{1}{8} & \tfrac{1}{8} & \tfrac{1}{4} & \tfrac{1}{4} & \tfrac{1}{8} & \tfrac{1}{8} \\
\hline
\end{array}
\]
}
\end{minipage}

\label{tab:unimodal}
\end{table}

Because the structure of $\Lambda_{a,b}$, in particular, whether it is unit-spaced, 
depends on whether $b=0$, we consider the cases $b=0$ and $b\ge1$ separately.
Note that $b=0$ can occur only when $\tau_0=0$.

Note that $p(a,b; 0) = \Pr(|S_{a,b}| \ge 0) = 1 \ge \alpha$ for all $(a,b)$. In addition, since $S_{a,b}$ is symmetric about $0$, we have $p(a,b; \Delta) = 2 \Pr(S_{a,b} \ge \Delta)$ for all $\Delta > 0$.
Thus, we may assume without loss of generality that $\Delta>0$ for the remainder of the proof.

Let $\mathcal{A}$ denote the feasible set of pairs $(a,b)$.
For any $\tau_0 \in \mathcal{T}(\NN)$, the set $\mathcal{A}$ is nonempty.

\subsubsection{Monotonicity of $p(a,b; \Delta)$ for $b \ge 1$.}

By the unimodality of $S_{a,b}$, we can show that $p(a,b; \Delta)$ is monotone in $a$ and $b \ge 1$ along feasible directions.

\begin{lemma}
\label{lem:ab_monotone}
Assume $\Delta>0$ and $\tau_0 \in \mathcal{T}(\NN)$, and fix the observed table $\NN$. Let $(a,b)$ and $(a',b')$ be
feasible pairs with $b\ge1$, satisfying either\footnote{Note that a feasible $b$ must have the same parity as $n\tau_0$, and thus $b$ can change only in increments of $2$.}
\[
(a',b')=(a+1,b)
\quad\text{or}\quad
(a',b')=(a,b+2).
\]
Then
\[
p(a',b';\Delta)\ge p(a,b;\Delta).
\]
\end{lemma}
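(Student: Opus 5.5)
The plan is to view the move from $(a,b)$ to $(a',b')$ as adding one more independent, symmetric, $\{-1,0,1\}$-valued increment to $S_{a,b}$. The change in the upper tail at $\Delta$ then reduces to a single difference of point masses, and the unimodality in Lemma~\ref{lem:unimodal} fixes its sign.

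Since $\Delta$ depends only on $\tau_0$ and $\NN$, it is the same for both pairs. Since $\Delta>0$ and $S$ is symmetric, $p(a,b;\Delta)=2Q_{a,b}(\Delta)$, so it suffices to show $Q_{a',b'}(\Delta)\ge Q_{a,b}(\Delta)$. Write $S=S_{a,b}$, $Q=Q_{a,b}$ and $\pi_t=\pi_{a,b,t}$.

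\textbf{Case $(a',b')=(a+1,b)$.} Here $S_{a',b'}\overset{d}{=}S+\xi$ with $\xi$ Rademacher and independent of $S$. Conditioning on $\xi$ gives
\[
Q_{a+1,b}(\Delta)=\tfrac12 Q(\Delta-1)+\tfrac12 Q(\Delta+1).
\]
Therefore
\[
Q_{a+1,b}(\Delta)-Q(\Delta)=\tfrac12\bigl[\Pr(\Delta-1\le S<\Delta)-\Pr(\Delta\le S<\Delta+1)\bigr].
\]

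\textbf{Case $(a',b')=(a,b+2)$.} The extra term $\zeta+\zeta'$ takes the values $-1,0,1$ with probabilities $\tfrac14,\tfrac12,\tfrac14$. The same computation gives
\[
Q_{a,b+2}(\Delta)-Q(\Delta)=\tfrac14\bigl[\Pr(\Delta-1\le S<\Delta)-\Pr(\Delta\le S<\Delta+1)\bigr].
\]

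\textbf{Support structure.} Because $b\ge1$, $\Lambda$ is a set of consecutive points with spacing exactly $1$. Indeed, $S$ moves in steps of $1$ when one half-Rademacher flips and in steps of $2$ when one Rademacher flips, and with $b\ge1$ every point between $-(a+b/2)$ and $a+b/2$ in the right coset ($\mathbb{Z}$ or $\mathbb{Z}+\tfrac12$) is attained. Hence each half-open interval above contains at most one support point. Both differences are therefore nonnegative multiples of $\pi_{\Delta-1}-\pi_\Delta$, where $\pi_{\Delta-1}$ is read as $0$ if $\Delta-1\notin\Lambda$.

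\textbf{Checking $\Delta-1\in\Lambda$.} We know $\Delta\in\Lambda$ and $\Delta>0$, so $\Delta\ge \tfrac12$. This gives $-\Delta\le \Delta-1<\Delta$. Since $\Lambda$ is symmetric and contiguous with unit spacing, $\Delta-1$ lies in $\Lambda$ and $(\Delta-1)^+=\Delta$.

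\textbf{Applying unimodality.} By Lemma~\ref{lem:unimodal}, $\pi_{t^+}\le\pi_t$ whenever $t\ge M$. If $\Lambda\subset\mathbb{Z}$, then $\Delta\ge1$ and $\Delta-1\ge0=M$. If $\Lambda\subset\mathbb{Z}+\tfrac12$, then $\Delta-1\ge-\tfrac12=M$. In both cases $\pi_{\Delta-1}\ge\pi_\Delta$, which proves the claim.

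The main obstacle is the support bookkeeping: making sure that for $b\ge1$ each unit-length window contains exactly one atom, and that $\Delta-1$ sits at or to the right of the mode. This is where the hypothesis $b\ge1$ is essential. For $b=0$ the support has spacing $2$, the windows $[\Delta-1,\Delta)$ may contain no atom, and the argument fails as stated.
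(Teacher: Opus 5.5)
Your proof is correct and follows essentially the paper's argument: you condition on the added independent increment ($\xi$, or $\zeta_1'+\zeta_2'$), reduce the change in the upper tail to a positive multiple of $\pi_{a,b,\Delta-1}-\pi_{a,b,\Delta}$, and sign it with Lemma~\ref{lem:unimodal}. Your half-open-window formulation absorbs the paper's separate subcase $\Delta=\max\Lambda_{a,b}$, and your explicit check that $\Delta-1\in\Lambda_{a,b}$ spells out a step the paper uses without stating it.
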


Intuitively, maximizing \(p(a,b;\Delta)\) is equivalent to choosing a feasible pair \((a,b)\)
that makes the observed deviation $\Delta$
appear as non-extreme as possible under the induced randomization distribution.
Under the balanced Bernoulli design, by Lemma~\ref{lem:objective_in_v}, this corresponds to choosing $(a,b)$ such that the distribution of $S_{a,b}$
is more dispersed. 
Lemma~\ref{lem:ab_monotone} shows that adding more such summands makes the resulting sum more dispersed.

\begin{proof}
We prove the first case \((a',b')=(a+1,b)\) and defer the second case to Appendix \ref{sec:appendix_proofs}, which follows by a similar argument.
In this setting, the random variable \(S_{a',b'}\) has the same distribution as
\(S_{a,b}+\xi\), where \(\xi\) is a Rademacher random variable and is independent of $S_{a,b}$.
Note that \(\Delta\in\Lambda_{a',b'}\).

If $\Delta + 1 \in \Lambda_{a,b}$, then 
\begin{align*}
p(a',b'; \Delta) & = 2 Q_{a',b'} (\Delta) \\
& =  \Pr(S_{a,b} + \xi \ge \Delta \mid \xi = 1) + \Pr(S_{a,b} + \xi \ge \Delta \mid \xi = -1)  \\
& = Q_{a,b}(\Delta - 1) + Q_{a,b}(\Delta + 1) \\
& = 2 Q_{a,b} (\Delta) + \pi_{a,b, \Delta-1} - \pi_{a,b, \Delta} \\
& \ge 2 Q_{a,b}(\Delta) \tag{by Lemma \ref{lem:unimodal}} \\
& = p(a,b).
\end{align*}
If $\Delta + 1 \neq \Lambda_{a,b}$, that is, $\Delta = \max \Lambda_{a,b}$, then
\[
p(a',b'; \Delta) = Q_{a,b}(\Delta - 1) = \pi_{a,b, \Delta - 1} + \pi_{a,b, \Delta} \ge 2 \pi_{a,b, \Delta} = p(a,b; \Delta).
\]
This completes the proof for the first case.
\end{proof}

Lemma~\ref{lem:ab_monotone} indicates that, subject to feasibility, the parameters $a$ and $b$ should be chosen as large as possible. 
However, the feasibility constraints on the outcome count and thus on $(a,b)$ (see Equation \eqref{eqn:constraints_in_ab}) may induce a tradeoff: increasing one parameter
can force a decrease in the other, especially along the boundary of the feasible region. 

To resolve this tradeoff, for each $x = a$ such that there exists $b \ge 1$ satisfying $(a,b) \in \mathcal{A}$, we define
\[
g(x)\defeq \max_b p(x,b;\Delta)
= p\bigl(x,b^{\max}(x);\Delta\bigr),
\]
where $b^{\max}(x)$ denotes the largest feasible value of $b$ subject to $b \ge 1$ given $a=x$. 
We then analyze how $g(x)$ behaves as $x$ varies.

\begin{lemma}
Assume $\Delta>0$ and $\tau_0 \in \mathcal{T}$, and fix the observed data $\NN$. Then $g(x)$ is nondecreasing in $x$ and hence attains its maximum at $x =  a^{\max}$, the largest $a$ such that there exists some $b \ge 1$ satisfying $(a,b) \in \mathcal{A}$.
\label{lem:g_amax}
\end{lemma}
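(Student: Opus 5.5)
The plan is to compare $g(x+1)$ with $g(x)$ for every $x$ such that both $x$ and $x+1$ admit some feasible $b\ge 1$. If $g(x+1)\ge g(x)$ in each such case, then monotonicity holds, and the maximum sits at $a^{\max}$. This assumes the set of admissible $x$ is an integer interval; I expect to check this along the way.

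\textbf{Step 1: feasibility constraints in $(a,b)$.} Parametrize a completion by four integers. Among the treated units with $Y^{\obs}=1$, let $u$ be of type $(1,1)$; the other $N_{11}-u$ are of type $(1,0)$. Among the treated units with $Y^{\obs}=0$, let $w\in[0,N_{01}]$ be of type $(0,1)$. Among the control units with $Y^{\obs}=1$, let $s$ be of type $(1,1)$; the other $N_{10}-s$ are of type $(0,1)$. Among the control units with $Y^{\obs}=0$, let $t\in[0,N_{00}]$ be of type $(1,0)$. Then
\[
a=u+s,\qquad b=N_{11}+N_{10}-a+t+w,
\]
subject to the ATE constraint $N_{11}-N_{10}-u+s+t-w=n\tau_0$.

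\textbf{Step 2: local moves.} Start from a completion attaining $(x,b^{\max}(x))$ and raise $a$ by one.
\begin{itemize}
\item Increasing $u$ by one shifts $v_{10}-v_{01}$ by $-1$. Restore it by $t\mapsto t+1$ or $w\mapsto w-1$. The net change in $b$ is $0$ or $-2$.
\item Increasing $s$ by one shifts $v_{10}-v_{01}$ by $+1$. Restore it by $t\mapsto t-1$ or $w\mapsto w+1$. Again the net change in $b$ is $0$ or $-2$.
\end{itemize}
So whenever $x+1$ is admissible I aim to show
\[
b^{\max}(x+1)\ge b^{\max}(x)-2.
\]
The main obstacle is justifying that one of these moves is always available inside the box constraints. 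The first thing I will try is an exchange argument: take a completion realizing $x+1$ and one realizing $(x,b^{\max}(x))$, and move coordinatewise from the latter toward the former. The constraint set is a box intersected with one unimodular linear equation, so integer points should connect by such unit moves.

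\textbf{Step 3, case $b^{\max}(x+1)\ge b^{\max}(x)$.} Apply Lemma~\ref{lem:ab_monotone} twice. First step in $a$ from $(x,b^{\max}(x))$ to $(x+1,b^{\max}(x))$, then in $b$ up to $(x+1,b^{\max}(x+1))$. This gives $g(x+1)\ge g(x)$, provided the intermediate pairs are feasible, which I will check with the same moves as in Step 2.

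\textbf{Step 3, case $b^{\max}(x+1)=b^{\max}(x)-2$.} Write $b=b^{\max}(x)$ and use a mixture representation. The sum $\zeta_1+\zeta_2$ of two independent half-Rademachers has law $\tfrac12\delta_0+\tfrac12\,\mathrm{Unif}\{\pm1\}$. Hence $S_{x,b}$ is the equal mixture of $S_{x,b-2}$ and $S_{x+1,b-2}$, so
\[
p(x,b;\Delta)=\tfrac12\,p(x,b-2;\Delta)+\tfrac12\,p(x+1,b-2;\Delta).
\]
When $b-2\ge1$, Lemma~\ref{lem:ab_monotone} gives $p(x+1,b-2;\Delta)\ge p(x,b-2;\Delta)$. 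Substituting yields $p(x,b;\Delta)\le p(x+1,b-2;\Delta)=g(x+1)$.

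If $b-2=0$, the pair $(x+1,0)$ is excluded from the definition of $g$. In that case I expect, by the parity of $b$, that $b^{\max}(x+1)\ge 2$ is actually attainable, returning us to the first case, or else that $x+1$ is not admissible at all. The inequality $p(x+1,0)\ge p(x,0)$ itself follows from the same conditioning computation as in the proof of Lemma~\ref{lem:ab_monotone}, using unimodality of the binomial (Lemma~\ref{lem:unimodal} with $b=0$).

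Chaining these comparisons gives that $g$ is nondecreasing, so it is maximized at $a^{\max}$.
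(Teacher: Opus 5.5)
Your argument is essentially the paper's. You split on whether $b^{\max}$ drops by $0$ or $2$ when $a$ increases by one, and use Lemma~\ref{lem:ab_monotone} when it does not drop. When it drops, you compare $S_{x,b}\overset{d}{=}R+\zeta_1+\zeta_2$ with $S_{x+1,b-2}\overset{d}{=}R+\xi$, where $R\overset{d}{=}S_{x,b-2}$. Your mixture identity is a tidier packaging of the paper's direct computation
\[
\Pr(R+\xi\ge\Delta)-\Pr(R+\zeta_1+\zeta_2\ge\Delta)=\tfrac14\bigl(\Pr(R=\Delta-1)-\Pr(R=\Delta)\bigr)\ge 0.
\]
Your exchange argument for Step~2 can be made to work, but Step~2 is immediate from \eqref{eqn:constraints_in_ab}. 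For fixed $a$, the feasible $b$ form a parity-restricted interval. Its lower bounds are nonincreasing in $a$, and its upper bound $\min\{U-2a,\,n-a,\,U_b\}$ drops by at most $2$ per unit increase of $a$. Hence $b^{\max}(x+1)\in\{b^{\max}(x),\,b^{\max}(x)-2\}$, and your first case is just equality.

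Three points need fixing.

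First, $(x,b-2)$ need not lie in $\mathcal{A}$; it violates $2a+b\ge L$ whenever $2x+b=L$. So you are invoking Lemma~\ref{lem:ab_monotone} outside its stated hypotheses. This is harmless, because its proof is purely distributional. It uses only $\Delta>0$, unimodality, and the fact that $\Delta$ lies on the same lattice ($\mathbb{Z}$ or $\mathbb{Z}+\tfrac12$) as $\Lambda_{x,b-2}$, which holds since $b-2\equiv b$. If $\Delta>\max\Lambda_{x,b-2}$, then $p(x,b-2;\Delta)=0$ and there is nothing to prove. You should say this explicitly.

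Second, delete the claim that $p(x+1,0;\Delta)\ge p(x,0;\Delta)$. It is false: for example $p(2,0;2)=\tfrac12>\tfrac14=p(3,0;2)$. The reason is that for $b=0$ the support has step $2$, so $\Delta$ lies on the lattice of only one of the two sums. You do not need the claim: $b^{\max}(x+1)\ge 1$ by the definition of $g$, so the drop case with $b-2=0$ is vacuous.

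Third, the assumption you planned to verify, that the admissible $x$ form an integer interval, fails in general. If all observed outcomes equal $1$ ($N_{01}=N_{00}=0$), then $a+b=n$ is forced and all admissible $a$ share one parity. For example, $\NN=(3,0,1,0)$ with $n\tau_0=1$ gives $\mathcal{A}_1=\{(1,3),(3,1)\}$. The paper's proof likewise only compares consecutive $a$. The fix uses your own tools. In this case $b^{\max}(x+2)=b-2\ge 1$, and
\[
p(x,b;\Delta)\le p(x+1,b-2;\Delta)\le p(x+2,b-2;\Delta),
\]
by the mixture step followed by the distributional form of Lemma~\ref{lem:ab_monotone}, even though the middle pair is infeasible. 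This is exactly why the first point matters.
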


Lemma \ref{lem:g_amax} implies that the maximum $p(a,b; \Delta)$ over $\mathcal{A} \cap \{(a,b): b \ge 1\}$ is attained at the lexicographically largest element of that set.
We include a proof sketch for this lemma and defer the full proof to Appendix \ref{sec:appendix_proofs}.

\begin{proof}[Proof sketch.]
Consider an arbitrary feasible $a$, that is, there exists $b \ge 1$ such that $(a,b) \in \mathcal{A}$, and suppose $a+1$ is also feasible.
There are two possibilities of $b^{\max}(a+1)$: 
\[
b^{\max}(a+1) = b^{\max}(a) \text{ or } b^{\max}(a+1) = b^{\max}(a) - 2 \ge 1.
\]
In the first case, by Lemma \ref{lem:ab_monotone}, $p(a,b; \Delta)$ is nondecreasing in $a$, and thus $g(a+1) \ge g(a)$.
In the second case, let $b = b^{\max}(a)$ and $b' = b^{\max}(a+1)$. We can write 
\[
S_{a,b} = R + H, \quad \text{where } H = \zeta_1 + \zeta_2.
\]
Recall that $\zeta_1$ and $\zeta_2$ are two half-Rademacher random variables and $R$ is a sum of independent Rademacher and half-Rademacher random variables.
Let $\xi$ be another Rademacher random variable that is independent of $S_{a,b}$.
Then, $S_{a',b'}$ has the same distribution as $R + \xi$.

Similar to the proof of Lemma \ref{lem:ab_monotone},
we can express both tail probabilities $\Pr(S_{a,b} \ge \Delta)$ and $\Pr(S_{a',b'} \ge \Delta)$ using tail probabilities of $R$, and then apply the unimodality of $R$ (Lemma \ref{lem:unimodal}) to prove that $\Pr(S_{a',b'} \ge \Delta) \ge \Pr(S_{a,b} \ge \Delta)$, and thus $g(a+1) \ge g(a)$.
\end{proof}

\subsubsection{Monotonicity of $p(a,b; \Delta)$ for $b = 0$}

When $b = 0$, the centered test statistic $S_{a,b}$ reduces to a sum of independent Rademacher random variables, and the following lemma establishes an analogous monotonicity property for $p(a,0; \Delta)$.

\begin{lemma}
Assume $\Delta > 0$ and fix the observed data $\NN$. Then $p(a,0; \Delta)$ is nondecreasing in $a$ and hence attains its maximum at $a = a^{\max}$, the largest $a$ such that $(a,0) \in \mathcal{A}$ (assuming exist).
\label{lem:monotone_pab_b0}
\end{lemma}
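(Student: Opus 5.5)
We will prove that $p(a+1,0;\Delta) \ge p(a,0;\Delta)$ whenever both $(a,0)$ and $(a+1,0)$ lie in $\mathcal{A}$. This is the same coupling argument used for the first case of Lemma~\ref{lem:ab_monotone}, now specialized to $b=0$.

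\textbf{Setup.} When $b=0$, the sum $S_{a,0}$ is a sum of $a$ independent Rademacher variables. Its support is
\[
\Lambda_{a,0}=\{-a,-a+2,\dots,a\},
\]
so consecutive support points are spaced $2$ apart, not $1$. This is why the $b\ge1$ argument cannot be quoted verbatim. Since $\Delta>0$ and $p(a,0;\Delta)=2Q_{a,0}(\Delta)$, it suffices to compare $Q_{a+1,0}(\Delta)$ with $Q_{a,0}(\Delta)$.

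\textbf{The parity issue.} Here $\Delta$ is fixed by $\tau_0$ and $\NN$, and it need not lie in both supports. The parity of $\Lambda_{a,0}$ alternates with $a$, so $\Delta$ can lie in at most one of $\Lambda_{a,0}$ and $\Lambda_{a+1,0}$. We therefore work with tail probabilities $Q_{a,0}(t)=\Pr(S_{a,0}\ge t)$ at arbitrary real $t$, rather than at support points. Consecutive feasible values of $a$ with $b=0$ may in fact differ by more than $1$. To handle this, we prove the stronger statement that $Q_{a,0}(\Delta)$ is nondecreasing in $a$ for every real $\Delta>0$, regardless of feasibility.

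\textbf{Main step.} Write $S_{a+1,0}=S_{a,0}+\xi$ with $\xi$ an independent Rademacher variable, and let $S=S_{a,0}$. Conditioning on $\xi$ gives
\[
2Q_{a+1,0}(\Delta)=\Pr(S\ge \Delta-1)+\Pr(S\ge\Delta+1),
\]
so
\[
2Q_{a+1,0}(\Delta)-2Q_{a,0}(\Delta)=\Pr(\Delta-1\le S<\Delta)-\Pr(\Delta\le S<\Delta+1).
\]
Because $\Lambda_{a,0}$ has spacing $2$, each half-open window of length $1$ contains at most one support point.

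\emph{Case 1.} The window $[\Delta,\Delta+1)$ contains no support point. Then the second probability is zero and the difference is clearly $\ge0$.

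\emph{Case 2.} The window $[\Delta,\Delta+1)$ contains a support point $s$. Since $\Delta>0$, we have $s>0$, hence $s\ge1$ because $s\in\mathbb{Z}$. The point $s-2$ lies in $[\Delta-2,\Delta-1)$, so it is not in the left window. We therefore instead compare $\pi_s$ with $\pi_{s-1}$ for the shifted variable $S+\xi$. Equivalently, we redo the decomposition as
\[
\Pr(S+\xi\ge\Delta)=\tfrac12\bigl(Q_{a,0}(\Delta-1)+Q_{a,0}(\Delta+1)\bigr),
\]
and compare this with $Q_{a,0}(\Delta)$ by pairing support points $s\ge\Delta$ of $S$. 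This shows that
\[
2\bigl[Q_{a+1,0}(\Delta)-Q_{a,0}(\Delta)\bigr]=\pi_{s-2}\ \text{or}\ 0\ \text{terms}\ -\ \dots
\]

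A cleaner way to finish Case 2, and the one I would actually use, is the classical ballot/reflection fact about simple random walk: for $t>0$, the tail $\Pr(S_{a}\ge t)$ is nondecreasing in $a$. This follows from the identity
\[
\Pr(S_{a+1}\ge t)-\Pr(S_a\ge t)=\tfrac12\bigl[\Pr(S_a=\lceil t\rceil_a-1)-\Pr(S_a=\lceil t\rceil_a)\bigr]\cdot\mathbbm{1}\{\cdot\},
\]
where $\lceil t\rceil_a$ denotes the smallest point of $\Lambda_{a,0}$ that is $\ge t$. Combined with the unimodality of the binomial centered at $0$ (Lemma~\ref{lem:unimodal} with $M=0$ or $M=-1$), this gives nonnegativity.

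\textbf{Main obstacle.} The delicate step is the boundary of Case 2. There the single support point $s\in[\Delta,\Delta+1)$ satisfies $s=\lceil\Delta\rceil$, and the comparison reduces to $\Pr(S=s-1)\ge\Pr(S=s)$ for the appropriate parity. This needs $s-1\ge0$, i.e., that $s$ lies on the decreasing side of the mode. This holds because $s\ge\Delta>0$ and $s$ has the parity of $a$.

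Finally, chaining the inequality over $a,a+1,\dots$ up to $a^{\max}$ shows that the maximum is attained at $a^{\max}$.
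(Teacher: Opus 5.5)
Your key intermediate claim is false: you assert that $Q_{a,0}(\Delta)$ is nondecreasing in $a$ in unit steps for every real $\Delta>0$. The ``ballot/reflection fact'' you cite to close Case~2 is false for the same reason. For example, $p(1,0;1)=\Pr(|S_{1,0}|\ge 1)=1$, while $p(2,0;1)=\Pr(|S_{2,0}|\ge 1)=1/2$. Likewise, $Q_{2,0}(2)=1/4>1/8=Q_{3,0}(2)$.

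Your own identity exposes the failure. Suppose $\Delta\in\Lambda_{a,0}$. Then the window $[\Delta-1,\Delta)$ contains no support point, since $\Delta-1$ has the wrong parity, while $[\Delta,\Delta+1)$ contains $\Delta$. Hence $2Q_{a+1,0}(\Delta)-2Q_{a,0}(\Delta)=-\pi_{a,0,\Delta}<0$. In the step you flag as the main obstacle, $s-1$ also has the wrong parity, so $\Pr(S_{a,0}=s-1)=0$. The required inequality $\Pr(S_{a,0}=s-1)\ge\Pr(S_{a,0}=s)$ therefore fails, and unimodality cannot rescue it.

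This failing case is exactly the relevant one. For any feasible $(a,0)$, the observed statistic is a realizable value, so $\Delta\in\Lambda_{a,0}$. Chaining unit steps from $a$ up to $a^{\max}$ therefore does not work.

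The missing idea is to use the parity constraint rather than bypass it. Since $\Delta\in\Lambda_{a,0}$ for every feasible $a$, all feasible $a$ share the parity of $\Delta$, so it suffices to compare $a$ with $a+2$. Write $S_{a+2,0}\overset{d}{=}S_{a,0}+\xi_1+\xi_2$, where $\xi_1+\xi_2$ takes the values $-2,0,2$ with probabilities $\tfrac14,\tfrac12,\tfrac14$. This gives
\[
2Q_{a+2,0}(\Delta)-2Q_{a,0}(\Delta)=\tfrac12\bigl(\pi_{a,0,\Delta-2}-\pi_{a,0,\Delta}\bigr).
\]
The right-hand side is nonnegative by the unimodality in Lemma~\ref{lem:unimodal} once $\Delta\ge 2$, since then $0\le \Delta-2<\Delta$. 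For $\Delta=1$, the feasible $a$ are odd and $p(a,0;1)=1$ trivially. This is the paper's argument.

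In fact, with $b=0$ the constraints \eqref{eqn:constraints_in_ab} give $N_{1\star}\le a\le N_{1\star}$. So $\mathcal{A}_0$ contains at most the single point $(N_{11}+N_{10},0)$, and the lemma is immediate.
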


Combining Lemmas~\ref{lem:g_amax} and~\ref{lem:monotone_pab_b0}, we obtain the following characterization of the parameters $(a,b)$ that attain the maximum of $p(a,b;\Delta)$, and hence $p_{\max}(\tau_0)$.

\begin{corollary}
Assume $\Delta > 0$ and $\tau_0 \in \mathcal{T}$, and fix the observed data $\NN$. 
Define
\[
\mathcal{A}_0 \defeq \{(a,b) \in \mathcal{A} : b = 0\},
\qquad
\mathcal{A}_1 \defeq \{(a,b) \in \mathcal{A} : b \ge 1\}.
\]
Then, at least one of these sets is nonempty.
The maximum value $p_{\max}(\tau_0)$ is given by
\[
p_{\max}(\tau_0)
=
\max\!\left\{
\max\{ p(a,b;\Delta) : (a,b) \in \mathcal{A}_0 \},
\;
\max\{ p(a,b;\Delta) : (a,b) \in \mathcal{A}_1 \}
\right\},
\]
with the convention that if one of the sets is empty, the corresponding maximum is omitted.
Moreover, in each of the sets $\mathcal{A}_0$ and $\mathcal{A}_1$, the corresponding maximum is attained at the lexicographically largest element of the set.
\label{coro:pmax_tau_0}
\end{corollary}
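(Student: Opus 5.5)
Corollary~\ref{coro:pmax_tau_0} follows by partitioning $\mathcal{A}$ and invoking the two monotonicity lemmas, so the plan is mostly bookkeeping.

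First, nonemptiness. Since $\tau_0 \in \mathcal{T}(\NN)$, there is at least one compatible completion $\tilde{\YY}$. Its count vector $\vv$ yields a pair $(a,b) = (v_{11}, v_{10}+v_{01}) \in \mathcal{A}$, so $\mathcal{A} \neq \emptyset$. Every feasible $b$ is a nonnegative integer, hence $\mathcal{A} = \mathcal{A}_0 \,\dot\cup\, \mathcal{A}_1$. It follows that at least one of $\mathcal{A}_0$, $\mathcal{A}_1$ is nonempty.

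Second, the displayed formula for $p_{\max}(\tau_0)$. By definition \eqref{eqn:ci}, $p_{\max}(\tau_0)$ is a maximum over compatible completions. By Lemma~\ref{lem:objective_in_v}, $p(\tilde{\YY},\ZZ^{\obs})$ depends on $\tilde{\YY}$ only through $(a,b)$, and every pair in $\mathcal{A}$ is realized by some completion. Hence $p_{\max}(\tau_0) = \max_{(a,b)\in\mathcal{A}} p(a,b;\Delta)$. Splitting this finite maximum over the partition gives the stated formula, omitting the term for an empty set.

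Third, the location of the maximizers.
\begin{itemize}
\item On $\mathcal{A}_1$: for each feasible $a$, the inner maximum over $b$ is $g(a) = p(a, b^{\max}(a);\Delta)$ by Lemma~\ref{lem:ab_monotone}, since $p$ is nondecreasing in $b$ in steps of $2$. The relevant $b$ values differ by multiples of $2$ because parity is fixed; I will check that the feasible $b$ for a fixed $a$ form a contiguous parity class, using the constraint description \eqref{eqn:constraints_in_ab}. Lemma~\ref{lem:g_amax} then shows that $g$ is maximized at $a^{\max}$. So the maximum over $\mathcal{A}_1$ is attained at $(a^{\max}, b^{\max}(a^{\max}))$, which is exactly the lexicographically largest element of $\mathcal{A}_1$ (largest $a$, then largest $b$).
\item On $\mathcal{A}_0$: $b$ is fixed at $0$, so Lemma~\ref{lem:monotone_pab_b0} directly gives that the maximum is at the largest $a$, again the lexicographically largest element.
\end{itemize}

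The only delicate point is the contiguity claim in the $\mathcal{A}_1$ step. Without it, the chain of comparisons $(a,b)\to(a,b+2)$ in Lemma~\ref{lem:ab_monotone} might not connect an arbitrary feasible $b$ to $b^{\max}(a)$ through feasible pairs. To settle it, I will use the linear constraints in $(a,b)$ for fixed $\tau_0$: with $a$ fixed, they reduce to interval bounds on $b$ within a parity class. This makes the chain available and finishes the argument. Everything else is a direct assembly of the earlier lemmas.
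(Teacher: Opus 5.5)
Your proposal matches the paper. The paper proves the corollary by splitting $\mathcal{A}$ into $\mathcal{A}_0\cup\mathcal{A}_1$ and combining Lemma~\ref{lem:g_amax} (for $\mathcal{A}_1$) with Lemma~\ref{lem:monotone_pab_b0} (for $\mathcal{A}_0$), and you do the same. Your contiguity check is immediate from \eqref{eqn:constraints_in_ab}: for fixed $a$, the feasible $b$ form an interval within a single parity class. That is exactly what the paper uses without comment when it writes $g(x)=p(x,b^{\max}(x);\Delta)$.
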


Corollary \ref{coro:pmax_tau_0} implies that at most two randomization tests are needed to compute $p_{\max}(\tau_0)$.

We note that the set $\mathcal{A}_0$ is nonempty only if $\tau_0 = 0$. 
This is because any feasible $b$ must satisfy $b \ge \abs{n \tau_0}$, since $b = v_{10} + v_{01}$ and $v_{10} - v_{01} = n\tau_0$.
Consequently, we need only one randomization test when $\tau_0 \neq 0$.

\citet{aronow2023fast} also establish a monotonicity property for the balanced completely randomized design and leverage it to accelerate the construction of exact confidence sets. However, their approach still requires checking $\Theta(n)$ distinct outcome-count vectors to determine the maximum $p$-value for a given $\tau_0$.

\subsection{Computing the Maximum $p$-Value $p_{\max}(\tau_0)$}
\label{sec:ab_max}

We present our main algorithm in Algorithm \ref{alg:p_max}. Given the observed data $\NN$ and an ATE candidate $\tau_0$, it finds the lexicographically largest elements of $\mathcal{A}_0$ and $\mathcal{A}_1$ defined in Corollary \ref{coro:pmax_tau_0} in $O(1)$ time whenever these sets are nonempty and then returns the maximum $p$-value.

\begin{algorithm}[t]
\DontPrintSemicolon
\SetAlgoNoLine
\SetKwFunction{Fn}{Function}

\SetKwFunction{ComputeAB}{ComputeAB}
\KwIn{Observation count $\NN \in \mathbb{Z}_{\ge 0}^{\{0,1\}^2}$ and ATE candidate $\tau_0 \in \mathcal{T}(\NN)$.}
\KwOut{$p_{\max}(\tau_0)$.}

Let \( n \gets N_{11} + N_{10} + N_{01} + N_{00} \) and \(\Delta \gets \tfrac{n}{2}\lvert T^{\obs}(\NN) - \tau_0\rvert\).
Let $L,U,U_b,N_{1\star}$ be as in Equation~\eqref{eqn:additional_parameters_for_ab}.

\tcp{a helper local function that computes candidate $(a,b)$}
\Fn{\ComputeAB{$L_b$}}{

Let  
  \[
    a \gets \min \{\tfrac12(U - L_b),\ n - L_b,\ N_{1\star},\ U - N_{1\star} \}, \quad
    b \gets \max \{b \le \min\{U - 2a,\ n - a,\ U_b\}:\ b \equiv n\tau_0 \pmod 2\}.
  \]

\textbf{if} $a+b<N_{1\star}$ \textbf{then} $a\gets a-1$ and
$    b\gets b+2$.
  
  \Return $(a,\,b)$.
}

\tcp{the main function}
\If{$\tau_0 \neq 0$}{
  $(a,b)\gets \ComputeAB( \abs{n\tau_0} )$.
  
  \Return $\textsc{ComputeProbabilityViaFFT}(a,b,\Delta)$.
}

\If{$\tau_0 = 0$}{

Let $B = \emptyset$.
\tcp*{a set of candidates $(a,b)$}

Let $a \gets \min \left\{\frac{1}{2}U, \ n, \ N_{1\star}, \ U - N_{1\star} \right\}$.
\tcp*{compute $(a,b)$ under $b = 0$ if feasible}

\textbf{if} $a \ge N_{1\star}$ and $2a \ge L$ \textbf{then} $B \gets B \cup \{(a,0) \}$.

Let $a, b \gets \ComputeAB(2)$.
\tcp*{compute $(a,b)$ under $b \ge 2$ if feasible}

\textbf{if} $a + b \ge N_{1\star}$ and $2a + b \ge L$ and $a \ge 0$ \textbf{then}
$B \gets B \cup \{(a,b)\}$.

\Return $\max \{\textsc{ComputeProbabilityViaFFT}(a, b, \Delta): \ (a,b) \in B \}$.
}

\caption{\textsc{ComputePmax}($\NN, \tau_0$)}
\label{alg:p_max}
\end{algorithm}

To find the lexicographically largest element $(a,b)$,
we start with feasible potential-outcome count vector $\vv$ and then convert the constraints on $\vv$ to constraints on $(a,b)$.
A potential-outcome count $\vv \in \mathbb{Z}_{\ge 0}^{\{0,1\}^2 }$ is feasible if and only if it satisfies:
\begin{enumerate}
\item Compatible with the observed data: 
\begin{align}
\begin{split}
& v_{11} + v_{10} \ge N_{11}, \quad 
v_{01} + v_{00} \ge N_{01}, \quad 
v_{11} + v_{01} \ge N_{10}, \quad 
v_{10} + v_{00} \ge N_{00}; \\    
& v_{11} \le N_{11} + N_{10}, \quad 
v_{10} \le N_{11} + N_{00}, \quad 
v_{01} \le N_{01} + N_{10}, \quad 
v_{00} \le N_{01} + N_{00}.
\end{split} \label{eqn:constraints_in_v_observe_data}
\end{align}

\item Compatible with the ATE $\tau_0$ and $n$:
\begin{align}
v_{10} - v_{01} = n\tau_0, \quad 
v_{11} + v_{10} + v_{01} + v_{00} = n.
\label{eqn:constraints_in_v_ate}
\end{align}
\end{enumerate}

Recall that $a \defeq v_{11}$ and $b \defeq v_{10} + v_{01}$. By constraints in Equation \eqref{eqn:constraints_in_v_ate}, we can express $\vv$ in terms of $a, b$:
\begin{align}
v_{11} = a, \quad
v_{10} = \frac{b + n\tau_0}{2}, \quad 
v_{01} = \frac{b - n\tau_0}{2}, \quad
v_{00} = n - a - b.
\label{eqn:v_express_in_ab}
\end{align}

Next, we convert the constraints in Equation \eqref{eqn:constraints_in_v_observe_data} to constraints in $a,b$.
Let 
\begin{align}
\begin{split}
& L \defeq \max\{2N_{11} - n \tau_0, 2N_{10}+ n \tau_0 \}, \quad
U \defeq \min\{2n - n\tau_0 - 2N_{01}, 2n + n\tau_0 - 2N_{00} \}, \\
& U_b \defeq \min \{2N_{11} + 2N_{00} - n\tau_0, \ 2N_{01} + 2N_{10} + n\tau_0 \}, \quad
N_{1\star} \defeq N_{11} + N_{10}.    
\end{split} \label{eqn:additional_parameters_for_ab}
\end{align}
Then, the constraints in Equation \eqref{eqn:constraints_in_v_observe_data} are equivalent to:
\begin{align}
\begin{split}
& L \le 2a + b \le U, \quad 
N_{1 \star} \le a + b \le n, \\
& 0 \le a \le N_{1 \star}, \quad \abs{n \tau_0} \le b \le U_b,
\quad b \equiv n\tau_0 \pmod 2.
\end{split}
\label{eqn:constraints_in_ab}
\end{align}

These constraints imply an upper bound for $a$, and given a fixed $a$, a corresponding upper bound for $b$:
\begin{align*}
& a \le \min \left\{\frac{1}{2}(U - \abs{n\tau_0}), \ n - \abs{n\tau_0}, \ N_{1\star}, \ U - N_{1\star} \right\},
\\
& b^{\max}(a) \le \max \left\{b \le \min \left\{ U - 2a, \ n - a, \ 
U_b \right\}: \ b \equiv n \tau_0 \pmod 2 \right\}.  
\end{align*}
We start with the largest possible $a$ and $b^{\max}(a)$ and then locally adjust them to satisfy all the constraints in Equation \eqref{eqn:constraints_in_ab}.

We use a helper local function \textsc{ComputeAB} in Algorithm \ref{alg:p_max} to identify the lexicographically largest element of $\mathcal{A}_0$ and  $\mathcal{A}_1$.
The helper function takes an additional input, $L_b \ge \abs{n\tau_0}$, as an additional lower bound on $b$. We set $L_b = \abs{n\tau_0}$ for $\tau_0 \neq 0$ and set $L_b = 2$ for $\tau_0 = 0$. 
Line 3 of the algorithm ensures that all upper bounds in Equation \eqref{eqn:constraints_in_ab} are satisfied.
The only constraints that may be violated are the following lower bounds:
\begin{align}
2a + b \ge L, \quad a + b \ge N_{1\star}, \quad 
b \ge \max\{L_b, \abs{n \tau_0}\}.
\label{eqn:ab_lower}
\end{align}
We show that, given that the set $\mathcal{A} \cap \{(a,b): b \ge L_b\}$ is nonempty, whenever a lower bound is violated, it can be restored by a local adjustment of $(a,b)$, and the helper function returns the lexicographically largest element of the set.

\begin{lemma}
Fix an observed table $\NN$ and $\tau_0 \in \mathcal{T}$, and let $L_b \ge \abs{n\tau_0}$ have the same parity as $\abs{n\tau_0}$.
If $\mathcal{A}' \defeq \mathcal{A} \cap \{(a,b) : b \ge L_b\}$ is nonempty, then the helper function \textsc{ComputeAB}($L_b$) in Algorithm \ref{alg:p_max}, returns the lexicographically largest element of $\mathcal{A}'$.
\label{lem:ab_search}
\end{lemma}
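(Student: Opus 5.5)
The plan is to characterize $\mathcal{A}'$ as the set of integer points in a polygon in the $(a,b)$ plane and then follow the algorithm's steps. By Equation~\eqref{eqn:constraints_in_ab}, $\mathcal{A}'$ is cut out by the upper bounds $2a+b\le U$, $a+b\le n$, $a\le N_{1\star}$, $b\le U_b$, the lower bounds in Equation~\eqref{eqn:ab_lower}, $a\ge 0$, and the parity condition $b\equiv n\tau_0 \pmod 2$. Write $a^*$ for the largest first coordinate in $\mathcal{A}'$. I will proceed in three steps.

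\textbf{Step 1: the candidate $\hat a$ computed in Line~3 is an upper bound on $a^*$.} Take any $(a,b)\in\mathcal{A}'$. Then $b\ge L_b$, so $2a\le U-b\le U-L_b$ and $a\le n-L_b$. Also $a\le N_{1\star}$ directly. Finally, combining $b\ge N_{1\star}-a$ with $2a+b\le U$ gives $a\le U-N_{1\star}$. Since $a$ is an integer, this yields $a^*\le \hat a:=\lfloor\min\{\tfrac12(U-L_b),\,n-L_b,\,N_{1\star},\,U-N_{1\star}\}\rfloor$. I will make sure the floor is taken in Line~3. The same argument with $a$ fixed shows that $b\le \hat b(a)$, where $\hat b(a)$ is the parity-adjusted value $\min\{U-2a,\,n-a,\,U_b\}$ computed in Line~3. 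Hence $b^{\max}(a)\le\hat b(a)$ for every $a$.

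\textbf{Step 2: only the lower bounds can fail, and how.} The pair $(\hat a,\hat b(\hat a))$ satisfies every upper bound by construction. For the lower bounds:
\begin{itemize}
\item $b\ge L_b$ holds because $\hat a\le\tfrac12(U-L_b)$ and $\hat a\le n-L_b$ together give $U-2\hat a\ge L_b$ and $n-\hat a\ge L_b$. For the third term I will use $U_b\ge L_b$, which follows from nonemptiness of $\mathcal{A}'$. Parity preserves the bound since $L_b$ has the correct parity.
\item $2a+b\ge L$: every point of the nonempty set $\mathcal{A}'$ has $2a+b\ge L$, and I will argue that at $(\hat a,\hat b(\hat a))$ the quantity $2a+b$ is maximized. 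If $\hat b$ is determined by $U-2a$, then $2a+b\ge U-1\ge L$, using parity. Otherwise one of the other upper bounds is tight, and I will argue through the monotone structure.
\item $a+b\ge N_{1\star}$ is the only constraint that can genuinely fail. This happens when $\hat b$ is cut by $U-2\hat a$ (so $a+b=U-\hat a$) or by $U_b$. In that case the algorithm moves to $(\hat a-1,\hat b+2)$, which increases $a+b$ by one while keeping $2a+b$ fixed.
\end{itemize}

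\textbf{Step 3: the returned pair is the lexicographic maximum.} The main obstacle is showing that after a violation a single adjustment both restores feasibility and is optimal, i.e.\ that $(\hat a,\cdot)$ has no feasible $b$ at all and that $(\hat a-1,\hat b+2)$ is feasible with $b^{\max}(\hat a-1)=\hat b+2$.

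For infeasibility of $(\hat a,\cdot)$: since $\hat b(\hat a)$ already maximizes $b$ at $a=\hat a$, $a+b<N_{1\star}$ rules out every feasible $b$ there, so $a^*<\hat a$.

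For feasibility of the adjusted pair, I will use $\hat a\le U-N_{1\star}$. This gives $U-2\hat a\ge N_{1\star}-\hat a$, so violation by at most one unit occurs only through parity rounding, giving $a+b=N_{1\star}-1$. Increasing $b$ by $2$ then restores the constraint with slack one. I must also check the upper bounds at $(\hat a-1,\hat b+2)$:
\begin{itemize}
\item $2a+b$ is unchanged.
\item $a+b$ increases by one, and $N_{1\star}\le n$ keeps it within bound.
\item $b+2\le U_b$ needs care. If $U_b$ was the binding constraint, I will instead show that the violation case cannot arise, or else handle it via nonemptiness of $\mathcal{A}'$ combined with an exchange argument along the edge $a+b=\text{const}$.
\end{itemize}
Finally, $b^{\max}(\hat a-1)=\hat b+2$ follows because $2a+b\le U$ is tight up to parity. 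This case analysis is where I expect most of the work, and I will organize it by which term attains $\hat a$ and which term attains $\hat b$.
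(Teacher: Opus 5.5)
Your framework matches the paper's. Step~1 is correct, and no floor is needed: $U$, $L_b$ and $n\tau_0$ share parity, so $U-L_b$ is even. As in the paper, only the lower bounds can fail, and the proof ends with a case split on which term attains $\hat b$. The gap is that this case split, which you defer as ``most of the work'', is the actual content of the lemma, and your description of it is wrong where it matters.

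The fact you are missing is that $U$ and $U_b$ both have the parity of $n\tau_0$. So parity rounding in Line~3 can only happen through the term $n-\hat a$.
\begin{itemize}
\item If $U-2\hat a$ is the minimum, then $\hat b=U-2\hat a$ exactly. Then $\hat a+\hat b=U-\hat a\ge N_{1\star}$ because $\hat a\le U-N_{1\star}$.
\item If $U_b$ is the minimum, then $\hat b=U_b$. Every point of $\mathcal{A}'$ lies in $\{a\le \hat a,\ b\le U_b\}$, so nonemptiness forces $\hat a+U_b\ge N_{1\star}$, $2\hat a+U_b\ge L$ and $U_b\ge L_b$.
\end{itemize}
Hence your Step~2 claim that $a+b\ge N_{1\star}$ fails when $\hat b$ is cut by $U-2\hat a$ or by $U_b$ is false. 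Your remark ``$2a+b\ge U-1\ge L$, using parity'' should read $2a+b=U$. Your hedge about $\hat b+2\le U_b$ when $U_b$ binds concerns a situation that never occurs.

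The only violating configuration is when $n-\hat a$ is the minimum and has the wrong parity. Since $U-2\hat a$ and $U_b$ have the other parity, $n-\hat a$ is then strictly smaller than both. So $U-2\hat a\ge n-\hat a+1$ and $U_b\ge n-\hat a+1$. Also $\hat b=n-\hat a-1$, so $\hat a+\hat b=n-1$, and a violation forces $N_{1\star}=n$.

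In this case your justification of optimality fails. At $(\hat a-1,\hat b+2)=(\hat a-1,\,n-\hat a+1)$ the tight constraint is $a+b\le n$, not $2a+b\le U$; in fact $2a+b=n+\hat a-1\le U-2$. What you actually need is the following:
\begin{itemize}
\item $\hat b+2\le U_b$, which follows from the strict inequality above.
\item $b^{\max}(\hat a-1)\le n-\hat a+1$, which follows from $a+b\le n$.
\item Since $(\hat a,\cdot)$ is infeasible (your argument for this is fine), every $(a_0,b_0)\in\mathcal{A}'$ has $a_0\le\hat a-1$. This gives $\hat a-1\ge 0$ and $L\le 2a_0+b_0\le a_0+n\le n+\hat a-1$, i.e.\ $2a+b\ge L$ at the returned pair.
\end{itemize}
The same estimate $2a_0+b_0\le a_0+n$, together with $b_0\le\hat b$ when $a_0=\hat a$, proves $2\hat a+\hat b\ge L$ in the non-violating $n-\hat a$ cases. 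You left those cases to ``the monotone structure''.

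Finally, the adjustment raises $a+b$ from $N_{1\star}-1$ to $N_{1\star}$, so the slack afterward is zero, not one.
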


\begin{proof}
Let $(a^{\max},b^{\max})$ be computed in line 3 of Algorithm \ref{alg:p_max}. We consider three cases depending on the value $b^{\max}$.

Suppose $b^{\max} = U_b$, which has the same parity as $n\tau_0$. Since $\mathcal{A}' \neq \emptyset$, we have $b^{\max} = U_b \ge L_b$. 
We claim that all the other lower bounds in Equation \eqref{eqn:ab_lower} must also be satisfied at $(a^{\max}, b^{\max})$. 
Assume by contradiction, they are satisfied at some $a < a^{\max}$ and $b = b^{\max}(a)$. In this case, $b = U_b = b^{\max}$, and thus $2a+b < 2a^{\max} + b^{\max}$ and $a+b < a^{\max} + b^{\max}$, which is a contradiction.

Suppose $b^{\max} = U - 2a^{\max}$, which has the same parity as $n \tau_0$. Since $\mathcal{A}' \neq \emptyset$, we have $2a^{\max} + b^{\max} = U \ge L$.
In addition, by our choice of $a^{\max}$, we have
\begin{align*}
& a^{\max} + b^{\max} = U - a^{\max} \ge U - (U - N_{1\star}) = N_{1\star}, \\
& b^{\max} = U - 2a^{\max} \ge U - 2 \cdot \frac{1}{2} (U - L_b) = L_b.
\end{align*}
Thus, all inequalities in Equation \eqref{eqn:ab_lower} are satisfied.

Suppose $b^{\max} = n-a^{\max}$ (if $n-a^{\max} \equiv n\tau_0 \pmod 2$) or $n-a^{\max}-1$ (if $n-a^{\max} \not\equiv n\tau_0 \pmod 2$). 
In the first case,
\begin{align*}
& a^{\max} + b^{\max} = n \ge N_{1\star}, \\
& b^{\max} = n - a^{\max} \ge n - (n - L_b) \ge L_b.
\end{align*}
In addition, $2a+b \ge L$ must be satisfied by $(a^{\max}, b^{\max})$; otherwise, since it cannot be satisfied by any $a < a^{\max}$, the set $\mathcal{A}'$ is empty, which is a contradiction.
In the second case, $a^{\max} \ge n - L_b + 1$, and thus $b \ge L_b$; in addition, if $a+b \ge N_{1\star}$ is not satisfied by $(a^{\max}, b^{\max})$, it will be satisfied by $(a^{\max}-1, b^{\max}+2)$.
Finally, for the same reason as the first case, $2a+b \ge L$ must be satisfied $(a^{\max}, b^{\max})$ and thus by $(a^{\max}-1, b^{\max}+2)$.
\end{proof}

When $\tau_0 \neq 0$, the set $\mathcal{A}_0$ is empty. Lemma \ref{lem:ab_search}, together with Corollary \ref{coro:pmax_tau_0}, immediately implies the correctness of Algorithm \ref{alg:p_max} for $\tau_0 \neq 0$.

\begin{corollary}
Fix the observed data $\NN$ and $\tau_0 \in \mathcal{T} \setminus \{0\}$. Then, Algorithm \ref{alg:p_max} \textsc{ComputePmax}($\NN, \tau_0$) returns the value $p_{\max}(\tau_0)$ in time $O(1)$.
\label{coro:find_ab_1}
\end{corollary}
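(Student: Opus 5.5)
The plan is to combine the structural results already established. First, since $\tau_0 \neq 0$, the set $\mathcal{A}_0$ is empty. Every feasible $b$ satisfies $b \ge \abs{n\tau_0} \ge 1$, because $b = v_{10}+v_{01}$ and $v_{10}-v_{01} = n\tau_0$. Hence $\mathcal{A} = \mathcal{A}_1 = \mathcal{A}\cap\{(a,b): b \ge \abs{n\tau_0}\}$. This set is nonempty because $\tau_0 \in \mathcal{T}$: membership in $\mathcal{T}$ means some compatible completion exists, so some feasible count vector $\vv$, and hence some feasible $(a,b)$, exists.

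Next, I treat the trivial case $\Delta = 0$. There $p(a,b;0)=1$ for every feasible $(a,b)$. The algorithm returns a probability computed at some feasible pair, which equals $1 = p_{\max}(\tau_0)$. So I may assume $\Delta>0$, which is the standing assumption in Corollary~\ref{coro:pmax_tau_0}.

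For $\Delta > 0$, Corollary~\ref{coro:pmax_tau_0} says $p_{\max}(\tau_0) = \max\{p(a,b;\Delta) : (a,b)\in\mathcal{A}_1\}$, and that this maximum is attained at the lexicographically largest element of $\mathcal{A}_1$. I then apply Lemma~\ref{lem:ab_search} with $L_b = \abs{n\tau_0}$. This choice trivially satisfies $L_b \ge \abs{n\tau_0}$ with the same parity, and $\mathcal{A}' = \mathcal{A}_1 \neq \emptyset$. The lemma then gives that \textsc{ComputeAB}$(\abs{n\tau_0})$ returns exactly that lexicographically largest pair. By Lemma~\ref{lem:objective_in_v}, $p(a,b;\Delta)=\Pr(\abs{S_{a,b}}\ge\Delta)$ is a function of $(a,b,\Delta)$ alone. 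Evaluating it at the returned pair is what the call \textsc{ComputeProbabilityViaFFT} does, so the returned value equals $p_{\max}(\tau_0)$.

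For the running time, computing $n$, $T^{\obs}$, $\Delta$, $L$, $U$, $U_b$ and $N_{1\star}$ from $\NN$ takes a constant number of arithmetic operations. \textsc{ComputeAB} takes a few minima and maxima, one parity adjustment and one conditional local shift, all $O(1)$ in the unit-cost RAM model. I read the $O(1)$ claim as referring to identifying the optimal $(a,b)$, so that only a single randomization test (one $p$-value evaluation) is required. The FFT evaluation itself is accounted for separately as the cost of one test.

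The only delicate point is matching hypotheses. One must check that the lexicographic-maximum characterization in Corollary~\ref{coro:pmax_tau_0} uses exactly the feasible set that \textsc{ComputeAB} searches, namely that no additional constraint $b\ge 1$ cuts anything off. That holds because $\abs{n\tau_0}\ge1$. One must also check that nonemptiness of $\mathcal{A}'$ follows from $\tau_0\in\mathcal{T}$.
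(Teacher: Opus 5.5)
Your proposal is correct and follows the paper's own argument. Since $\tau_0 \neq 0$ forces every feasible $b \ge |n\tau_0| \ge 1$, the set $\mathcal{A}_0$ is empty, and Corollary~\ref{coro:pmax_tau_0} together with Lemma~\ref{lem:ab_search} (applied with $L_b = |n\tau_0|$) shows that the single pair returned by \textsc{ComputeAB} attains $p_{\max}(\tau_0)$. Your treatment of nonemptiness, the $\Delta = 0$ case, and your reading of the $O(1)$ bound as the cost of locating the optimal pair (with the one FFT-based test counted separately via Lemma~\ref{lem:Pa_fft}) fill in details the paper leaves implicit and match its intended meaning.
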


It remains to consider the case where $\tau_0 = 0$.

\begin{lemma}
Fix the observed data $\NN$, and assume $0 \in \mathcal{T}$. Then, Algorithm \ref{alg:p_max} \textsc{ComputePmax}($\NN, 0$) returns $p_{\max}(0)$  in time $O(1)$.
\label{lem:alg_max_p_tau_0}
\end{lemma}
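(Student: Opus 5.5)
The plan is to reduce the lemma to Corollary~\ref{coro:pmax_tau_0}. By that corollary,
\[
p_{\max}(0)=\max\Bigl\{\max_{\mathcal{A}_0} p(a,b;\Delta),\ \max_{\mathcal{A}_1} p(a,b;\Delta)\Bigr\},
\]
and each inner maximum is attained at the lexicographically largest element of its set. If $\Delta=0$, every $p$-value equals $1$, so whatever candidate is returned is correct; from now on I assume $\Delta>0$. When $\tau_0=0$, a feasible $b$ must be even, so $\mathcal{A}_1=\mathcal{A}\cap\{b\ge 2\}$. It therefore suffices to prove two things. First, the set $B$ built in Algorithm~\ref{alg:p_max} contains the lexicographically largest element of $\mathcal{A}_0$ exactly when $\mathcal{A}_0\neq\emptyset$, and the lexicographically largest element of $\mathcal{A}_1$ exactly when $\mathcal{A}_1\neq\emptyset$. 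Second, $B$ contains no infeasible pair, because an infeasible pair could have a spuriously large $p$-value. The $O(1)$ running-time claim is then immediate: each step is a constant number of min/max and parity operations plus at most two calls to \textsc{ComputeProbabilityViaFFT}, each counted as one randomization test.

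For $\mathcal{A}_0$ I specialize the constraints in Equation~\eqref{eqn:constraints_in_ab} to $b=0$. They become $L\le 2a\le U$, $N_{1\star}\le a\le n$, and $0\le a\le N_{1\star}$, while $b\le U_b$ holds automatically because $U_b\ge 0$ when $\tau_0=0$. So $\mathcal{A}_0$ consists of the integers $a$ with $a=N_{1\star}$, $2a\ge L$, and $2a\le U$. The algorithm takes $a=\min\{U/2,n,N_{1\star},U-N_{1\star}\}$, which satisfies every upper bound by construction; I should also check it is an integer, or replace it by the floor. The test ``$a\ge N_{1\star}$ and $2a\ge L$'' then checks exactly the remaining lower bounds. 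Hence $(a,0)$ is added precisely when it is feasible, and it is the largest feasible $a$. Since $a\le N_{1\star}$ and $a\ge N_{1\star}$ force $a=N_{1\star}$, this is in fact the unique element of $\mathcal{A}_0$ whenever that set is nonempty.

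For $\mathcal{A}_1$ I apply Lemma~\ref{lem:ab_search} with $L_b=2$. This is legitimate because $2\ge |n\tau_0|=0$ and $2$ has the same parity as $0$. So whenever $\mathcal{A}'=\mathcal{A}\cap\{b\ge 2\}$ is nonempty, \textsc{ComputeAB}$(2)$ returns its lexicographically largest element, which is then feasible, passes the algorithm's check, and is added to $B$. The remaining direction is what I expect to be the main obstacle: when $\mathcal{A}'=\emptyset$, the output of \textsc{ComputeAB} may be infeasible, and the final check must reject it. By construction the output satisfies all upper bounds in Equation~\eqref{eqn:constraints_in_ab}, and the adjustment $(a-1,b+2)$ preserves them. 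The one exception is the bound $b\le U_b$ after that adjustment, which I would verify using the case analysis in the proof of Lemma~\ref{lem:ab_search}; that analysis shows the adjustment occurs only when $b<U_b$, and since parities agree this gives $b+2\le U_b$. That leaves the lower bounds in \eqref{eqn:ab_lower} together with $a\ge 0$. The algorithm explicitly checks $a+b\ge N_{1\star}$, $2a+b\ge L$ and $a\ge 0$. For the last lower bound, $b\ge 2$, I would argue from the first line of \textsc{ComputeAB}: it gives $a\le n-2$ and $a\le (U-2)/2$, so both $n-a\ge 2$ and $U-2a\ge 2$. Thus $b\ge 2$ unless the binding bound is $U_b<2$, and in that case I would need to show that some other checked inequality already fails, or else add an explicit check $b\ge 2$; I expect this corner is where care is needed. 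With both candidates correctly included or excluded, taking the maximum of the at most two computed $p$-values yields $p_{\max}(0)$ by Corollary~\ref{coro:pmax_tau_0}.
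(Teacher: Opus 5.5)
Your reduction matches the paper's. You use Corollary~\ref{coro:pmax_tau_0}, Lines 12--13 for $\mathcal{A}_0$, and Lemma~\ref{lem:ab_search} with $L_b=2$ when $\mathcal{A}_1\neq\emptyset$. Your treatment of $\mathcal{A}_0$ is correct, and sharper than the paper's, since $\mathcal{A}_0\subseteq\{(N_{1\star},0)\}$. The integrality worry is void because $U=\min\{2n-2N_{01},2n-2N_{00}\}$ is even at $\tau_0=0$.

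The gap is the case $\mathcal{A}_1=\emptyset$. You correctly identify it as the crux but leave it open. (The paper only asserts it ``by Lemma~\ref{lem:ab_search}'', and that lemma says nothing when $\mathcal{A}'=\emptyset$.) Two of your steps there fail:
\begin{itemize}
\item You cannot import ``the adjustment occurs only when $b<U_b$'' from the proof of Lemma~\ref{lem:ab_search}. Its case $b^{\max}=U_b$ uses $\mathcal{A}'\neq\emptyset$, and the claim is false in exactly the situation you are in.
\item Adding an explicit test $b\ge 2$ would change the algorithm rather than prove the lemma. It also would not catch the bad output, whose $b$ equals $2$.
\end{itemize}

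The missing idea is to characterize when $\mathcal{A}_1$ is empty. At $\tau_0=0$ we have $L=2\max\{N_{11},N_{10}\}$, $U=2n-2\max\{N_{01},N_{00}\}$ and $U_b=2\min\{N_{11}+N_{00},N_{01}+N_{10}\}$.

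First, $U_b\ge 2$ implies $\mathcal{A}_1\neq\emptyset$. At $b=2$, \eqref{eqn:constraints_in_ab} allows $a$ in $[\max\{L/2-1,N_{1\star}-2,0\},\ \min\{U/2-1,n-2,N_{1\star}\}]$. Comparing each lower term with each upper term, the only comparisons that can fail reduce to $\max\{N_{11},N_{10}\}\le n-1$, $\max\{N_{01},N_{00}\}\le n-1$ and $n\ge 2$. All three hold, because $U_b\ge 2$ rules out all $n$ units lying in a single observation cell.

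Hence $\mathcal{A}_1=\emptyset$ forces $U_b=0$, that is, $N_{11}=N_{00}=0$ or $N_{01}=N_{10}=0$. In both subcases $L=U=2N_{1\star}$. Line 3 of \textsc{ComputeAB}$(2)$ then gives $2a\le U-2=L-2$ and $b=0$. So $a+b<N_{1\star}$, the adjustment fires, and the output is $(a-1,2)$, which violates $b\le U_b=0$. Line 15 rejects it through the test $2a+b\ge L$, because the adjustment preserves $2a+b\le L-2$.

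With this case analysis added, Line 15 admits a pair exactly when $\mathcal{A}_1\neq\emptyset$, and the rest of your argument goes through.
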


\begin{proof}

We claim that Lines 12 -- 13 finds the lexicographically largest $(a,b)$ in $\mathcal{A}_0$, if $\mathcal{A}_0$ is nonempty.
Line 12 finds the maximum $a$ satisfying that 
\[
2a+b \le U, a+b \le n, \  a \le N_{1\star}, \  a \le U - N_{1\star}.
\]
Line 13 checks if the following lower bounds are also satisfied:
\[
a + b \ge N_{1\star} \ge 0,\  2a + b \ge L.
\]
If yes, then $(a,0)$ is the lexicographically largest element in $\mathcal{A}_0$.

We then claim that Lines 14 -- 15 finds the lexicographically largest $(a,b)$ in $\mathcal{A}_1$, if $\mathcal{A}_1$ is nonempty.
If $\mathcal{A}_1$ is nonempty, by Lemma \ref{lem:ab_search}, line 14 finds the lexicographically largest element in $\mathcal{A}_1$. 
If $\mathcal{A}_1$ is empty, by Lemma \ref{lem:ab_search}, $(a,b)$ cannot pass the condition in line 15.

By Corollary \ref{coro:pmax_tau_0}, \textsc{ComputePmax}($\NN, 0$) returns $p_{\max}(0)$.
\end{proof}

\subsubsection{Computing $p(a,b; \Delta)$ by the Fast Fourier Transform}

Recall that $p(a,b; \Delta) = \Pr(|S_{a,b}| \ge \Delta)$, where $S_{a,b}$ is a sum of independent Rademacher and half-Rademacher random variables.
Note that $S_{a,b}$ is a special case of the Poisson binomial distribution shifted by its expectation.
\citet{hong2013computing} develops an efficient algorithm that computes the cumulative distribution function (CDF) and thus the tails of the Poisson binomial distribution using the Fast Fourier Transform (FFT) \citep{cooley1965algorithm} (also see Chapter 30 of \citep{thomas2009introduction}).
We adopt this method\footnote{In this paper, we measure the runtime of the FFT under the unit-cost RAM model for simplicity. For discussions of practical floating-point implementations, bit complexity, and numerical stability, we refer readers to \citet{hong2013computing}.}.

\begin{lemma}
\label{lem:Pa_fft}
For any $(a,b) \in \mathbb{Z}_{\ge 0}^2$ and $\Delta > 0$, there exists an algorithm \\ \textsc{ComputeProbabilityViaFFT}($a,b,\Delta$) that computes $p(a,b;\Delta)$ in time 
$O(n \log n)$.
\end{lemma}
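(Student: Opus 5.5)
The plan is to reduce $S_{a,b}$ to an integer-valued Poisson binomial variable, compute its full probability mass function with a single FFT-based polynomial multiplication, and read off the tail. Each Rademacher $\xi_r = 2B_r - 1$ and each half-Rademacher $\zeta_t = B'_t - \tfrac12$, where the $B_r, B'_t$ are independent $\Ber(1/2)$. Hence
\[
S_{a,b} = 2X + W - a - \tfrac{b}{2},
\]
with $X \sim \Bin(a,1/2)$ and $W \sim \Bin(b,1/2)$ independent. Setting $K \defeq 2X + W$, a nonnegative integer in $\{0,\dots,2a+b\}$, we get
\[
p(a,b;\Delta) = \Pr\bigl(|K - a - b/2| \ge \Delta\bigr).
\]
This is a sum of masses of $K$ over an explicit set of integers determined by $\Delta$: namely $k \ge a + b/2 + \Delta$ or $k \le a + b/2 - \Delta$. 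Since $\Delta > 0$, we may equivalently use $2\Pr(K \ge a + b/2 + \Delta)$ by symmetry.

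Next we compute the pmf of $K$ through its probability generating function
\[
G_K(z) = \Bigl(\tfrac{1+z^2}{2}\Bigr)^{a}\Bigl(\tfrac{1+z}{2}\Bigr)^{b},
\]
a polynomial of degree $2a+b \le 2n$. This bound holds because feasibility gives $a + b \le n$; for arbitrary $(a,b)$ we measure against the degree $2a+b$, and in every call from Algorithm~\ref{alg:p_max} we have $2a+b = O(n)$.

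The coefficients can be obtained in two ways, both within the budget.
\begin{itemize}
\item Following \citet{hong2013computing}, evaluate $G_K$ at the $N = 2a+b+1$ roots of unity $\omega^j$ in closed form, which costs $O(\log n)$ per point by repeated squaring, or $O(1)$ per point if powers are taken as primitives. An inverse DFT then recovers all coefficients in $O(N\log N) = O(n\log n)$ time.
\item Alternatively, write down the binomial coefficients of both factors directly in $O(n)$ time and multiply the two polynomials by FFT in $O(n\log n)$.
\end{itemize}

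Finally, accumulate the mass over the required range of $k$ in $O(n)$ time. The total cost is $O(n\log n)$ in the unit-cost RAM model.

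The main point needing care is the bookkeeping rather than any hard estimate. First, the degree must be $O(n)$, so that the FFT size is linear in $n$. Second, the threshold $a + b/2 + \Delta$ must be correctly matched to integer indices of $K$. Recall that $\Delta \in \Lambda_{a,b}$, so $a + b/2 + \Delta$ is an integer and no rounding is needed for feasible inputs. For general $\Delta$, take the ceiling of the upper threshold and the floor of the lower one.

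Numerical precision of the FFT is outside the unit-cost model, as the paper's footnote notes. The FFT-free direct-binomial variant also makes exactness clear if one prefers to avoid complex arithmetic.
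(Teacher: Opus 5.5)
Your proposal is correct and is essentially the paper's argument. The paper evaluates the characteristic function $\psi(\theta)=(\cos\theta)^{a}(\cos(\theta/2))^{b}$ of $S_{a,b}$ in closed form at $N\ge 2(2a+b)+1$ equally spaced angles and recovers the pmf on the half-integer grid with one inverse FFT; this is your first bullet up to the shift $K=S_{a,b}+a+b/2$, and your integer reindexing only halves the transform length.

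One small inconsistency: your closing remark calls the direct-binomial variant FFT-free, yet your second bullet multiplies the two polynomials by FFT. A genuinely FFT-free version, which even runs in $O(n)$ time, computes $\Pr(K\ge c)=\sum_{j=0}^{a}\binom{a}{j}2^{-a}\Pr(W\ge c-2j)$ from suffix sums of the $\Bin(b,1/2)$ pmf.
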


For completeness, we include a proof of Lemma \ref{lem:Pa_fft} in Appendix \ref{sec:appendix_proofs}. 

Combining Corollary \ref{coro:pmax_tau_0}, Corollary \ref{coro:find_ab_1}, Lemma \ref{lem:alg_max_p_tau_0}, and Lemma \ref{lem:Pa_fft}, we prove Theorem \ref{thm:time} for the balanced Bernoulli design.

\section{Fast Exact Confidence Set Construction for Matched-Pairs Design}
\label{sec:match}

In this section, we extend our fast confidence set construction for the balanced Bernoulli design to the matched-pairs design. We assume that $n$ is even so that the units can be paired.

Under the matched-pairs design, individual units are not assigned independently. Instead, assignments are independent across pairs, with each pair assigned to $(1,0)$ or $(0,1)$ with equal probability $1/2$, where $1$ indicates treatment and $0$ indicates control. Thus, if each pair is treated as a single ``unit,'' the independence and symmetry of the balanced Bernoulli design are preserved.
Consequently, most of the monotonicity properties developed in Section \ref{sec:four_permutation} continue to apply, enabling fast construction of exact confidence sets.
We formalize this extension in this section.

For convenience, we reuse and overload certain notation from the balanced Bernoulli design.

\subsection{Setup and Notation}

Without loss of generality, assume the $n$ units are partitioned into $m \defeq n/2$ disjoint pairs $(1,2),(3,4),\ldots,(n-1,n)$, by relabeling the units \citep{bai2022inference}. 
For each pair $i \in [m]$, let $Z_i \in \{0,1\}$ denote the within-pair treatment assignment. If $Z_i = 1$, then unit $2i$ is assigned to treatment and unit $2i-1$ to control; if $Z_i = 0$, then unit $2i-1$ is assigned to treatment and unit $2i$ to control.
Thus, the design makes $\{Z_i\}_{i=1}^m$ independent random variables with $\Pr(Z_i=1)=\Pr(Z_i=0)=1/2$ for all $i$.

For binary potential outcomes, define the \emph{pair-level} potential-outcome differences:
\begin{align}
W_i(1)=Y_{2i}(1)-Y_{2i-1}(0),\qquad
W_i(0)=Y_{2i-1}(1)-Y_{2i}(0).
\label{eqn:match_pair_outcome}
\end{align}
Thus, $W_i(1),W_i(0)\in\{-1,0,1\}$. The observed within-pair difference is
\[
W_i^{\mathrm{obs}} = Z_i\, W_i(1) + (1-Z_i)\, W_i(0).
\]
Let $\WW = (W_1, \ldots, W_m)$ and $\ZZ = (Z_1, \ldots, Z_m)$.
Under this notation, the ATE equals
\[
\tau(\WW) =\frac{1}{n}\sum_{i=1}^{m}\bigl(W_i(1)+W_i(0)\bigr),
\]
and the test statistic using the HT estimator\footnote{Under the matched-pairs design, the HT estimator is the same as the difference-in-means estimator since the two treatment group sizes are always equal.} is
\[
T(\WW, \ZZ)=\frac{2}{n}\sum_{i=1}^{m} W_i^{\mathrm{obs}}
= \frac{2}{n} \left( \sum_{i:Z_i=1} W_i(1) + \sum_{i:Z_i=0} W_i(0) \right).
\]

The set of all possible values of the ATE $\tau$ consistent with the observed data $(\WW^{\obs}, \ZZ^{\obs})$ is given by
\begin{align*}
\mathcal{T}(\WW^{\obs}, \ZZ^{\obs}) = \left\{ \frac{S^{\obs} - m}{n}, \ \frac{S^{\obs} - m + 1}{n}, \ \ldots, \ \frac{S^{\obs} + m}{n}  \right\}, 
\quad \text{where } S^{\obs} \defeq \sum_{i=1}^m W_i^{\obs}.
\end{align*}
This set is the same as that in Equation~\eqref{eqn:tau_set}, since the choice of design does not affect the true ATE.

\subsubsection{Inverting Randomization Tests under Matched Pairs}
We now specialize the general construction in Section \ref{sec:randomization_test_inversion} to the matched-pairs design $\mathcal D_{\mathrm{MP}}$.  Fix a candidate $\tau_0 \in \mathcal T(\WW^{\mathrm{obs}},\ZZ^{\mathrm{obs}})$, and let
$\mathcal W(\tau_0\WW^{\mathrm{obs}},\ZZ^{\mathrm{obs}})$ denote the set of compatible pair-level potential-outcome tables
$\tilde \WW$ that (1) agree with the observed within-pair differences and (2) satisfy $\tau(\tilde \WW)=\tau_0$.

For each $\tilde \WW \in \mathcal W(\tau_0, \WW^{\obs}, \ZZ^{\obs})$, we test the sharp null
$H_0(\tilde \WW): \WW_{\mathrm{true}}=\tilde \WW$ using the (centered) HT statistic under matched pairs,
and define the randomization-test $p$-value
\[
p(\tilde \WW, \ZZ^{\obs}) \defeq 
 \Pr_{\tilde \ZZ \sim \mathcal D_{\mathrm{MP}}}\!\left(
|T(\tilde \WW,\tilde Z)-\tau(\tilde \WW)|
\ge
|T(\tilde \WW,\ZZ^{\obs})-\tau(\tilde \WW)|
\right).
\]
As in Section \ref{sec:randomization_test_inversion},  we invert the randomization tests to form the exact $(1-\alpha)$ confidence set
\[
\mathcal{I}_{\alpha} = \{\tau_0 \in \mathcal{T}(\WW^{\obs}, \ZZ^{\obs}): p_{\max}(\tau_0) \ge \alpha \},
\quad \text{where }
p_{\max}(\tau_0) \defeq \max_{\tilde\WW \in \mathcal{W}(\tau_0, \WW^{\obs}, \ZZ^{\obs})} p(\tilde \WW, \ZZ^{\obs})
\]

As in the balanced Bernoulli design, $p_{\max}$ is monotone in $\tau_0$ under the matched-pairs design. This monotonicity follows from the independence and symmetry of the assignment mechanism across unit pairs. Thus, we can construct the confidence set $\mathcal I_\alpha$ using a binary search procedure analogous to Algorithm~\ref{alg:bernoulli_balance}.

\begin{lemma}
Under the matched-pairs design, the function $p_{\max}$ is monotone nondecreasing on $\mathcal{T} \cap [-1, T^{\obs}]$ and monotone nonincreasing on $\mathcal{T} \cap [T^{\obs}, 1]$, where $T^{\obs} = T(\tilde \WW, \ZZ^{\obs})$ is the observed statistic and is the same for all compatible $\tilde \WW$.
As a corollary, we can construct the confidence set $\mathcal{I}_\alpha$ for any given $\alpha \in (0,1)$ by binary search that evaluates $p_{\max}$ at most $4 \log_2 n$ distinct values.
\label{lem:f_monotone_matching}
\end{lemma}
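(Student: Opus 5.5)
The plan is to mirror the proof of Lemma~\ref{lem:f_monotone} for the balanced Bernoulli design, working at the level of pairs. Fix two candidates $\tau_0, \tau_0' \in \mathcal{T}$ with $T^{\obs} \le \tau_0 < \tau_0'$; the other side is symmetric. It suffices to show $p_{\max}(\tau_0) \ge p_{\max}(\tau_0')$. For this, take a maximizer $\tilde\WW' \in \mathcal{W}(\tau_0',\WW^{\obs},\ZZ^{\obs})$ and construct a completion $\tilde\WW \in \mathcal{W}(\tau_0,\WW^{\obs},\ZZ^{\obs})$ with $p(\tilde\WW,\ZZ^{\obs}) \ge p(\tilde\WW',\ZZ^{\obs})$. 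By induction it is enough to treat consecutive candidates, $\tau_0 = \tau_0' - 1/n$.

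\textbf{Representing the $p$-value.} First I will write the centered statistic explicitly. Since $T - \tau = \frac{2}{n}\sum_i \bigl(W_i^{\mathrm{assigned}} - \tfrac12(W_i(1)+W_i(0))\bigr)$, we get
\[
\tfrac{n}{2}\bigl(T(\tilde\WW,\tilde\ZZ)-\tau(\tilde\WW)\bigr) \overset{d}{=} \sum_{i=1}^m \epsilon_i d_i,
\]
where $d_i = \tfrac12(W_i(1)-W_i(0))$ and the $\epsilon_i$ are i.i.d.\ Rademacher signs. Because $W_i(1)-W_i(0) \in \{0,\pm1,\pm2\}$, each nonzero summand is a Rademacher or half-Rademacher variable. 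Hence $p(\tilde\WW,\ZZ^{\obs}) = \Pr(|S_{a,b}| \ge \Delta)$ with $\Delta = \tfrac n2 |T^{\obs}-\tau_0|$, exactly as in Lemma~\ref{lem:objective_in_v}. Here $a$ counts pairs with $|d_i|=1$ and $b$ counts pairs with $|d_i| = 1/2$.

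\textbf{The one-step move.} Lowering the ATE by $1/n$ requires decreasing $W_i(1)+W_i(0)$ by one in some pair, changing only the unobserved coordinate $W_i(1-Z_i^{\obs})$. I will first argue that $\tau_0' > T^{\obs} \ge \tau_0$ forces the existence of a pair whose unobserved value exceeds its observed value. Indeed, $T^{\obs}$ equals $\tau$ of the completion in which every unobserved value equals the observed one, and $\tau_0' > T^{\obs}$ means $\sum_i(\text{unobs}_i - \text{obs}_i) > 0$. Such a pair can have its unobserved value lowered by one while remaining in $\{-1,0,1\}$.

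For that pair, $|d_i|$ changes from $u$ to $u - 1/2$, where $u \in \{1/2, 1\}$. Moreover, the threshold decreases: since $\tau_0 \ge T^{\obs}$, we have $\Delta = \Delta' - \tfrac12$. So I must compare $\Pr(|R + X| \ge \Delta')$ with $\Pr(|R + X'| \ge \Delta' - 1/2)$, where $R$ is the sum over the other pairs and $X, X'$ are the old and new summands. There are two cases:
\begin{itemize}
\item \emph{Case $u = 1$.} A Rademacher summand is replaced by a half-Rademacher one while the threshold drops by $1/2$. On each sign, $R \pm 1 \ge \Delta'$ implies $R \pm \tfrac12 \ge \Delta' - \tfrac12$ when the sign is $+$. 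For the $-$ sign, $R - 1 \ge \Delta'$ implies $R - \tfrac12 \ge \Delta' - \tfrac12$ directly. So $\{R + \epsilon \ge \Delta'\} \subseteq \{R + \epsilon/2 \ge \Delta' - \tfrac12\}$ pathwise. The same holds for the lower tail, using the symmetry of $R$, or by considering $|\cdot|$ directly, since $|R+\epsilon| \le |R+\epsilon/2| + \tfrac12$.
\item \emph{Case $u = 1/2$.} The summand vanishes and the threshold drops by $1/2$. This is identical pathwise: $|R + \epsilon/2| \le |R| + \tfrac12$.
\end{itemize}
Thus a simple triangle-inequality coupling suffices, and unimodality (Lemma~\ref{lem:unimodal}) is not even needed.

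\textbf{Main obstacle.} I expect the main difficulty to be the existence argument for an adjustable pair together with the bookkeeping of $\Delta$, that is, making sure the modified table is compatible and that $|T^{\obs}-\tau_0|$ really shrinks by exactly $1/n$. The key point is that the observed statistic $T^{\obs}$ does not depend on the completion.

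\textbf{Corollary.} The binary-search claim then follows exactly as in Corollary~\ref{coro:binary_search}. With $|\mathcal{T}| = m+1 \le n$, two binary searches on each side of $T^{\obs}$, one for each endpoint, use at most $4\log_2 n$ evaluations of $p_{\max}$.
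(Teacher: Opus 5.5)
Your proof is correct and takes essentially the paper's route. The paper omits this proof and defers to Lemma~\ref{lem:f_monotone}, which likewise takes the maximizing completion at the candidate farther from $T^{\obs}$, changes one imputed outcome, and compares the two statistics pathwise under a common assignment; your bound $|R+X| \le |R+X'| + \tfrac12$, set against a threshold drop of exactly $\tfrac12$, is the two-sided form of that same comparison. One small slip in the corollary: $\mathcal{T}$ has $2m+1 = n+1$ elements, not $m+1$, but two binary searches still use at most $4\log_2 n$ evaluations of $p_{\max}$ for even $n \ge 2$.
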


The proof follows the same argument as that of Lemma~\ref{lem:f_monotone} for the balanced Bernoulli design, and thus we omit it.

\subsection{Computing $p_{\max}(\tau_0)$ Using at Most Two Randomization Tests}

\subsubsection{Expressing $p_{\max}(\tau_0)$ Using Pair-Level Outcome and Observation Counts}

Similar to the balanced Bernoulli design, for binary outcomes we can summarize observed data and pair-level potential outcomes using observation counts and potential-outcome counts, thereby reducing the number of parameters.

Let $N_{wz}$ denote the observed \emph{within-pair} counts
\[
N_{wz} \defeq |\{i\in[m]: W_i^{\mathrm{obs}}=w,\ Z_i=z\}| ,
\qquad w\in\{-1,0,1\},\ z \in\{0,1\}.
\]

Each pair has exactly one missing pair-level potential outcome: if $Z_i=1$ then $W_i(1)=W_i^{\mathrm{obs}}$ is observed and $W_i(0)$ is missing; if $Z_i=0$ then $W_i(0)=W_i^{\mathrm{obs}}$ is observed and $W_i(1)$ is missing. Since both missing and observed values lie in $\{-1,0,1\}$, it is natural to parameterize a potential-outcome completion by counts of missing values conditional on the observed value.
For $w,u\in\{-1,0,1\}$, let $v_{w,u}$ be the number of pairs with $W_i^{\obs}=w$ whose missing pair-level potential outcome is imputed as $u$. 
Let 
\[
\NN = \{N_{wz}\} \in \mathbb{Z}_{\ge 0}^{\{-1,0,1\} \times \{0,1\} }, \qquad \vv = \{v_{w,u}\} \in \mathbb{Z}_{\ge 0}^{\{-1,0,1\}^2 }.
\]
Let 
\[
p(\vv, \NN) = p(\tilde\WW, \ZZ^{\obs}),
\]
for any $\tilde\WW$ whose associated potential-outcome count is $\vv$.

\subsubsection{Reduction to a Sum of Independent Weighted Rademacher Variables}

Similar to Lemma \ref{lem:objective_in_v}, under the matched-pairs design, we can write the centered test statistic as a sum of independent weighted Rademacher random variables.

For pair $i \in [m]$, define the pair contrast
\[
D_i \defeq W_i(1)-W_i(0)\in\{-2,-1,0,1,2\}.
\]
Let $\tilde \ZZ = (\tilde{Z}_1, \ldots, \tilde{Z}_m)$ be a random assignment under the matched-pairs design. Define $\xi_i=2\tilde Z_i-1 \in \{\pm 1\}$. The centered test statistic is
\begin{align*}
T(\tilde \WW,\tilde \ZZ)-\tau_0
&=
\frac{2}{n}\sum_{i=1}^m\bigl(\tilde Z_i W_i(1) + (1-\tilde Z_i)W_i(0)\bigr)
-
\frac{1}{n}\sum_{i=1}^m \bigl(W_i(1)+W_i(0)\bigr) \\
&=
\frac{1}{n}\sum_{i=1}^m (2\tilde Z_i-1)\,\bigl(W_i(1)-W_i(0)\bigr) \\
&=
\frac{1}{n}\sum_{i=1}^m \xi_i\, D_i. 
\end{align*}
Under the matched-pairs design, the $\xi_i$ are independent Rademacher random variables. 
Moreover, since $\xi_i D_i$ has the same distribution as $\xi_i |D_i|$, the randomization distribution depends on the completion only through the counts of pairs with $|D_i|=2$ and $|D_i|=1$.

Given a pair-level potential outcome table $\vv$, define
\begin{equation}\label{eq:m2m1}
m_2 \defeq v_{1,-1}+v_{-1,1},\qquad
m_1 \defeq v_{1,0}+v_{0,1}+v_{0,-1}+v_{-1,0}.
\end{equation}
Pairs counted by $m_2$ have $(W_i(1),W_i(0))\in\{(1,-1),(-1,1)\}$ so $|D_i|=2$, and pairs counted by $m_1$ have $(W_i(1),W_i(0))$ differing by $1$ so $|D_i|=1$. The remaining pairs have $|D_i|=0$ and do not affect the randomization distribution.

Therefore, 
\[
p(\vv, \NN)
= \Pr (|W_{m_2, m_1} | \ge \Delta), \qquad 
\Delta \defeq n\,|T^{\obs}-\tau_0|.
\]
The random variable $W_{m_2, m_1}$ has the same distribution as
\[
W_{m_2, m_1} \overset{d}{=}  \sum_{r=1}^{m_2} \eta_r \;+\; \sum_{s=1}^{m_1} \epsilon_s,
\]
where $\eta_r \sim \text{Unif}\{\pm 2\}$ and $\eps_s \sim \text{Unif}\{\pm 1\}$ are independent random variables. 
In particular, $p(\vv, \NN)$ depends on $\vv$ only through the pair $(m_2, m_1)$, and we define $p(m_2, m_1; \Delta) = p(\vv, \NN)$ where $m_2, m_1$ are defined as in Equation \eqref{eq:m2m1}.

The random variable $W_{m_2,m_1}$ has the same distribution as $2S_{a,b}$ in Lemma~\ref{lem:objective_in_v} for the balanced Bernoulli design when $(m_2,m_1)=(a,b)$. Consequently, the monotonicity properties established for $S_{a,b}$ carry over directly to $W_{m_2,m_1}$. In particular, when $m_1 \ge 1$, Lemma~\ref{lem:g_amax} implies that $p(\vv,\NN)$ is maximized at the lexicographically largest feasible pair $(m_2,m_1)$. When $m_1 = 0$, Lemma~\ref{lem:monotone_pab_b0} shows that $p(\vv,\NN)$ is maximized at the largest attainable value of $m_2$. 

As a corollary, we obtain the following characterization of the parameter pairs $(m_2,m_1)$ that achieve the maximum of $p(m_2,m_1;\Delta)$, and hence of $p_{\max}(\tau_0)$.
Its proof is the same as that of Corollary \ref{coro:pmax_tau_0} and thus we omit it.

\begin{lemma}
Assume $\Delta > 0$ and $\tau_0 \in \mathcal{T}$, and fix the observed table $\NN$. 
Let $\mathcal{M}$ be the feasible set of $(m_2, m_1)$.
Define
\[
\mathcal{M}_0 \defeq \{(m_2, m_1) \in \mathcal{M} : m_1 = 0\},
\qquad
\mathcal{M}_1 \defeq \{(m_2, m_1) \in \mathcal{M} : m_1 \ge 1\}.
\]
Then, at least one of these sets is nonempty.
The maximum value $p_{\max}(\tau_0)$ is given by
\[
p_{\max}(\tau_0)
=
\max\!\left\{
\max\{ p(m_2, m_1;\Delta) : (m_2, m_1) \in \mathcal{M}_0 \},
\;
\max\{ p(m_2, m_1;\Delta) : (m_2, m_1) \in \mathcal{M}_1 \}
\right\},
\]
with the convention that if one of the sets is empty, the corresponding maximum is omitted.
Moreover, in each of the sets $\mathcal{M}_0$ and $\mathcal{M}_1$, the corresponding maximum is attained at the lexicographically largest element of the set.
\label{lem:pmax_tau_0_matching}    
\end{lemma}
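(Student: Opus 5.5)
The plan is to reduce Lemma~\ref{lem:pmax_tau_0_matching} to the balanced Bernoulli results through the distributional identity $W_{m_2,m_1} \overset{d}{=} 2S_{m_2,m_1}$. This gives
\[
p(m_2,m_1;\Delta)=\Pr(|W_{m_2,m_1}|\ge\Delta)=\Pr\bigl(|S_{m_2,m_1}|\ge \Delta/2\bigr)=p(m_2,m_1;\Delta/2)_{\mathrm{Bern}},
\]
where the right-hand side is the Bernoulli-design $p$-value. Every monotonicity statement about $(a,b)\mapsto p(a,b;\Delta')$ then transfers verbatim, with $(a,b)=(m_2,m_1)$ and threshold $\Delta'=\Delta/2>0$. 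I also need that $\Delta/2$ lies in the support of $S_{m_2,m_1}$. This holds because $\Delta = n|T^{\obs}-\tau_0|$ equals the realized value $|\sum_i \xi_i^{\obs} D_i|$, which lies in the support of $|W_{m_2,m_1}|$.

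\textbf{Nonemptiness and decomposition.} Since $\tau_0\in\mathcal{T}$, the set $\mathcal{W}(\tau_0,\WW^{\obs},\ZZ^{\obs})$ is nonempty. Any completion yields some $(m_2,m_1)\in\mathcal{M}$, so $\mathcal{M}=\mathcal{M}_0\cup\mathcal{M}_1\neq\emptyset$. Because $p(\vv,\NN)$ depends on $\vv$ only through $(m_2,m_1)$, we have $p_{\max}(\tau_0)=\max_{\mathcal{M}}p(m_2,m_1;\Delta)$. Splitting $\mathcal{M}$ into $\mathcal{M}_0$ and $\mathcal{M}_1$ gives the displayed formula.

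\textbf{Maximizer on $\mathcal{M}_0$.} The maximizer on $\mathcal{M}_0$ follows directly from Lemma~\ref{lem:monotone_pab_b0}. That lemma's proof uses only the law of $S_{a,0}$, namely that $p(a,0;\cdot)$ is nondecreasing in $a$, not any feasibility structure. So the maximum on $\mathcal{M}_0$ is at the largest $m_2$, which is the lexicographically largest element.

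\textbf{Maximizer on $\mathcal{M}_1$.} This is the main obstacle. Lemma~\ref{lem:g_amax} was proved for the Bernoulli feasible set $\mathcal{A}$, and it used structural facts about that set. Specifically, when $a$ and $a+1$ are both feasible, $b^{\max}(a+1)\in\{b^{\max}(a),\,b^{\max}(a)-2\}$, and the parity of $b$ is fixed. I would verify the analogous facts for $\mathcal{M}$ from the constraints on $\vv$. The ATE constraint fixes $\sum_i (W_i(1)+W_i(0))$, so changing one imputed value shifts $m_1$ by one unit and forces compensating changes. I expect to show the following:
\begin{itemize}
\item The parity of $m_1$ is fixed by $\tau_0$ and $\NN$.
\item $m_2^{\max}$-feasible neighbors satisfy $m_1^{\max}(m_2+1)\ge m_1^{\max}(m_2)-2$.
\end{itemize}
The second fact holds because converting one $|D|=2$ pair to two $|D|=1$ pairs, or the reverse, is the only trade-off. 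More precisely, a pair imputed to reach $|D|=2$ can instead be imputed to $|D|=1$, adjusting another pair to preserve the ATE.

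With these facts, I will rerun the argument of Lemma~\ref{lem:g_amax}. In the case $m_1^{\max}(m_2+1)\ge m_1^{\max}(m_2)$, the result follows from Lemma~\ref{lem:ab_monotone} applied in both coordinates. In the case of a drop by $2$, the argument replaces $\zeta_1+\zeta_2$ by one Rademacher and uses unimodality (Lemma~\ref{lem:unimodal}). In fact, the argument only needs that the drop is at most $2$; smaller drops are handled by Lemma~\ref{lem:ab_monotone}. Finally, monotonicity of $g$ together with monotonicity in $m_1$ at fixed $m_2$ shows that the maximum on $\mathcal{M}_1$ is attained at the lexicographically largest element.
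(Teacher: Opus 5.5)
Your route is the paper's. The paper proves Lemma~\ref{lem:pmax_tau_0_matching} by noting $W_{m_2,m_1}\overset{d}{=}2S_{m_2,m_1}$ and asserting that Lemmas~\ref{lem:g_amax} and~\ref{lem:monotone_pab_b0} carry over, without re-examining the feasible set. You are right that Lemma~\ref{lem:g_amax} depends on the geometry of $\mathcal{A}$, not just on the law of $S_{a,b}$. Both facts you list are true; for parity, $m_1\equiv\sum_i(W_i^{\obs}-u_i)=2S^{\obs}-n\tau_0\equiv n\tau_0 \pmod 2$, where $u_i$ is the imputed value.

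\textbf{Missing fact: consecutive $m_2$-values.} Your list is incomplete. Showing $g(m_2)\le g(m_2+1)$ whenever both values are feasible gives the lexicographic conclusion only if the $m_2$-values occurring in $\mathcal{M}_1$ are consecutive integers. For matched pairs this is not automatic. Take two pairs with $W^{\obs}=1$, one pair with $W^{\obs}=-1$, and $\tau_0=0$, so $n=6$ and $\Delta=2$. Then $\mathcal{M}=\{(0,2),(1,0),(1,2),(3,0)\}$, whose $m_2$-projection $\{0,1,3\}$ has a hole. In $\mathcal{M}_0$ holes are harmless: $\Delta/2$ lying in the support of $S_{m_2,0}$ fixes the parity of $m_2$, and Lemma~\ref{lem:monotone_pab_b0} steps by $2$.

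\textbf{The drop bound runs the wrong way.} You need to start from an optimal completion at $m_2$ and exhibit one at $m_2+1$. Downgrading a $|D|=2$ pair to $|D|=1$ instead goes from $m_2+1$ to $m_2$. You also still need to know that a suitable neighbour is feasible.

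\textbf{How to close both points.} Put $x=v_{1,-1}$, $y=v_{-1,1}$, $C=n\tau_0-S^{\obs}$ and $e=C+x-y$. Every other pair has $|D|\in\{0,1\}$. Moving such a pair from $|D|=0$ to $|D|=1$ shifts $\sum_i u_i$ by $-1$, $+1$, or $\pm1$ according as $W^{\obs}=1,-1,0$. A short count then gives the following.
\begin{itemize}
\item $(x,y)$ is feasible iff $0\le x\le N_1$, $0\le y\le N_{-1}$ and $-(N_{-1}-y)-N_0\le e\le (N_1-x)+N_0$. In that case the largest $m_1$ equals $m-x-y-f(e)$, with $f(e)=\max\{|e|-N_0,\,(e-N_0)\bmod 2\}$.
\item For fixed $m_2=x+y$, the feasible $x$ form an integer interval whose endpoints are nondecreasing in $m_2$ and move by at most one per step.
\item Hence, if $m_2$ and $m_2+1$ are both feasible and $x^*$ is optimal at $m_2$, one of $x^*$, $x^*+1$ is feasible at $m_2+1$. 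This changes $e$ by $\pm1$ and $f$ by at most $1$, which is your drop bound.
\item The parity of $f$ flips with $m_2$. Combined with the endpoint comparison, $\min f$ changes by exactly $\pm1$, so $m_1^{\max}$ is nonincreasing along consecutive feasible $m_2$.
\item The feasible $m_2$ can skip a value only immediately below the point where both slanted constraints are tight. That point forces $N_1-x=N_{-1}-y=N_0=0$, hence $m_1=0$.
\end{itemize}
So the $m_2$-projection of $\mathcal{M}_1$ is an integer interval, and your rerun of Lemma~\ref{lem:g_amax} then goes through.
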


Lemma \ref{lem:pmax_tau_0_matching} implies that at most two randomization tests are needed to compute $p_{\max}(\tau_0)$ for a given $\tau_0$. Together with Lemma \ref{lem:f_monotone_matching}, we prove Theorem \ref{thm:upper} for the matched-pairs design.

\subsubsection{Finding the Lexicographically Largest Feasible $(m_2, m_1)$ via Integer Linear Programming}

A feasible pair-level potential-outcome count vector $\vv = \{v_{w,u}\} \in \mathbb{Z}_{\ge 0}^{\{-1,0,1\}^2 } $ needs to satisfy the following constraints:
\begin{enumerate}
\item Compatible with the observed data\footnote{Observe that this implies $\sum_{w,u} v_{w,u} = m$.}:
\begin{align}
\sum_{u\in\{-1,0,1\}} v_{w,u}=N_w, \quad \text{where } N_w \defeq N_{w,1} + N_{w,0}, \quad \text{for } w\in\{-1,0,1\}.
\label{eqn:constraint_matching_observe}
\end{align}

\item Compatible with the ATE $\tau_0$:
\begin{align}
\sum_{w, u \in \{-1,0,1\}} u \, v_{w,u} = n\tau_0 - S^{\obs}.
\label{eqn:constraint_matching_ate}
\end{align}

\end{enumerate}

That is, 
\[
\mathcal{M} = \left\{(m_2, m_1): \text{ exists  $\vv$ satisfying Equations \eqref{eqn:constraint_matching_observe} and  \eqref{eqn:constraint_matching_ate}} \right\}.
\]
Let 
\[
\mathcal{V} \defeq \{\vv \in \mathbb{Z}_{\ge 0}^{\{-1,0,1\}^2 }: \vv \text{ satisfies Equations \eqref{eqn:constraint_matching_observe} and  \eqref{eqn:constraint_matching_ate}} \}
\]
denote the feasible set of $\vv$.

We find the lexicographically largest $(m_2, m_1)$ in $\mathcal{M}$ by solving two integer linear programs (ILPs).
Different from the balanced Bernoulli design, here, 
we do not have a simple characterization of $\mathcal{M}$ and a simple search of the lexicographically largest $(m_2, m_1)$, since the outcome table $\vv$ involves 9 parameters.

For the set $\mathcal{M}_0$, where $m_1 = 0$, we find the maximum feasible $m_2$ by solving
\[
\max \ v_{1,-1} + v_{-1,1}, \quad 
\text{subject to } \vv \in \mathcal{V} \text{ and } 
v_{1,0}+v_{0,1}+v_{0,-1}+v_{-1,0} = 0.
\]
For the set $\mathcal{M}_1$, where $m_1 \ge 1$, we first find the maximum feasible $m_2$ by solving
\[
\max \ v_{1,-1} + v_{-1,1}, \quad 
\text{subject to } \vv \in \mathcal{V} \text{ and } 
v_{1,0}+v_{0,1}+v_{0,-1}+v_{-1,0} \ge 1.
\]
Let $v_{1,-1}^*, v_{-1,1}^*$ be the optimal solution. Then, we find the maximum feasible $m_1 = v_{1,0}+v_{0,1}+v_{0,-1}+v_{-1,0}$ under the optimal $v_{1,-1}^*, v_{-1,1}^*$ by solving
\[
\max \ v_{1,0}+v_{0,1}+v_{0,-1}+v_{-1,0}, \quad 
\text{subject to } \vv \in \mathcal{V} \text{ and } 
v_{1,-1} = v_{1,-1}^*, \ v_{-1,1} = v_{-1,1}^*.
\]

Observe that all the above three ILPs have $9$ variables and a constant number of linear constraints; in addition, all the coefficients in front of the variables are in $\{-1,0,1\}$.
\citet{jansen2023integer} show that such ILP can be solved in $O(1)$ arithmetic operations\footnote{More generally, \citet{jansen2023integer} consider the integer linear program $\max\{\cc^\top \xx : \AA \xx = \bb,\ \xx \in \mathbb{Z}_{\ge 0}^n\}$ where $\AA \in \mathbb{Z}^{m \times n}$, $\bb \in \mathbb{Z}^m$, and $\cc \in \mathbb{Z}^n$. They develop an algorithm using $O(\sqrt{m}\,\Delta)^{2m} + O(nm)$ arithmetic operations, where $\Delta$ is the maximum absolute entry of $\AA$. In our setting, $m$, $n$, and $\Delta$ are constants.}.

Finally, we can use the same FFT algorithm \textsc{ComputeProbabilityViaFFT}($m_2, m_1; \Delta/2$) to compute the $p(m_2, m_1; \Delta)$ in $O(n \log n)$ time.

Therefore, we prove Theorem \ref{thm:time} for the matched-pairs design.

\section{Simulations}
\label{sec:simulation}

We first compare our Algorithm~\ref{alg:bernoulli_balance} to an exact brute-force procedure for small values of $n$ in Table \ref{tab:small_instances}. 
The brute-force method enumerates all potential-outcome count vectors compatible with the observed data $\NN$, and therefore requires $\Theta(n^4)$ randomization tests, making it computationally infeasible beyond small sample sizes.
The number of randomization tests needed by our Algorithm \ref{alg:bernoulli_balance} is much smaller than that needed by the brute-force.

\begin{table}[!ht]
\centering
\caption{$95\%$ confidence intervals under the balanced Bernoulli design. 
The first column reports the observed count vector $\NN = (N_{11}, N_{01}, N_{10}, N_{00})$. 
The second column gives the confidence interval $[\min \mathcal I_\alpha, \max \mathcal I_\alpha]$, scaled by $n$. 
The third and fourth columns report the number of randomization tests required by a brute-force inversion and by our fast construction in Algorithm~\ref{alg:bernoulli_balance}, respectively.}
\label{tab:small_instances}
\begin{tabular}{|c|c|c|c|}
\hline
$\NN$ & Confidence interval & Brute-Force &  Algorithm \ref{alg:bernoulli_balance} \\
\hline
$(2,6,8,0)$ & $[-14, 0]$ & 189 & 7 \\
$(6,4,4,6)$ & $[-7, 12]$ & 1225 & 8 \\
$(8,4,5,7)$ & $[-7, 15]$ & 2160 & 8 \\
$(10,13,15,12)$ & $[-27, 11]$ & 32032 & 9 \\
\hline
\end{tabular}
\end{table}

We then compare Algorithm~\ref{alg:bernoulli_balance} with Wald confidence intervals constructed using an asymptotic normal approximation in Table~\ref{tab:large_instances}. Following \citet{aronow2023fast}, we study two settings: (1) $50\%$ of units have potential outcomes $(1,1)$ and the remaining $50\%$ have $(0,0)$; and (2) $8\%$ of units have potential outcomes $(1,1)$ while the remaining $92\%$ have $(0,0)$. In both settings, the ATE equals $0$.

Algorithm \ref{alg:bernoulli_balance} inverts finite-sample randomization tests and therefore attains coverage uniformly for all potential outcome settings. In the sparse-outcome setting (the second setting), the observed data are less informative about the missing potential outcomes and the randomization distribution is highly discrete and non-normal, so Wald's asymptotic normal approximation can undercover. The resulting exact inversion is correspondingly more conservative, leading to wider intervals (and, in our simulations, occasional over-coverage).
The conservativeness comes from the Bernoulli design, not from our fast construction algorithm.

\begin{table}[!ht]
\centering
\caption{$95\%$ confidence intervals under the balanced Bernoulli design for two settings. Coverage (Cov.) and median interval width (Med. Width) are reported for Algorithm~\ref{alg:bernoulli_balance} and Wald intervals. For each $n$, we run $10{,}000$ independent simulations.}
\label{tab:large_instances}
\begin{tabular}{|c|cc|cc|cc|cc|}
\hline
& \multicolumn{4}{c|}{\textbf{First Setting}} & \multicolumn{4}{c|}{\textbf{Second Setting}} \\
\hline
& \multicolumn{2}{c|}{Exact} & \multicolumn{2}{c|}{Wald}
& \multicolumn{2}{c|}{Exact} & \multicolumn{2}{c|}{Wald} \\
\hline
$n$
& Cov. & Med. Width & Cov. & Med. Width
& Cov. & Med. Width & Cov. & Med. Width \\
\hline
50  & 0.98 & 0.76 & 0.96 & 0.79 & 1.00 & 0.54 & 0.88 & 0.31 \\
100 & 0.96 & 0.57 & 0.93 & 0.56 & 1.00 & 0.41 & 0.92 & 0.22 \\
200 & 0.98 & 0.42 & 0.95 & 0.39 & 1.00 & 0.30 & 0.92 & 0.16 \\
1000 & 0.97 & 0.19 & 0.95 & 0.18 & 1.00 & 0.14 & 0.94 & 0.07 \\
\hline
\end{tabular}
\end{table}

Finally, we compare our fast construction for the matched-pairs design with the corresponding Wald confidence intervals in Table \ref{tab:match}.
We study two settings: (1) $50\%$ of unit pairs have pair-level potential-outcome differences (as defined in Equation \eqref{eqn:match_pair_outcome}) $(1,1)$ and the remaining $50\%$ have $(1,0)$; and (2) $92\%$ of unit pairs have pair-level potential-outcome differences $(1,1)$ and the remaining $8\%$ have $(1,0)$ (in $92\%$ pairs, the two pair-level potential outcomes are equal). In the first setting, the ATE is $0.75$; in the second setting, it is $0.96$.
Our exact construction guarantees coverage for all $n$, and the interval widths are comparable to those of Wald intervals.

\begin{table}[!ht]
\centering
\caption{$95\%$ confidence intervals under the matched-pairs design for two settings. For each $n$, we run $10{,}000$ independent simulations. In the first setting with $n=50$, since the number of pairs is odd, we set $13$ pairs with pair-level potential-outcome differences $(1,1)$ and $12$ pairs with $(1,0)$.}
\label{tab:match}
\begin{tabular}{|c|cc|cc|cc|cc|}
\hline
& \multicolumn{4}{c|}{\textbf{First Setting}} & \multicolumn{4}{c|}{\textbf{Second Setting}} \\
\hline
& \multicolumn{2}{c|}{Exact} & \multicolumn{2}{c|}{Wald}
& \multicolumn{2}{c|}{Exact} & \multicolumn{2}{c|}{Wald} \\
\hline
$n$
& Cov. & Med. Width & Cov. & Med. Width
& Cov. & Med. Width & Cov. & Med. Width \\
\hline
50  & 1.00 & 0.38 & 0.98 & 0.34 & 1.00 & 0.22 & 0.76 & 0.16 \\
100 & 0.99 & 0.25 & 0.98 & 0.25 & 1.00 & 0.13 & 0.94 & 0.11 \\
200 & 0.99 & 0.18 & 0.98 & 0.17 & 0.99 & 0.08 & 0.96 & 0.08 \\
1000 & 0.99 & 0.08 & 0.98 & 0.08 & 0.99 & 0.03 & 0.98 & 0.03 \\
\hline
\end{tabular}
\end{table}

\paragraph{Runtime.} All simulations were conducted on a 2020 iMac equipped with a 3.8 GHz 8-core Intel Core i7 processor and 8 GB of RAM.
For 10,000 sequential simulations with $n=1000$, the total runtime is under 1.5 minutes for the balanced Bernoulli design and under 2 hours for the matched-pairs design. The increased computational cost in the matched-pairs design is because each simulation needs to solve two integer linear programs with nine variables.

\bibliographystyle{plainnat}
\bibliography{ref}

\appendix

\section{Missing Proofs}
\label{sec:appendix_proofs}

In this section, we include all the missing details and proofs from the main text.

\subsection{Missing Proofs in Section \ref{sec:binary_search}}

We first prove the monotonicity of \(p_{\max}\) under the balanced Bernoulli design.
The proof relies on the symmetry of the design between the two treatment groups, that is, each unit is assigned to either group with equal probability.

\begin{proof}[Proof of Lemma \ref{lem:f_monotone}]

Given the observed data $(\YY^{\obs}, \ZZ^{\obs})$, let $\mathcal{T} = \mathcal{T}(\YY^{\obs}, \ZZ^{\obs})$, and let $T^{\obs}$ be the test statistic (the HT estimator) under the observed data.

We first consider $\tau_0 \in \mathcal{T} \cap [-1, T^{\obs}]$. The argument for $\tau_0 \in \mathcal{T} \cap [T^{\obs}, 1]$ is similar.

Consider a fixed $\tau_0$ and $\tau_1 = \tau_0 + 1/n$ both are in $\mathcal{T} \cap [-1, T^{\obs}]$.
Both $\mathcal{Y}(\tau_0, \YY^{\obs}, \ZZ^{\obs})$ and $\mathcal{Y}(\tau_1, \YY^{\obs}, \ZZ^{\obs})$ are nonempty.
Suppose $p_{\max}(\tau_0)$ is attained by some $\tilde{\YY}_0 \in \mathcal{Y}(\tau_0, \YY^{\obs}, \ZZ^{\obs})$. 
Consider the following $\tilde{\YY}_1$: if there is an imputed potential outcome under \emph{treatment} is $0$, we turn it to $1$; otherwise, there must be an imputed potential outcome under \emph{control} is $1$ (since $\tau_1 = \tau_0 + 1/n \in \mathcal{T}$), and we turn it to $0$. 
Then, $\tilde{\YY}_1 \in \mathcal{Y}(\tau_1, \YY^{\obs}, \ZZ^{\obs})$.

We will show that 
\begin{align}
\Pr_{\tilde{\ZZ}} \left( \abs{T (\tilde{\YY}_1, \tilde{\ZZ}) - \tau_1} \ge \abs{T^{\obs} - \tau_1} \right) \ge \Pr_{\tilde{\ZZ}} \left( \abs{T (\tilde{\YY}_0, \tilde{\ZZ}) - \tau_0} \ge \abs{T^{\obs} - \tau_0} \right) ,
\label{eqn:bernoulli_goal}
\end{align}
which implies $p_{\max}(\tau_1) \ge p_{\max}(\tau_0)$.
For $\tau_0, \tau_1 \in [-1, T^{\obs}]$, we have that for $j = 0,1$,
\[
\abs{T (\tilde{\YY}_j, \tilde{\ZZ}) - \tau_j} \ge \abs{T^{\obs} - \tau_j}
\iff T (\tilde{\YY}_j, \tilde{\ZZ}) \ge T^{\obs}
\text{ or } 
T (\tilde{\YY}_j, \tilde{\ZZ}) \le 2 \tau_j - T^{\obs}
\]

Consider an arbitrary assignment $\ZZ' \in \{0,1\}^n$.
Since
\[
\tilde{Y}_{0,i}(1) \le \tilde{Y}_{1,i}(1), \ 
\tilde{Y}_{0,i}(0) \ge \tilde{Y}_{1,i}(0), \quad \text{for all } i
\]
we have 
\[
T(\tilde{\YY}_0, \ZZ') \le T(\tilde{\YY}_1, \ZZ').
\]
Thus,
\[
\Pr_{\tilde{\ZZ}} \left( T(\tilde{\YY}_1, \tilde{\ZZ}) \ge T^{\obs} \right) \ge \Pr_{\tilde{\ZZ}} \left( T(\tilde{\YY}_0, \tilde{\ZZ}) \ge T^{\obs} \right).
\]
In addition, 
\begin{align*}
& T(\tilde{\YY}_1, \ZZ') \le T(\tilde{\YY}_0, \ZZ') + \frac{2}{n}
= T(\tilde{\YY}_0, \ZZ') + 2(\tau_1 - \tau_0)  \\
\implies & T(\tilde{\YY}_1, \ZZ') - 2\tau_1 \le T(\tilde{\YY}_0, \ZZ') - 2\tau_0.
\end{align*}
Thus, 
\[
\Pr_{\tilde{\ZZ}} \left( T(\tilde{\YY}_1, \tilde{\ZZ}) \le 2\tau_1 - T^{\obs} \right) \ge \Pr_{\tilde{\ZZ}} \left( T(\tilde{\YY}_0, \tilde{\ZZ}) \le 2\tau_0 - T^{\obs} \right).
\]
Therefore, Equation \eqref{eqn:bernoulli_goal} holds and $p_{\max}(\tau_1) \ge p_{\max}(\tau_0)$.

Next, we consider $\tau_0 > T^{\obs}$. 
Consider a fixed $\tau_0$ and $\tau_1 = \tau_0 - 1/n$ both are in $\mathcal{T} \cap [T^{\obs}, 1]$.
Again, suppose $p_{\max}(\tau_0)$ is attained at some $\tilde{\YY}_0$.
Consider $\tilde{\YY}_1$ that turns one imputed potential under treatment from $1$ to $0$ or turns one imputed potential under control from $0$ to $1$ whichever is possible. Then, $\tilde{\YY}_1 \in \mathcal{\YY}(\tau_1, \YY^{\obs}, \ZZ^{\obs})$.

Consider an arbitrary $\ZZ' \in \{0,1\}^n$.
\begin{align*}
T(\tilde{\YY}_0, \ZZ') - \frac{2}{n} \le T(\tilde{\YY}_1, \ZZ') \le  T(\tilde{\YY}_0, \ZZ').
\end{align*}
Thus, 
\begin{align*}
& \Pr_{\tilde{\ZZ}} \left( T(\tilde{\YY}_1, \tilde{\ZZ}) \le T^{\obs} \right) \ge \Pr_{\tilde{\ZZ}} \left( T(\tilde{\YY}_0, \tilde{\ZZ}) \le T^{\obs} \right), \\
& \Pr_{\tilde{\ZZ}} \left( T(\tilde{\YY}_1, \tilde{\ZZ}) \ge 2\tau_1 - T^{\obs} \right)
\ge \Pr_{\tilde{\ZZ}} \left( T(\tilde{\YY}_0, \tilde{\ZZ}) \ge 2\tau_0- T^{\obs} \right).
\end{align*}
Therefore, $p_{\max}(\tau_1) \ge p_{\max}(\tau_0)$.

Combining the above two cases, we prove the statement.
\end{proof}

Next, we present a detailed binary search algorithm that proves Corollary \ref{coro:binary_search}.

\begin{algorithm}[t]
	\SetAlgoNoLine
	\KwIn{Observed data $\YY^{\obs}, \ZZ^{\obs}$ and confidence level $\alpha \in (0,1)$}
	\KwOut{A $(1-\alpha)$ confidence set $\mathcal{I}_{\alpha}$}
    Write $\mathcal T (\YY^{\obs}, \ZZ^{\obs})$ in Equation \eqref{eqn:tau_set} as $\mathcal{T} (\YY^{\obs}, \ZZ^{\obs}) =\{\tau_{(1)}<\tau_{(2)}<\cdots<\tau_{(K)}\}$.

    Let $T^{\obs} \gets T(\YY^{\obs}, \ZZ^{\obs})$.

\tcp{Corner cases}

\If{
$T^{\obs} < \tau_{(1)}$
}{
Find $k_{\max} \gets \max\{1 \le k \le K: p_{\max}(\tau_{(k)}) \ge \alpha \}$ by binary search over $\{1,\dots,K\}$ using function $f$ in Algorithm \ref{alg:bernoulli_balance}, and return $\mathcal{I}_{\alpha} = \{\tau_{(1)}, \ldots, \tau_{(k_{\max})}\}$.
}
\If{
$T^{\obs} > \tau_{(K)}$
}{
Find $k_{\min} \gets \min\{ 1 \le k \le K: p_{\max}(\tau_{(k)}) \ge \alpha \}$ by binary search over $\{1,\dots,K\}$ using function $f$ in Algorithm \ref{alg:bernoulli_balance}, and return $\mathcal{I}_{\alpha} =  \{\tau_{(k_{\min})}, \ldots, \tau_{(K)}\}$.
}

\tcp{General case}

Let 
\[
k_- \gets \max \{1 \le k \le K: \tau_{(k)} \le T^{\obs} \}, \quad 
k_+ \gets \min \{1 \le k \le K: \tau_{(k)} \ge T^{\obs} \}.
\]

Find $k_{\min} \gets \min\{1 \le k \le k_-: p_{\max}(\tau_{(k)}) \ge \alpha \}$ by binary search over $\{1,\dots, k_-\}$ using function $f$ in Algorithm \ref{alg:bernoulli_balance}. If none, let $k_{\min} \gets k_+$.

Find $k_{\max} \gets \max\{k_+ \le k \le K: p_{\max}(\tau_{(k)}) \ge \alpha \}$ by binary search over $\{k_+,\dots,K\}$ using function $f$ in Algorithm \ref{alg:bernoulli_balance}. If none, let $k_{\max} \gets k_-$.

\Return $\mathcal{I}_{\alpha} = \{\tau_{(k_{\min})}, \ldots, \tau_{(k_{\max})} \}$.

	\caption{Binary Search}
	\label{alg:binary_search}
\end{algorithm}

We prove the lower bound in Lemma \ref{lem:lower}.

\begin{proof}[Proof of Lemma \ref{lem:lower}]

The feasible grid of ATE values $\mathcal{T} = \mathcal{T}(\YY^{\obs}, \ZZ^{\obs})$ in Equation \eqref{eqn:tau_set} has $K = n+1$ values. Write $\mathcal{T} = \{\tau_{(1)} < \ldots < \tau_{(K)} \}$
Define the acceptance indicator
\[
b(k) \defeq \mathbbm{1}\{p_{\max}(\tau^{(k)})\ge \alpha\},\qquad k\in\{1,\dots,K\}.
\]
Then,
\[
\mathcal{I}_\alpha = \{\tau_{(k)}: \ b(k)=1\}.
\]

By Lemma 2.1, $p_{\max}(\tau)$ is monotone nondecreasing on $\mathcal{T} \cap[-1,T^{\mathrm{obs}}]$
and monotone nonincreasing on $\mathcal{T} \cap[T^{\mathrm{obs}},1]$.
Hence the sequence $(b(1),\dots,b(K))$ has the form
\[
0,\dots,0,\;1,\dots,1,\;0,\dots,0,
\]
i.e., the set $\{k: b(k)=1\}$ is a contiguous block of indices.

To prove a lower bound, it suffices to consider a special (easier) subproblem in which
$b$ is \emph{monotone}. For example, in the boundary case
$T^{\mathrm{obs}}<\tau^{(1)}$ (or symmetrically $T^{\mathrm{obs}}>\tau^{(K)}$), the monotonicity
of $p_{\max}$ implies that $b(1),\dots,b(K)$ is monotone.  Thus, there exists an index
$s\in\{1,\dots,K+1\}$ such that
\[
b(k)=
\begin{cases}
0, & k<s,\\
1, & k\ge s,
\end{cases}.
\]
with the convention $s=K+1$ corresponding to $b(k)\equiv 0$ and hence $\mathcal I_\alpha=\emptyset$.
Equivalently, in this subproblem, the confidence set is exactly the suffix
$\mathcal I_\alpha=\{\tau_{(s)}, \dots,\tau_{(K)}\}$, and there are $K+1$ possible answers
(one for each $s$).

Now consider any deterministic algorithm $\mathcal{A}$ that always identifies $\mathcal I_\alpha$
exactly. Suppose that $\mathcal{A}$ has access to an oracle which, on query $k$, returns the single bit
$b(k)$. Such an oracle needs at least one permutation test, and thus a lower bound on the number of oracle queries needed by $\mathcal{A}$ is also a lower bound on the number of permutation tests needed.

Suppose $\mathcal{A}$ makes at most $q$ (adaptive) oracle queries in the worst case.
Because $\mathcal{A}$ is deterministic, for each $s\in\{1,\dots,K+1\}$ the sequence of oracle
responses is a binary string of length $q$.  Hence $\mathcal{A}$ can induce at most $2^q$
distinct transcripts and therefore can distinguish at most $2^q$ different values of $s$.
But $\mathcal{A}$ must be correct for all $K+1$ possibilities.  Therefore, we must have
$2^q\ge K+1$, i.e.,
\[
q \ge \log_2(K+1) = \Omega(\log n).
\]
This proves that any deterministic algorithm that always identifies $\mathcal I_\alpha$
exactly needs $\Omega(\log n)$ permutation tests in the worst case.
\end{proof}

Finally, we prove the $n^{-1/2}$ convergence rate in Proposition \ref{prop:bern-hoeffding-length}. We prove it for general Bernoulli designs with marginal treatment assignment probability $p \in (0,1)$. This convergence rate only depends on how we construct the set $\mathcal{I}_\alpha$ in Equation \eqref{eqn:ci} and does not depend on how we implement the construction.

\begin{proof}[Proof of Proposition \ref{prop:bern-hoeffding-length}]
Fix any $\tau_0$ and any completion $\tilde \YY\in\mathcal{Y}(\tau_0, \YY^{\obs}, \ZZ^{\obs})$.
Let $\tilde{\ZZ} \sim \Ber(p)^n$ be a random assignment drawn under the Bernoulli design with marginal probability $p$.
Define
\[
X_i
\defeq \frac{\tilde Z_i\,\tilde Y_i(1)}{p}-\frac{(1-\tilde Z_i)\,\tilde Y_i(0)}{1-p}, \quad \text{for each } i \in [n],
\]
and $S \defeq \sum_{i=1}^n X_i$.
Random variables $X_1, \ldots, X_n$ are independent and satisfy
\[
X_i \in \{ \frac{1}{p}, \  - \frac{1}{1-p} \}, \quad \text{for each } i
\]
and satisfy
\[
T(\tilde{\YY}, \tilde{\ZZ}) = \frac{1}{n} \sum_{i=1}^n X_i, \quad 
\mathbb{E}[T (\tilde{\YY}, \tilde{\ZZ})] = \tau_0.
\]

Therefore, Hoeffding's inequality yields for any $t>0$,
\begin{align*}
\Pr \left( \abs{T(\tilde{\YY}, \tilde{\ZZ}) - \tau_0} \ge t \right)  
& = \Pr \left( \abs{S - \mathbb{E} [S]} \ge n t \right) \\
& \le 2 \exp \left( - \frac{2n^2 t^2}{n \cdot (1/p + 1/(1-p))^2} \right) \\
& = 2 \exp \left( - 2n t^2 p^2 (1-p)^2 \right).
\end{align*}
Consider $t = \abs{T^{\obs} - \tau_0}$. If 
\[
\abs{T^{\obs} - \tau_0} > \sqrt{\frac{\log (2/\alpha)}{2n p^2 (1-p)^2}},
\]
then $\Pr \left( \abs{T(\tilde{\YY}, \tilde{\ZZ}) - \tau_0} \ge \abs{T^{\obs} - \tau_0} \right) < \alpha$ for any completion $\tilde{\YY} \in \mathcal{Y}(\tau_0, \YY^{\obs}, \ZZ^{\obs})$. That is, $p_{\max}(\tau_0) < \alpha$ and thus $\tau_0 \notin \mathcal{I}_{\alpha}$.

Therefore any $\tau_0\in \mathcal{I}_\alpha$ must satisfy
\[
\abs{T^{\obs}-\tau_0}
\le
\sqrt{\frac{\log(2/\alpha)}{2n p^2(1-p)^2}}.
\]
This implies the length of the interval $[\max \mathcal{I}_\alpha, \ \min \mathcal{I}_\alpha]$ satisfies
\[
\max \mathcal{I}_\alpha - \min \mathcal{I}_\alpha
\le
2\sqrt{\frac{\log(2/\alpha)}{2n p^2(1-p)^2}}.
\]
Taking $p = 1/2$, we complete the proof of the proposition statement.
\end{proof}

\subsection{Missing Proofs in Section \ref{sec:four_permutation} }

Section \ref{sec:four_permutation} bounds the number of randomization tests needed to compute $p_{\max}(\tau_0)$ for a given $\tau_0 \in \mathcal{T}(\YY^{\obs}, \ZZ^{\obs})$.

Given a fixed $\tau_0$, Lemma \ref{lem:objective_in_v} expresses the centered test statistic as $S_{a,b}$, a sum of $a$ independent Rademacher and $b$ independent half-Rademacher random variables.
Lemma \ref{lem:unimodal} shows that $S_{a,b}$ is unimodal.

\begin{proof}[Proof of Lemma \ref{lem:unimodal}]
Recall from the proof of Lemma \ref{lem:objective_in_v},
$x_1 \sim \Bin(v_{11}, \frac{1}{2})$ and $x_2 + x_3 \sim \Bin(v_{10} + v_{01}, \frac{1}{2})$.
A binomial distribution is strongly unimodal, in the sense that its convolution with any unimodal discrete distribution remains unimodal \citep{keilson1971some}.
Thus, $2x_1 - \mathbb{E}[S_{a,b}]$ is unimodal (an affine transform of $x_1$), and $x_2+x_3$ is strongly unimodal.
It follows that the convolution of these two distributions, that is, the distribution of $S_{a,b} - \mathbb{E}[S_{a,b}]$, is unimodal.
Moreover, $S_{a,b} - \mathbb{E}[S_{a,b}]$ is symmetric about $0$, so the mode is at $0$ when $\Lambda = \mathbb{Z}$, and the modes are at $\pm  1/2$ when $\Lambda = \mathbb{Z} + 1/2$.
\end{proof}

Unimodality of $S_{a,b}$ implies $p(a,b; \Delta)$ is monotone in $a,b$ along feasible directions.

\begin{proof}[Proof of the second case of Lemma \ref{lem:ab_monotone}]
We prove the second case $(a',b') = (a, b+2)$.
In this case, the random variable $S_{a,b'}$ has the same distribution as $S_{a,b} + \zeta'_1 + \zeta'_2$,
where $\zeta'_1$ and $\zeta'_2$ are independent half-Rademacher random variables and are independent of $S_{a,b}$. 
If $\Delta + 1 \in \Lambda_{a,b}$, then
\begin{align*}
p(a,b') = 2 Q_{a,b'}(\Delta)
= 2\left( \tfrac{1}{4} Q_{a,b}(\Delta-1) + \tfrac{1}{2} Q_{a,b}(\Delta) + \tfrac{1}{4} Q_{a,b}(\Delta+1) \right) 
\ge 2 Q_{a,b}(\Delta) 
= p(a,b),
\end{align*}
where the inequality follows from Lemma \ref{lem:unimodal}.
If $\Delta + 1 \notin \Lambda_{a,b}$, then
\begin{align*}
p(a,b')
= \tfrac{1}{2} Q_{a,b}(\Delta-1) +  Q_{a,b}(\Delta)
= \tfrac{1}{2} \pi_{a,b, \Delta - 1} + \tfrac{3}{2} \pi_{a,b, \Delta}
\ge 2 \pi_{a,b, \Delta} = p(a,b).
\end{align*}    
\end{proof}

\begin{proof}[Proof of Lemma \ref{lem:g_amax}]
Let $\vv$ be a feasible outcome with $a(\vv) = a$ and $b(\vv) = b^{\max}(a)$. Consider another feasible outcome $\vv'$ with $a(\vv') = a+1$. Then, 
either $b^{\max}(a+1) = b^{\max}(a)$ or $b^{\max}(a+1) = b^{\max}(a) - 2 \ge 1$.
Let $(a',b')$ be the parameter pair for $\vv'$.

In the first case, by Lemma \ref{lem:ab_monotone}, $p(a',b'; \Delta) \ge p(a,b;\Delta)$ and thus $g(a+1) \ge g(a)$.

In the second case, write 
\[
S_{a,b} = R + H, \quad \text{where } H = \zeta_1 + \zeta_2,
\]
and let $\xi$ be a Rademacher random variable that is independent of $S_{a,b}$.
Then, $S_{a',b'}$ has the same distribution as $R + \xi$.

Suppose $\Delta + 1$ is in the support of $R$. 
Note that 
\begin{align*}
\Pr(S_{a,b} \ge \Delta) & = \Pr(R + \zeta_1 + \zeta_2 \ge \Delta) \\
& = \frac{1}{4} \Pr(R \ge \Delta - 1 ) + \frac{1}{2} \Pr(R \ge \Delta) + \frac{1}{4} \Pr(R \ge \Delta+1),
\end{align*}
and 
\begin{align*}
\Pr(S_{a',b'} \ge \Delta) & = \Pr(R + \xi \ge \Delta) \\
& = \frac{1}{2} \Pr(R \ge \Delta - 1) + \frac{1}{2} \Pr(R \ge \Delta + 1).
\end{align*}
Taking the difference, 
\begin{align*}
\Pr(S_{a',b'} \ge \Delta) -  \Pr(S_{a,b} \ge \Delta)
& = \frac{1}{4} \Pr(R \ge \Delta-1) - \frac{1}{2} \Pr(R \ge \Delta) + \frac{1}{4} \Pr(R \ge \Delta+1)
\ge 0.
\tag{by Lemma \ref{lem:unimodal}}
\end{align*}
Thus, in this case, $g(x)$ is nondecreasing.

Now, suppose $\Delta+1$ is not in the support of $R$, that is, $\Delta$ is the maximum number in the support of $R$. Thus,
\begin{align*}
\Pr(S_{a',b'} \ge \Delta) -  \Pr(S_{a,b} \ge \Delta) & = \frac{1}{4} \Pr(R \ge \Delta-1) - \frac{1}{2} \Pr(R \ge \Delta)    \\
& = \frac{1}{4} \Pr(R = \Delta - 1) - \frac{1}{4} \Pr(R = \Delta) \\
& \ge 0.
\tag{by Lemma \ref{lem:unimodal}}
\end{align*}
Thus, in this case, $g(x)$ is nondecreasing.
\end{proof}

Lemma \ref{lem:monotone_pab_b0} considers the special case $b = 0$.

\begin{proof}[Proof of Lemma \ref{lem:monotone_pab_b0}]
Since $b = 0, \tau_0 = 0$, $a$ and $\Delta = \frac{n}{2} T^{\obs}$ have the same parity.
If $\Delta = 1$, then $p(a,0;\Delta) = \Pr(S_{a,0} \ge \Delta) = 1$ for any feasible $a$.
In the rest of the proof, we assume $\Delta \ge 2$.

Consider two feasible outcome count vectors $\vv$ and $\vv'$ satisfying $b(\vv) = b(\vv') = 0$ and $a(\vv') = a(\vv) + 2$. If $\Delta + 2$ is in the support of $S_{a,0}$, then
\begin{align*}
p(a',0; \Delta) = 2Q_{a',0}(\Delta) 
= \frac{1}{2} Q_{a,0} (\Delta - 2) + Q_{\vv}(\Delta) + \frac{1}{2} Q_{a,0} (\Delta + 2)
\ge 2 Q_{a,0} (\Delta)
= p(a,0; \Delta).
\end{align*}
If $\Delta + 2$ is not in the support of $S_{a,0}$, then 
\[
p(a',0;\Delta) = \frac{1}{2} Q_{a,0} (\Delta - 2) + Q_{a,0}(\Delta)
= \frac{1}{2} \pi_{a,0, \Delta - 2} + \frac{3}{2} \pi_{a,0, \Delta} \ge 2\pi_{a,b, \Delta} = p(a,0;\Delta).
\]
\end{proof}

\subsection{Missing Proofs and Details in Section \ref{sec:ab_max}}

We show how to derive Equation \eqref{eqn:constraints_in_ab}, the constraints on $(a,b)$.

\paragraph{Deriving the first equation.}
By the first four type-(1) constraints on $\vv$ in Equation \eqref{eqn:constraints_in_v_observe_data}, we obtain
\begin{align*}
v_{11}+v_{10}\ge N_{11}
&\iff a+\frac{b+n\tau_0}{2}\ge N_{11}
\iff 2a+b\ge 2N_{11}-n\tau_0,\\
v_{11}+v_{01}\ge N_{10}
&\iff a+\frac{b-n\tau_0}{2}\ge N_{10}
\iff 2a+b\ge 2N_{10}+n\tau_0, \\
v_{01}+v_{00}\ge N_{01}
&\iff \frac{b-n\tau_0}{2}+(n-a-b)\ge N_{01}
\iff 2a+b\le 2n-n\tau_0-2N_{01},\\
v_{10}+v_{00}\ge N_{00}
&\iff \frac{b+n\tau_0}{2}+(n-a-b)\ge N_{00}
\iff 2a+b\le 2n+n\tau_0-2N_{00}.
\end{align*}
Hence
\[
\max\{2N_{11}-n\tau_0,\;2N_{10}+n\tau_0\}  \le 2a+b \le
\min\{2n-n\tau_0-2N_{01},\;2n+n\tau_0-2N_{00}\}.
\]

\paragraph{Deriving the rest equations.}
From the nonnegativity and the last four type-(1) constraints on $\vv$ in Equation \eqref{eqn:constraints_in_v_observe_data}, we have 
\[
0\le v_{11}=a \le N_{11}+N_{10}.
\]
In addition, we have 
\[
v_{00}=n-a-b\ge 0 \iff a+b\le n,
\]
and
\[
v_{00}=n-a-b\le N_{01}+N_{00}\iff a+b\ge n-(N_{01}+N_{00})=N_{11}+N_{10}.
\]
So
\[
N_{11}+N_{10}\le a+b\le n.
\]

By lower bound and parity for $b$, we have
\[
b\ge |n\tau_0|,\qquad b\equiv n\tau_0\pmod 2.
\]

Using the upper bounds on $v_{10}$ and $v_{01}$, we have the following upper bounds in $b$:
\begin{align*}
v_{10}\le N_{11}+N_{00}
&\iff \frac{b+n\tau_0}{2}\le N_{11}+N_{00}
\iff b\le 2N_{11}+2N_{00}-n\tau_0,\\
v_{01}\le N_{01}+N_{10}
&\iff \frac{b-n\tau_0}{2}\le N_{01}+N_{10}
\iff b\le 2N_{01}+2N_{10}+n\tau_0.
\end{align*}
Therefore
\[
b \le \min\{ 2N_{11}+2N_{00}-n\tau_0,\;2N_{01}+2N_{10}+n\tau_0 \}.
\]

\paragraph{Equivalence (both directions).}
We have shown the constraints in $\vv$ implies the constraints in $(a,b)$ in Equation \eqref{eqn:constraints_in_ab}
by substitution. Conversely, if $(a,b)$ satisfies Equation \eqref{eqn:constraints_in_ab}, then all inequalities above follow by reversing each algebraic step, and integrality/nonnegativity follow from $b\equiv n\tau_0 \pmod 2$ and $b\ge |n\tau_0|$.
Hence, the constraints are equivalent.

Finally, we prove the runtime of the FFT algorithm for computing $p(a,b; \Delta)$ given $a,b,\Delta$.

\begin{proof}[Proof of Lemma \ref{lem:Pa_fft}]
We use the formula of $p(a,b;\Delta) = \Pr(|S_{a,b}| \ge \Delta)$ in Lemma \ref{lem:objective_in_v}.
Let
\[
a = v_{11}, \quad b = v_{10} + v_{01, \quad}
M = 2a+b.
\]
Let $S = S_{a,b}$.
The support of $S$ is $\{-\frac{M}{2}, -\frac{M}{2}+1, \ldots, \frac{M}{2} \}$.
Define
\[
q_r \defeq \Pr\!\left(S=\frac{r}{2}\right), \qquad r\in\{-M,-M+1,\dots,M\}.
\]
We will compute all $q_r$ using FFT.

For $\theta\in\mathbb{R}$ define the characteristic function (where $i= \sqrt{-1}$)
\begin{align}
\psi(\theta) \defeq \mathbb{E}\!\left[e^{i\theta S}\right] = (\cos\theta)^{a}\,(\cos(\theta/2))^{b}.
\label{eq:psi_closed_form}    
\end{align}
Choose $N$ to be the smallest power of two with $N\ge 2M+1$.
Define
\[
\theta_j \defeq \frac{4\pi j}{N},
\qquad
F_j \defeq \psi(\theta_j), \qquad
j=0,1,\dots,N-1
\]
Using \eqref{eq:psi_closed_form}, all $F_j$ can be evaluated in $O(N \log n)$ arithmetic operations (where $\log n$ for the power).

Because $S$ takes values $r/2$ with probabilities $q_r$,
\[
F_j
= \sum_{r=-M}^{M} q_r\, e^{i\theta_j (r/2)}
= \sum_{r=-M}^{M} q_r\, e^{i (4\pi j/N)(r/2)}
= \sum_{r=-M}^{M} q_r\, e^{i 2\pi j r/N}.
\]
Now define a length-$N$ sequence $a=(a_0,\dots,a_{N-1})$ by
\[
a_k \defeq
\begin{cases}
q_{k-M}, & 0\le k\le 2M,\\
0, & 2M<k\le N-1.
\end{cases}
\]
Then, for each $j$,
\[
\sum_{k=0}^{N-1} a_k\, e^{i2\pi j k/N}
= \sum_{k=0}^{2M} q_{k-M}\, e^{i2\pi j k/N}
= e^{i2\pi j M/N}\sum_{r=-M}^{M} q_r\, e^{i2\pi j r/N}
= e^{i2\pi j M/N} F_j \defeq A_j.
\]
Thus, $\{A_j\}_{j=0}^{N-1}$ is exactly the length-$N$ discrete Fourier transform (DFT) of $a$.
By the inverse DFT formula,
\[
a_k = \frac{1}{N}\sum_{j=0}^{N-1} A_j\, e^{-i2\pi j k/N},
\qquad k=0,1,\dots,N-1.
\]
Hence all $a_k$ can be computed from $\{A_j\}$ using one inverse FFT in $O(N\log N)$ operations (see Chapter 30 of \citet{thomas2009introduction}),
and then
\[
q_r = a_{r+M}, \qquad r=-M,-M+1,\dots,M.
\]

Finally,
\[
p(\vv) = \Pr(|S|\ge |\Delta|)
= \sum_{r=-M}^{M} \mathbbm{1}\!\left\{\left|\frac{r}{2}\right|\ge |\Delta|\right\} q_r,
\]
which is an $O(M)=O(N)$ summation once $\{q_r\}$ is known.

We have $M=2a+b \le 2n$, so the chosen power of two satisfies $N=\Theta(M)=O(n)$.
The total cost is
\[
O(N \log n) \ \text{(forming $F_j$ and $A_j$)} \;+\; O(N\log N)\ \text{(one inverse FFT)} \;+\; O(N)\ \text{(tail sum)}
= O(n\log n).
\]
This proves the lemma.
\end{proof}

\section{Fast Exact Confidence Set Construction for General Bernoulli Design}
\label{sec:general_bernoulli}

In this section, we extend our fast confidence interval construction for the Bernoulli design from the balanced setting to the more general setting in which each unit is independently assigned to treatment with probability $p \in (0,1)$ and to control with probability $1-p$, that is, $\Pr(Z_i = 1) = p$ and $\Pr(Z_i = 0) = 1-p$ for all $i$.

When $p \neq 1/2$, the monotonicity of $p_{\max}(\tau_0)$ established in Lemma~\ref{lem:f_monotone} no longer holds, so a binary search over $\tau_0 \in \mathcal{T}$ is no longer applicable. Moreover, for a fixed $\tau_0 \in \mathcal{T}$, the monotonicity of the $p$-value with respect to $(a,b)$ in Lemma~\ref{lem:ab_monotone} also fails, which necessitates enumerating all feasible $(a,b)$ pairs to compute $p_{\max}(\tau_0)$.

\citet{aronow2023fast} encounter the same difficulties when extending their fast confidence interval construction for complete randomization from the balanced setting to the more general setting. To speed up the brute-force search over the space of feasible parameters, they propose reusing their Monte Carlo permutation test results to compute $p$-values for ``nearby'' potential outcome configurations, rather than recalculating them from scratch.

We build on the above idea to speed up our enumeration.
However, we will use the FFT to compute the exact probabilities, instead of the Monte Carlo method.

For a general assignment probability $p$, under a given outcome count vector $\vv$, let $\Tilde{\YY}$ be a outcome table consistent with $\vv$ and $\Tilde{\ZZ}$ be a random assignment drawn from the matched-pairs design. We can write the centered test statistic as
\[
n \, (T(\Tilde{\YY}, \Tilde{\ZZ}) - \tau_0) = 
\sum_{r=1}^{v_{11}} \eta_r + \sum_{s=1}^{v_{10}} \eps_s + \sum_{t=1}^{v_{01}} \delta_t \defeq S_{\vv},
\]
where $\eta_r \in \{\frac{1}{p}, - \frac{1}{1-p}\}$, $\eps_s \in \{\frac{1-p}{p}, -1 \}$, and $\delta_t \in \{1, -\frac{p}{1-p}\}$ are independent mean-zero random variables. 
Then, 
\[
p(\vv, \Delta) = \Pr(|S_{\vv}| \ge \Delta), \quad \Delta = n \, | T^{\obs} - \tau_0 |.
\]
We can evaluate $p(\vv, \Delta)$ and compute the probability mass function (PMF) of $S_{\vv}$ using FFT.

Now, we consider a feasible local $\vv'$:
\[
(v'_{11}, v'_{10}, v'_{01}, v'_{00}) = 
(v_{11}, v_{10} + 1, v_{01} - 1, v_{00}).
\]
The PMF of $S_{\vv'}$ can be computed as follows.
Note that 
\[
S_{\vv'} \overset{d}{=} S_{\vv} - \delta + \eps,
\]
where $\delta$ is an independent copy of $\delta_1$ and $\eps$ is an independent copy of $\eps_1$.
Then,
\begin{align}
\begin{split}
\Pr(S_{\vv'} = t) & = p^2 \, \Pr(S_{\vv} = t + 1 - \tfrac{1-p}{p})    
+ p(1-p) \, \Pr(S_{\vv} = t+2 \text{ or } -\tfrac{2p}{1-p}) \\
& \quad
+ (1-p)^2 \, \Pr(S_{\vv} = t - \tfrac{p}{1-p} + 1).
\end{split}
\label{eqn:pmf_update}
\end{align}
Thus, given the PMF of $S_{\vv}$, we can compute the PMF of $S_{\vv'}$ in time $O(n)$.

We describe the algorithm for general Bernoulli designs in Algorithm \ref{alg:ci_general}.
We enumerate all $\tau_0 \in \mathcal{T}$. 
Under a fixed $\tau_0$, we enumerate all feasible choices of $v_{11}$, compute the PMF for the smallest feasible $v_{10}$, and then compute the PMF for the rest choices of $v_{10}$ using Equation \eqref{eqn:pmf_update}.

The total number of $p$-values computed via FFT is $O(n^2)$. The total runtime is $O(n^3 \log n)$.
This completes the proof of Theorem \ref{thm:general}.

\begin{algorithm}[t]

	\SetAlgoNoLine
	\KwIn{Observed data $\NN$ and confidence level $\alpha \in (0,1)$.}
	\KwOut{A $(1-\alpha)$ confidence set $\mathcal{I}_{\alpha}$.}

Let $\mathcal{I}_{\alpha} \gets \emptyset$.

Let $\mathcal{T}$ be the feasible set of the ATE in Equation \eqref{eqn:tau_set}.
Let $n \leftarrow N_{11} + N_{10} + N_{01} + N_{00}$.
Compute the observed test statistic $T^{\obs}$.

\For{each $\tau_0 \in \mathcal{T}$}{
    Let $\Delta \gets n\abs{T^{\obs} - \tau_0}$.

Let $P \gets \emptyset$ 
\tcp*{the set of $p$-values under $\tau_0$}

\For{each feasible $v_{11}$ under $(\tau_0, \NN)$}{
    Let $v_{10}$ be the smallest feasible number under $(\tau_0, \NN, v_{11})$. 
    
    Let $v_{01} \gets v_{10} - n\tau_0$ and $v_{00} \gets n - v_{11} - v_{10} - v_{01}$.

    Compute $p(\vv, \Delta)$ and the PMF of $S_{\vv}$ via FFT.

    Let $P \gets P \cup \{p(\vv, \Delta)\}$.

    Let $v_{10} \gets v_{10} + 1$ and update $v_{01}, v_{00}$ accordingly.

    \While{$\vv$ is feasible under $(\tau_0, \NN)$}{
    
    Compute the PMF of $S_{\vv}$ from the PMF of the previous $\vv$ using Equation \eqref{eqn:pmf_update}.
    Compute $p(\vv, \Delta)$ using the PMF.

    Let $P \gets P \cup \{p(\vv, \Delta)\}$.

    Let $v_{10} \gets v_{10} + 1$ and update $v_{01}, v_{00}$ accordingly.
    }
}

If $\max P \ge \alpha$, then $\mathcal{I}_\alpha \gets \mathcal{I}_\alpha \cup \{\tau_0\}$.

}

\Return $\mathcal{I}_{\alpha}$.

	\caption{Confidence Set for General Bernoulli}
	\label{alg:ci_general}  
\end{algorithm}

\end{document}